\newcounter{prop}
\renewenvironment{proof}
{{\bf Proof:}}{\hspace*{\fill}$\Box$\par\vspace{2mm}}
\newcommand{\rephrase}[3]{\noindent\textbf{#1 #2}.~\emph{#3}}
\newcommand{\T}{\mbox{$\mathcal T$ }}
\renewcommand{\P}{\mbox{$\mathcal P$ }}
\newcommand{\remove}[1]{}
\newcommand{\eat}[1] {{}}
\begin{document}

%%% Title of contribution
\title{On a Tree and a Path with no Geometric Simultaneous Embedding}

\author{P. Angelini $^\dag$, M. Geyer $^\ddag$,  M. Kaufmann $^\ddag$ and D. Neuwirth $^\ddag$}

%%%% List of authors for the TOC
\tocauthor{Patrizio Angelini, Markus Geyer, Michael Kaufmann, and Daniel Neuwirth}

\institute{%
		$^\dag$Dipartimento di Informatica e Automazione -- Universit\`a  Roma Tre, Italy\\
        \email{angelini@dia.uniroma3.it}\\
    $^\ddag$Wilhelm-Schickard-Institut f\"{u}r Informatik -- Universit\"{a}t T\"{u}bingen, Germany\\
    \email{geyer/mk/neuwirth@informatik.uni-tuebingen.de}
}

\maketitle              % typeset the title of the contribution

%%%%%%%%%%%%%%%%%%%%%%%%%%%%%%%%%%%%%%%%%%%%%%%%%%%%%%%%%%%%%%%%%%%

%%%%%%%%%%%%%%%%%%%%%%%%%%%%%%%%%%%%%%%%%%%%%%%%%%%%%%%%%%%%%%%%%%%

\begin{abstract}
Two graphs $G_1=(V,E_1)$ and $G_2=(V,E_2)$ admit a \emph{geometric simultaneous embedding} if there exists a set of points $P$ and a bijection $M: P \rightarrow V$ that induce planar straight-line embeddings both for $G_1$ and for $G_2$. While it is known that two caterpillars always admit a geometric simultaneous embedding and that two trees not always admit one, the question about a tree and a path is still open and is often regarded as the most prominent open problem in this area. We answer this question in the negative by providing a counterexample. Additionally, since the counterexample uses disjoint edge sets for the two graphs, we also negatively answer another open question, that is, whether it is possible to simultaneously embed two edge-disjoint trees. As a final result, we study the same problem when some constraints on the tree are imposed. Namely, we show that a tree of depth $2$ and a path always admit a geometric simultaneous embedding. In fact, such a strong constraint is not so far from closing the gap with the instances not admitting any solution, as the tree used in our counterexample has depth $4$.
\end{abstract}

%%%%%%%%%%%%%%%%%%%%%%%%%%%%%%%%%%%%%%%%%%%%%%%%%%%%%%%%%%%%%%%%%%%
%%%%%%%%%%%%%%%%%%%%%%%%%%%%%%%%%%%%%%%%%%%%%%%%%%%%%%%%%%%%%%%%%%%

\section{Introduction}

Embedding planar graphs is a well-established field in graph theory and algorithms with a great variety of applications. Keystones in this field are the works of Thomassen~\cite{t-eg-94}, of Tutte~\cite{t-hdg-63}, and of Pach and Wenger~\cite{pw-epgfvl-01}, dealing with planar and convex representations of graphs in the plane.

Since recently, motivated by the need of contemporarily represent several different relationships among the same set of elements, a major focus in the research lies on \emph{simultaneous graph embedding}. In this setting, given a set of graphs with the same vertex-set, the goal is to find a set of points in the plane and a mapping between these points and the vertices of the graphs such that placing each vertex on the point it is mapped to yields a planar embedding for each of the graphs, if they are displayed separately. Problems of this kind frequently arise when dealing with the visualization of evolving networks and with the visualization of huge and complex relationships, as in the case of the graph of the Web.

Among the many variants of this problem, the most important and natural one is the \emph{geometric simultaneous embedding}. Given two graphs $G_1=(V,E')$ and $G_2=(V,E'')$, the task is to find a set of points $P$ and a bijection $M: P \rightarrow V$ that induce planar straight-line embeddings for both $G_1$ and $G_2$.

In the seminal paper on this topic~\cite{J-bcdeeiklm-spge-07}, Brass \emph{et al.} proved that geometric simultaneous embeddings of pairs of paths, pairs of cycles, and pairs of caterpillars always exist. A \emph{caterpillar} is a tree such that deleting all its leaves yields a path. On the other hand, many negative results have been shown. Brass \emph{et al.}~\cite{J-bcdeeiklm-spge-07} presented a pair of outerplanar graphs not admitting any simultaneous embedding and provided negative results for three paths, as well. Erten and Kobourov~\cite{ek-sepgfb-04} found a planar graph and a path not allowing any simultaneous embedding. Geyer \emph{et al.}~\cite{gkv-ttsids-09} proved that there exist two trees that do not admit any geometric simultaneous embedding. However, the two trees used in the counterexample have common edges, and so the problem is still open for edge-disjoint trees.

The most important open problem in this area is the question whether a tree and a path always admit a geometric simultaneous embedding or not. In this paper we answer this question in the negative.

Many variants of the problem, where some constraints are relaxed, have been studied in the literature. If the edges do not need to be straight-line segments, a famous result of Pach and Wenger~\cite{pw-epgfvl-01} shows that any number of planar graphs admit a simultaneous embedding, since it states that any planar graph can be planarly embedded on any given set of points in the plane. However, the same result does not hold if the edges that are shared by two graphs have to be represented by the same Jordan curve. In this setting the problem is called {\it simultaneous embedding with fixed edges}~\cite{f-egsfe-06,gjpss-sgefe-06,fjks-crppgasefe-08}.

The research on this problem opened a new exciting field of problems and techniques, like ULP trees and graphs \cite{efk-culpt-06,fk-culpg-07,fk-mlnpt-07}, colored simultaneous embedding~\cite{g-csge-07}, near-simultaneous embedding \cite{fkk-csnse-07}, and matched drawings \cite{gdkls-mdpg-07}, deeply
related to the general fundamental question of point-set embeddability.

In this paper we study the geometric simultaneous embedding problem of a tree and a path. We answer the question in the negative by providing a counterexample, that is, a tree and a path not admitting any geometric simultaneous embedding. Moreover, since the tree and the path used in our counterexample do not share any edge, we also negatively answer the question on two edge-disjoint trees.

The main idea behind our counterexample is to use the path to enforce a part of the tree to be in a certain configuration which cannot be drawn planar. Namely, we make use of level nonplanar trees~\cite{efk-culpt-06,fk-mlnpt-07}, that is, trees not admitting any planar embedding if their vertices have to be placed inside certain regions according to a particular leveling. The tree of the counterexample contains many copies of such trees, while the path is used to create the regions. To prove that at least one copy has to be in the particular leveling that determines a crossing, we need a quite huge number of vertices. However, such a huge number is often needed just to ensure the existence of particular structures playing a role in our proof. A much smaller counterexample could likely be constructed with the same techniques, but we decided to prefer the simplicity of the argumentations rather than the search for the minimum size.

The paper is organized as follows. In Sect.~\ref{se:preliminaries} we give preliminary definitions and we introduce the concept of level nonplanar trees. In Sect.~\ref{se:tree-path} we describe the tree \T and the path \P used in the counterexample. In Sect.~\ref{se:overview} we give an overview of the proof that \T and \P do not admit any geometric simultaneous embedding, while in Sect.~\ref{se:proofs} we give the details of such a proof. In Sect.~\ref{se:depth2} we present an algorithm for the simultaneous embedding of a tree of depth $2$ and a path, and in Sect.~\ref{se:conclusions} we make some final remarks.

%%%%%%%%%%%%%%%%%%%%%%%%%%%%%%%%%%%%%%%%%%%%%%%%%%%%%%%%%%%%%%%%%%
%%%%%%%%%%%%%%%%%%%%%%%%%%%%%%%%%%%%%%%%%%%%%%%%%%%%%%%%%%%%%%%%%%

\section{Preliminaries}\label{se:preliminaries}

A (undirected) \emph{$k$-level tree} $T=(V,E,\phi)$ on $n$ vertices is
a tree $T'=(V,E)$, called the \emph{underlying tree} of $T$, together with a
leveling of its vertices given by a function $\phi: V \mapsto \{1, \ldots, k\}$, such that for every edge $(u,v) \in E$, it holds $\phi(u)\not= \phi(v)$ (See~\cite{efk-culpt-06,fk-mlnpt-07}). A drawing of $T=(V,E,\phi)$ is a
\emph{level drawing} if each vertex $v \in V$ such that $\phi(v)=i$ is
placed on a horizontal line $l_i =\{(x,i) \mid x \in {\mathbb R}\}$. A
level drawing of $T$ is \emph{planar} if no two edges intersect
except, possibly, at common end-points. A tree $T=(V,E,\phi)$ is \emph{level nonplanar} if it does not admit any planar level drawing.

We extend this concept to the one of \emph{region-level drawing} by enforcing the vertices of each level to lie inside a certain region rather than on a horizontal line. Let $l_1,\ldots,l_k$ be $k$ pairwise non-crossing straight lines and let $r_1,\ldots,r_{k+1}$ be the regions of the plane such that any straight-line segment connecting a point in $r_i$ and a point in $r_h$, with $1 \leq i < h \leq k+1$, cuts all and only the lines $l_i,l_{i+1},\ldots, l_{h-1}$, in this order. A drawing of a $k$-level tree $T=(V,E,\phi)$ is called \emph{region-level drawing} if each vertex $v \in V$ such that $\phi(v)=i$ is placed inside region $r_i$. A region-level drawing of $T$ is \emph{planar} if no two edges intersect except, possibly, at common end-points. A tree $T=(V,E,\phi)$ is \emph{region-level nonplanar} if it does not admit any planar region-level drawing.

\begin{figure}[tb]
\begin{center}
\begin{tabular}{c c c}
\mbox{\includegraphics[height=2.7cm]{./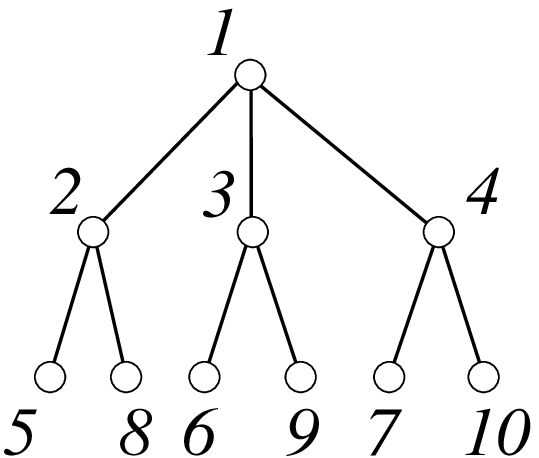}} \hspace{0.5cm} &
\mbox{\includegraphics[height=3.3cm]{./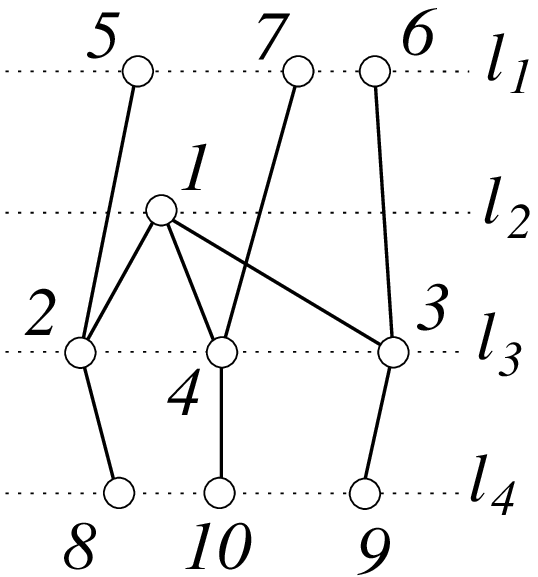}} \hspace{0.5cm} &
\mbox{\includegraphics[height=3.3cm]{./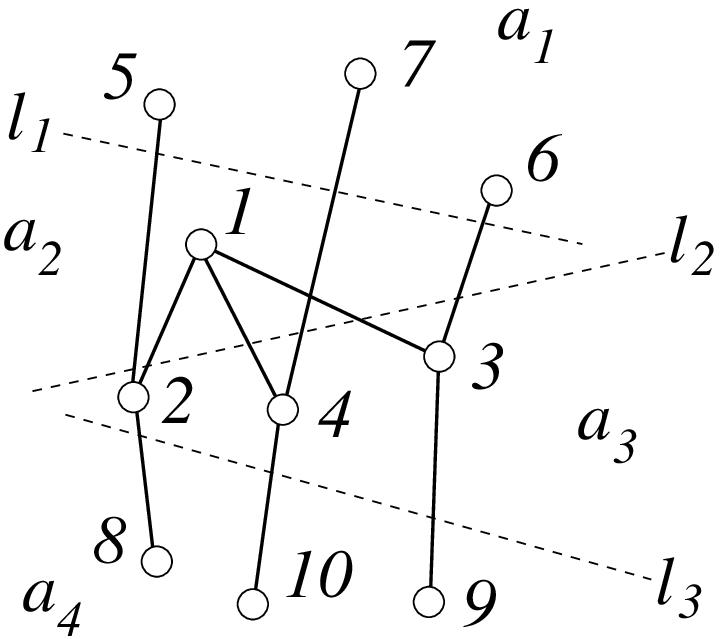}} \hspace{0.5cm} \\
(a) \hspace{0.5cm} & (b)  \hspace{0.5cm} & (c)
\end{tabular}
\caption{(a) A tree $T_u$. (b) A level nonplanar tree $T$ whose
underlying tree is $T_u$. (c) A region-level nonplanar tree $T$ whose
underlying tree is $T_u$.}\label{fig:T_level_nonplanar}
\end{center}
\end{figure}

The $4$-level tree $T$ whose underlying tree is shown in
Fig.~\ref{fig:T_level_nonplanar}(a) has been shown to be level
nonplanar~\cite{fk-mlnpt-07} (see
Fig.~\ref{fig:T_level_nonplanar}(b)). In the next lemma we show that $T$ is also region-level nonplanar (see Fig.~\ref{fig:T_level_nonplanar}(c)).

\begin{lemma}\label{lem:uap-tree}
The $4$-level tree $T$ whose underlying tree is shown in Fig.~\ref{fig:T_level_nonplanar}(a) is region-level nonplanar.
\end{lemma}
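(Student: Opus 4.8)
The plan is to establish the statement by a reduction to the already-known level nonplanarity of $T$: I will argue that \emph{every} planar region-level drawing of $T$ can be converted into a planar level drawing of $T$, so that the level nonplanarity proved in~\cite{fk-mlnpt-07} (cf.\ Fig.~\ref{fig:T_level_nonplanar}(b)) forces region-level nonplanarity. Concretely, I would assume for a contradiction that a planar region-level drawing of $T$ exists, with respect to some lines $l_1,\dots,l_k$ and regions $r_1,\dots,r_{k+1}$ as in the definition, and derive from it a planar level drawing.

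The observation that drives the whole argument is that two distinct straight lines in the plane that do not cross must be parallel; hence $l_1,\dots,l_k$ are pairwise parallel. After an affine transformation (which preserves straight-line edges, incidences, and crossings, and therefore planarity) I may assume the lines are horizontal and ordered from top to bottom, so that each region $r_i$ is a horizontal slab and the regions are stacked in the order $r_1,\dots,r_{k+1}$. Consequently every edge $(u,v)$ with $\phi(u)=i<\phi(v)=j$ is a straight segment that leaves the slab $r_i$ downward, crosses $l_i,\dots,l_{j-1}$ exactly once each in this order, and enters $r_j$; in particular every edge is strictly $y$-monotone, and all level-$i$ vertices lie in the single slab $r_i$.

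With all edges $y$-monotone and each level confined to its own slab, I would extract a combinatorial level embedding by a top-to-bottom sweep. At every generic height the segments met by a horizontal sweep line have a well-defined left-to-right order; since the drawing is planar and all edges are monotone, this order changes only when the sweep passes a vertex, and because $l_i$ separates $r_i$ from $r_{i+1}$ all level-$i$ vertices are encountered strictly before all level-$(i+1)$ vertices. Subdividing each long edge by a dummy vertex at every line it crosses reduces the picture to edges between consecutive levels, and the orders recorded by the sweep then constitute a planar level embedding of $T$; realising these orders with the vertices placed on the horizontal lines yields a planar level drawing, the desired contradiction.

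The step I expect to be the main obstacle is this last one: arguing rigorously that the per-slab orders produced by the sweep are mutually consistent and can be simultaneously realised on single level lines without creating crossings, i.e.\ that consolidating the vertices of a slab, which sit at different heights, onto one line is harmless. The delicate point is the treatment of the long edges that merely pass through a slab without ending in it, which is exactly why subdividing into a proper level graph is convenient. Should the reduction prove cumbersome, the fallback is a direct case analysis on the fixed structure of $T$: tracking region by region the forced left-to-right placement of the few branch vertices and leaves of $T_u$ and exhibiting an unavoidable crossing, mirroring the argument that proves $T$ level nonplanar but carried out with slabs in place of lines.
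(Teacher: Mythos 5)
Your normalization steps are fine: with the paper's definition, pairwise non-crossing infinite lines are indeed parallel, an affine map makes them horizontal, the regions become stacked slabs, and every edge is a $y$-monotone straight segment. The fatal problem is the pivotal claim that every planar region-level (slab) drawing can be converted into a planar level drawing. This claim is \emph{false} -- not merely delicate -- so the reduction to the level nonplanarity result of~\cite{fk-mlnpt-07} cannot work. A slab gives genuinely more freedom than a line: two vertices of the same level sit at different heights, so one of them can be nested, together with its incident edges, inside the wedge spanned by two edges of the other -- a configuration with no counterpart when both vertices lie on one line. A concrete counterexample is the $2$-level tree consisting of a level-$1$ vertex $c$ adjacent to three level-$2$ vertices $d_1,d_2,d_3$, each $d_i$ being adjacent to two further level-$1$ leaves. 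It is level nonplanar: with $d_1<d_2<d_3$ on the level-$2$ line, vertex $d_2$ lies in the region bounded by the segments $cd_1$, $cd_3$ and the level-$2$ line, while its two leaves lie on the level-$1$ line outside that region, so an edge from $d_2$ to a leaf must cross $cd_1$ or $cd_3$. Yet it admits a planar straight-line region-level drawing: place $c$ high in region $r_1$, hang $d_1,d_2,d_3$ in $r_2$ below the line, and nest $d_2$ with its two leaves (placed in $r_1$ well below $c$) inside the wedge that the edges $cd_1$ and $cd_3$ form at $c$. Hence region-level nonplanarity is strictly stronger than level nonplanarity, which is exactly why the paper does not simply cite~\cite{fk-mlnpt-07} but proves the lemma from scratch. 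Your sweep does yield consistent left-to-right orders of disjoint monotone curves at each height, but the consolidation step you yourself flagged is precisely where the argument dies: those orders are taken at varying heights, and flattening all level-$i$ vertices onto a single line is in general impossible (the nested wedge above is a certificate of impossibility, since on a line the two edge pairs would be forced to interleave and cross).

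Your one-sentence fallback -- a direct case analysis carried out with slabs in place of lines -- is in fact the paper's entire proof, not a contingency: the paper shows that the paths $v_5,v_2,v_8$ and $v_6,v_3,v_9$ bound polygons $Q_2$ and $Q_3$ inside regions $r_2$ and $r_3$, that $v_1$ must lie inside $Q_2$, hence $v_4$ inside $Q_3$, and that then $v_7$ and $v_{10}$ cannot be placed without a crossing (Fig.~\ref{fig:T_level_nonplanar}(c)). Note that ``mirroring the level-nonplanarity argument with slabs'' is not automatic, again because of the extra freedom exhibited above; the slab argument must be made with the enclosing polygons rather than with orders on lines. Since your primary route is structurally unsound and the fallback is left unexecuted, the proposal has a genuine gap.
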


\begin{proof}
Refer to Fig.~\ref{fig:T_level_nonplanar}(c). First observe that, in any possible region-level planar drawing of $T$, the paths $p_1=v_5,v_2,v_8$ and $p_2=v_6,v_3,v_9$ define a polygon $Q_2$ (a polygon $Q_3$) inside region $r_2$ (region $r_3$). We have that $v_1$ is inside $Q_2$, as otherwise one of edges $(v_1,v_2)$ or $(v_1,v_3)$ would cross one of $p_1$ or $p_2$. Hence, vertex $v_4$ has to be inside $Q_3$, as otherwise edge $(v_1,v_4)$ would cross one of $p_1$ or $p_2$. However, in this case, there is no placement for vertices $v_7$ and $v_{10}$ that avoids a crossing between one of edges $(v_4,v_7)$ or $(v_4,v_{10})$ and one of the already drawn edges.
\end{proof}

Lemma~\ref{lem:uap-tree} will be vital for proving that there exist a tree \T and a path \P not admitting any geometric simultaneous embedding. In fact, \T contains many copies of the underlying tree of $T$, while \P connects vertices of \T in such a way to create the regions satisfying the above conditions and to enforce at least one of such copies to lie inside these regions according to the leveling making it nonplanar.

\section{The Counterexample}\label{se:tree-path}

In this section we describe a tree $\T$ and a path $\P$ not admitting any geometric simultaneous embedding.

\subsection{Tree \T}
The tree \T contains a root $r$ and $q$ \remove{:= {535\choose 7}} vertices $j_1, \ldots, j_q$ at distance $1$ from $r$, called {\it joints}. Each joint $j_h$, with $h=1,\dots,q$, is connected to $x$ copies $B_1,\ldots,B_x$ of a subtree, called {\it branch}, and to $l:=(s-1)^4 \cdot 3^2 \cdot x$ vertices of degree 1, called {\it stabilizers}. See Fig.~\ref{fig:complete_tree}(a).
Each branch $B_i$ consists of a root $r_i$, $(s-1)\cdot 3$ vertices of degree $(s-1)$ adjacent to $r_i$, and $(s-2) \cdot (s-1)\cdot 3$ leaves at distance $2$ from $r_i$. Vertices belonging to a branch $B_i$ are called $B$-\emph{vertices} and denoted by $1-$, $2-$, or $3-$vertices, according to their distance from their joint. Fig.~\ref{fig:complete_tree}(b) displays $1-$, $2-$, and $3-$vertices of a branch $B_i$.

\begin{figure}[tb]
\begin{center}
\begin{tabular}{c c}
\mbox{\includegraphics[height=3.6cm]{./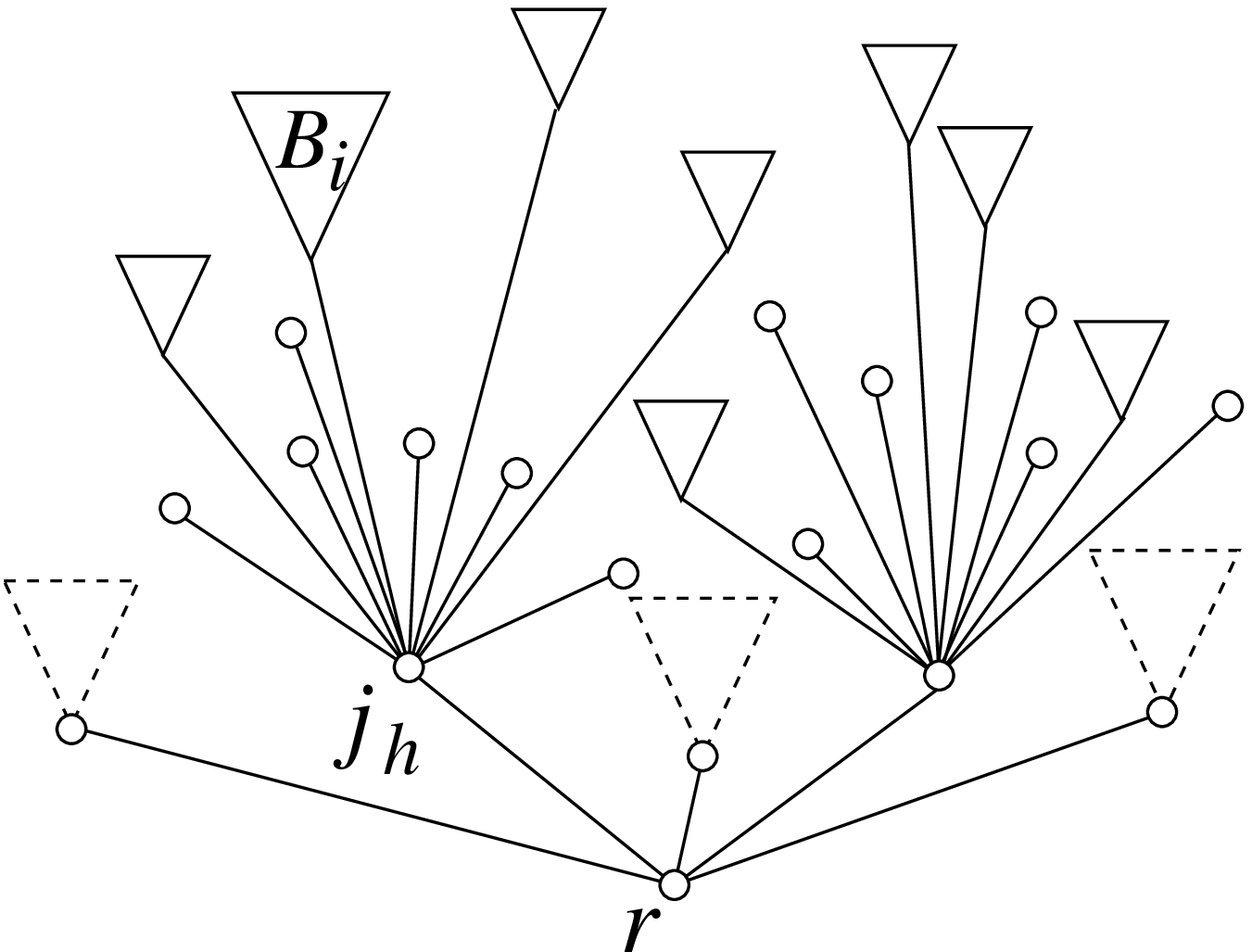}} \hspace{1.5cm} &
\mbox{\includegraphics[height=3.3cm]{./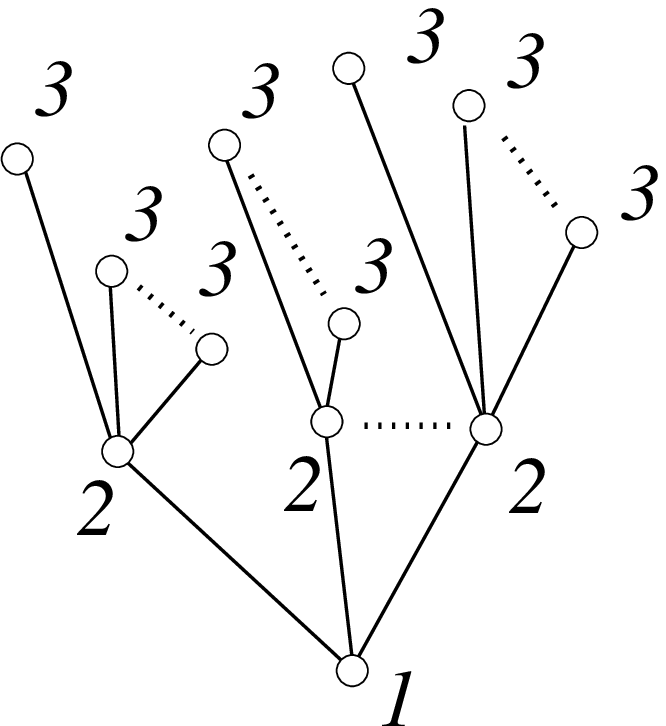}} \\
(a) \hspace{1.5cm} & (b)
\end{tabular}
\caption{(a) A schematization of the complete tree \T. Joints and stabilizers are small circles, branches are solid triangles, while complete subtrees connected to a joint are dashed triangles. (b) A schematization of a branch $B_i$.}\label{fig:complete_tree}
\end{center}
\end{figure}

Because of the huge number of vertices, in the rest of the paper, for the sake of readability, we use variables $n$, $s$, and $x$ as parameters describing the size of certain configurations. Such parameters will be given a value when the technical details of the argumentations are described. At this stage we just claim that a total number $n \leq {2^7\cdot 3\cdot x + 2 \choose 3} $ of vertices (see Lemmata~\ref{lemma:2_channels} and~\ref{lemma:k_cluster_passage}) suffices for the counterexample.

As a first observation we note that, despite the oversized number of vertices, tree \T has limited \emph{depth}, that is, every vertex is at distance from the root at most $4$. This leads to the following property.

\begin{property}\label{prop:three_bends}
Any path of tree edges starting at the root has at most $3$ bends.
\end{property}

\subsection{Path \P}
Path \P is given by describing some basic and recurring subpaths on the vertices of \T and how such subpaths are connected to each other. The idea is to partition the set of branches $B_i$ adjacent to each joint $j_h$ into subsets of $s$ branches each and to connect their vertices with path edges, according to some features of the tree structure, so defining the first building block, called {\it cell}.
Then, cells belonging to different branches are connected to each other, hence creating structures, called \emph{formations}, for which we can ensure certain properties regarding the intersection between tree and path edges. Further, different formations are connected to each other by path edges in such a way to create bigger structures, called \emph{extended formations}, which are, in their turn, connected to create a \emph{sequence of extended formations}.

All of these structures are constructed in such a way that there exists a set of cells such that any four of its cells, connected to the same joint and being part of the same formation or extended formation, contain a region-level nonplanar tree for any possible leveling, where the levels correspond to cells. Hence, proving that four of such cells lie in different regions satisfying the properties of separation described above is equivalent to proving the existence of a crossing in the tree. This allows us to consider only the bigger structures instead of dealing with single copies of the region-level nonplanar tree.

In the following we define such structures more formally and state their properties.

{\bf Cell:}
The most basic structure defined by \P is defined by looking at how it connects vertices of some branches $B_i$ connected to the same joint $j_h$ of \T. Consider a set of $s$ branches $B_i$, $i=1,\ldots,s$, connected to $j_h$. Assume the vertices of a level inside each tree to be arbitrarily ordered. For each $r=1,\ldots, s$, define a \emph{cell} $c_r(h)$ to be composed of its \emph{head}, its \emph{tail}, and a number $t$ of
\remove{ $64827 \cdot (441 + 64386) = 4202539929$} stabilizers of $j_h$.

\begin{figure}[tb]
  \centering{
\begin{tabular}{c}
\includegraphics[height=3cm]{./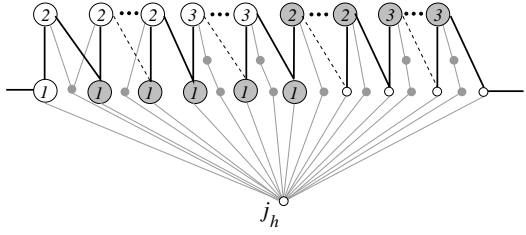}
\end{tabular}}
\caption{A cell. $B$-vertices of the head are depicted by large white circles,  $B$-vertices of the tail are large grey circles, $B$-vertices not part of the cell (showing the tree structure) are small grey circles and stabilizers are small white cirlces. Tree edges are grey and path edges are black.}\label{fig:cell}
\end{figure}

The {\it head} of $c_r(h)$ consists of the unique 1-vertex of $B_r$, the first three 2-vertices of each branch $B_k$, with $1\leq k  \leq s$ and $k \neq r$, that are not already used in a cell $c_a(h)$, with $1\leq a < r$, and, for each 2-vertex not in $c_r(h)$ and not in $B_r$, the first 3-vertices not already used in a cell $c_a(h)$, with $1\leq a<r$.

The \emph{tail} of $c_r(h)$ consists of a set of $3 \cdot s \cdot (s-1)^2$ branches $B_k$ adjacent to $j_h$. This set is partitioned into $3 \cdot (s-1)^2$ subsets of $s$ subtrees each. The vertices of each of the subsets are distributed between the cells in the same way as for the vertices of the head.

This implies that each cell contains one 1-vertex, $3 \cdot (s-1)$ 2-vertices, and $3 \cdot(s-2)\cdot (s-1)$ 3-vertices of the head, an additional $3 \cdot (s-1)^2$ 1-vertices, $3^2 \cdot (s-1)^3$ 2-vertices, and $3^2 \cdot (s-2) \cdot (s-1)^3$ 3-vertices of the tail, plus $3^2 \cdot (s-1)^4$ stabilizers.

Path \P inside cell $c_r(h)$ visits the vertices in the following order: It starts at the unique 1-vertex of the head, then it reaches all the 2-vertices of the head, then all the 3-vertices of the head, then all the 2-vertices of the tail, and finally all the 3-vertices of the tail, visiting each set in arbitrary order.
After each occurrence of a 2- or 3-vertex of the head, \P visits a 1-vertex of the tail, and after each occurrence of a 2- or a 3-vertex of the tail, it visits a stabilizer of joint $j_h$ (see Fig.~\ref{fig:cell}).

Note that, by this construction, for each joint there exists a set of cells such that each subset of size four contains region-level nonplanar trees with all possible levelings, where the levels correspond to the membership of the vertices to a cell. We now define two bigger structures describing how cells of this set are connected to cells of sets connected to other joints.

{\bf Formation:}
In the definition of a cell we described how the path traverses through one set of branches connected to the same joint. Now we describe how cells from four different sets are connected.

A {\it formation} $F(H), H=(h_1,h_2,h_3,h_4)$ consists of 592 cells, namely of 148 cells $c_r(h_i)$ from the set of cells constructed above for each $1 \leq i \leq 4$. Path \P connects these cells in the order $((h_1h_2h_3)^{37}h_4^{37})^{4}$, that is, \P repeats four times the following sequence: It connects $c_1(h_1)$ to $c_1(h_2)$, then to $c_1(h_3)$, then to $c_2(h_1)$, and so on till $c_{37}(h_3)$, from which it then connects to $c_1(h_4)$, to $c_2(h_4)$, and so on till $c_{37}(h_4)$ (see Fig.~\ref{fig:formation}(a)). A connection between two consecutive cells $c_r(a)$ and $c_r(b)$ is done with an edge connecting the end vertices of the parts \P$(c_r(a))$ and \P$(c_r(b))$ of \P restricted to the vertices of $c_r(a)$ and $c_r(b)$, respectively. Namely, the unique vertex in $c_r(a)$ having degree 1 both in \P$(c_r(a))$ and in \T is connected to the unique vertex in $c_r(b)$ having degree 1 in \P$(c_r(b))$ but not in \T. The following property holds:

\noindent
\begin{property}\label{prop:four_sep_areas_PS}
For any formation $F(H)$ and any joint $j_h$, with $h\in H$, if four cells $c_r(h) \in F(H)$ are pairwise separated by straight lines, then there exists a crossing in $\T$.
\end{property}

\begin{figure}[tb]
  \centering{
\begin{tabular}{c c}
\includegraphics[height=3.5cm]{./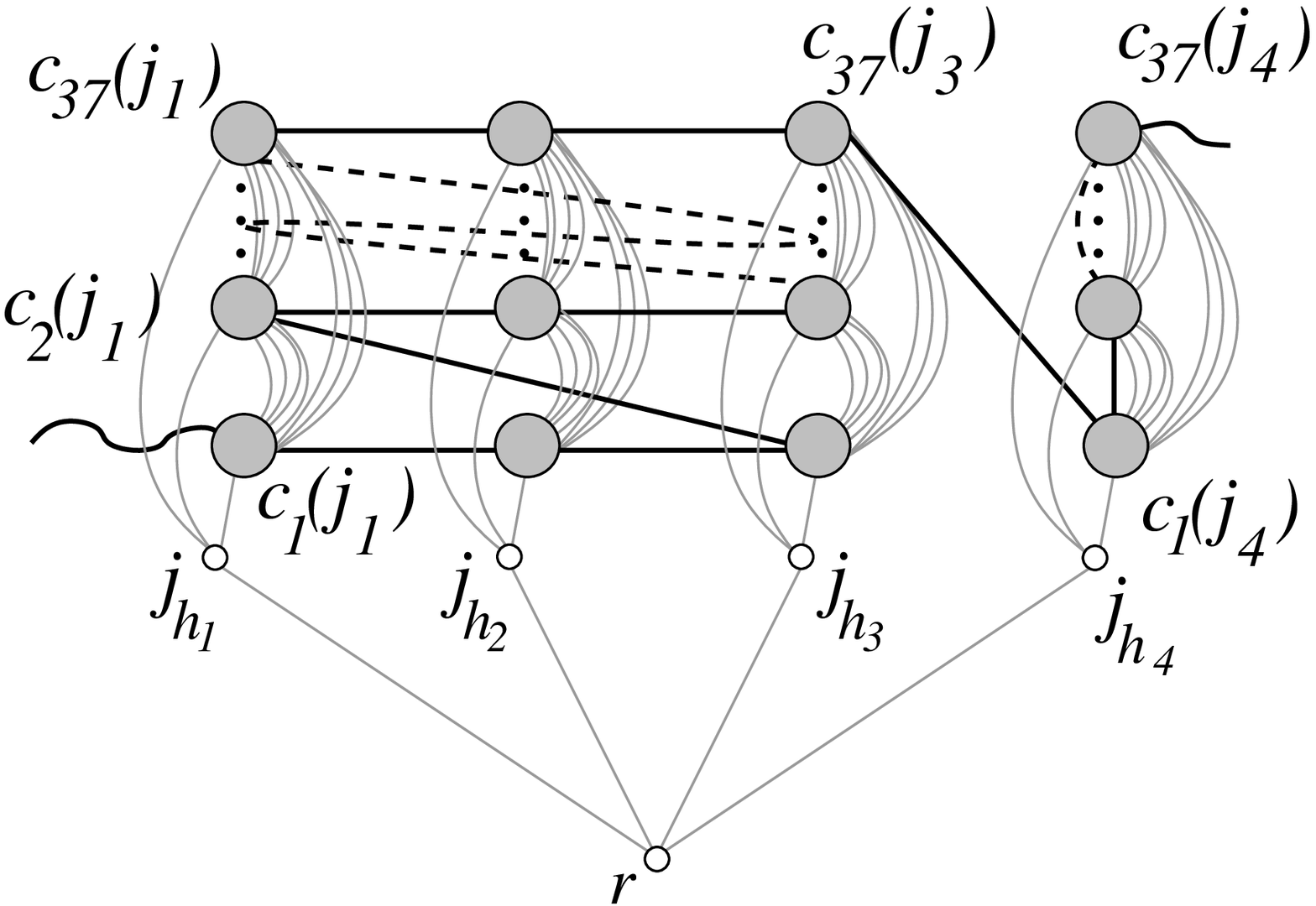} \hspace{1cm} &
\includegraphics[height=3.7cm]{./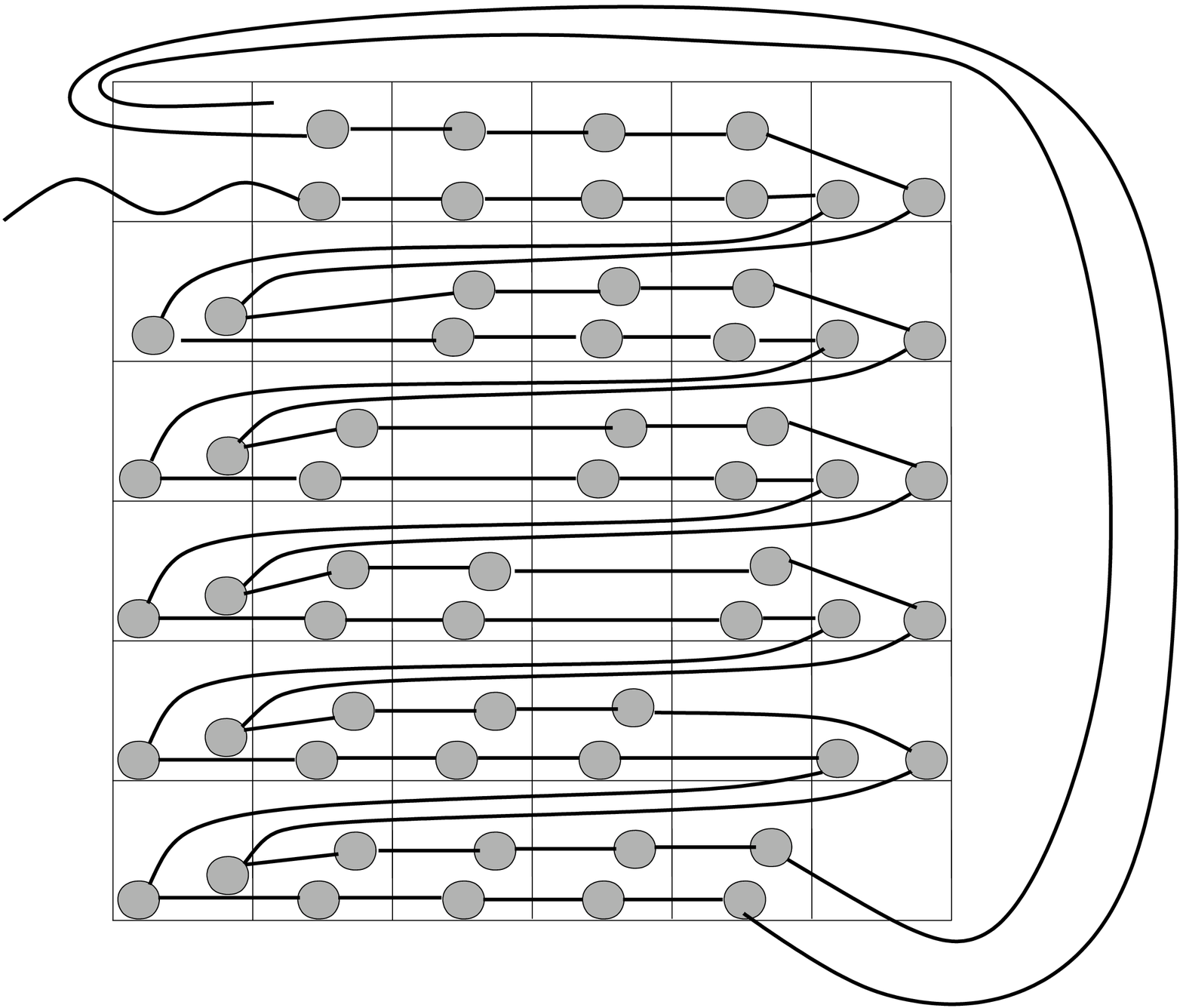} \\
(a) \hspace{1cm} & (b)
\end{tabular}}
\caption{(a) A formation. Tree edges are depicted by grey and path edges by black lines. Please note in this figure also the bundle of tree edges connecting the different cells belonging to the same branch. (b) A subsequence $(H_1,\dots,H_x)^2$ of an extended formation. Formations are inside a table to represent the $4$-tuple they belong to and to emphasize that in each repetition (a row of the table) a formation at a certain $4$-tuple is missing.}\label{fig:formation}
\end{figure}

{\bf Extended Formation:}
Formations are connected by the path in a special sequence, defined as {\it extended formation} and denoted by $EF(H)$, where $H=(H_1=(h_1,\ldots, h_4), \\H_2=(h_5,\ldots , h_8), \ldots, $ $H_x= (h_{4x-3}, \ldots h_{4x}))$ is a tuple of $4-$tuples of disjoint indexes of joints (see Fig.~\ref{fig:formation}(b)). Let $F_1(H_i),\ldots ,F_{y-\frac{y}{x}}(H_i)$ be $y-\frac{y}{x}$ formations  not belonging to any other extended formation and composed of cells of the same set $S$. These formations are connected in the order $(H_1,H_2,\ldots ,H_x)^y$, but in each of these $y$ repetitions one $H_i$ is missing. Namely, in the $k$-th repetition the path does not reach any formation at $H_m$, with $m = k \,\mod\, x$. We say that the $k$-th repetition has a {\it defect} at $m$. We call a subsequence $(H_1,H_2,\ldots,H_x)^x$ a \emph{full repetition} inside $EF(J)$. A full repetition has exactly one defect at each tuple.

Note that the size of $s$ can now be fixed as the number of formations creating repetitions inside one extended formation times the number of cells inside each of these formations, that is $s:=(y-\frac{y}{x})\cdot 37 \cdot 4$. We claim that $x \leq 7\cdot 3^2\cdot 2^{23}$ and $y \leq 7^2\cdot 3^3 \cdot 2^{26}$ is sufficient throughout the proofs. However, for readability reasons, we will keep on using variables $x$ and $y$ in the remainder of the paper.

{\bf Sequence of Extended Formation:}
Extended formations are connected by the path in a special sequence, called {\it sequence of extended formations} and denoted by $SEF(H)$, where $H=(H_1^*, \dots, H_{12}^*)$ is a $12-$tuple of tuples of $4-$tuples. For each tuple $H_i^*$, where $i=1,\ldots, 12$, consider $110$ extended formations $(EF_i(H_1^*), \dots, EF_i(H_{12}^*))$, with $i=1,\dots,110$, not already belonging to any other sequence of extended formations.
These extended formations are connected inside $SEF(H)$ in the order $(H_1^*,\dots,\\H_{12}^*)^{(120)}$. There exist two types of sequences of extended formations. Namely, in the first type there is one extended formation missing in each subsequence $(H_1^*,\dots,H_{12}^*)$, that we call \emph{defect}, as for the extended formations. In the second type, two consecutive extended formations are missing. Namely, in the $k$-th repetition the path skips the extended formations connecting at $H_m^*$ and at $H_{m+1}^*$, with $m = k \,\mathrm{mod}\, 12$. In this case, we say that the repetition has a \emph{double defect}.

Since, for each set of $48x$ joints, $(48x)!$ different disjoint sequences of extended formations exist, we just consider the sequences where the order defined by the tuple is the order of the joints around the root.

\section{Overview}\label{se:overview}

In this section we present the main argumentations leading to the final conclusion that the tree \T and the path \P described in Sect.~\ref{se:tree-path} do not admit any geometric simultaneous embedding. The main idea in this proof scheme is to use the structures given by the path to fix a part of the tree in a specific shape creating specific restrictions for the placement of the further substructures of \T and of \P attached to it.

We first give some further definitions and basic topological properties on the interaction among cells that are enforced by the preliminary arguments about region-level planar drawings and by the order in which the subtrees are connected inside one formation.

{\bf Passage:} Consider two cells $c_1(h),c_2(h)$ that can not be separated by a straight line and a cell $c'(h')$, with $h' \not= h$. We say that there exists a {\it passage} $P$ between $c_1$, $c_2$, and $c'$ if the polyline given by the path of $c'$ separates vertices of $c_1$ from vertices of $c_2$ (see Fig.~\ref{fig:passage}(a)). Since the polyline can not be straight, there is a vertex of $c'$ lying inside the convex hull of the vertices of $c_1\cup c_2$, which implies the following.

\begin{figure}[tb]
  \centering{
\begin{tabular}{c c}
\includegraphics[height=3.4cm]{./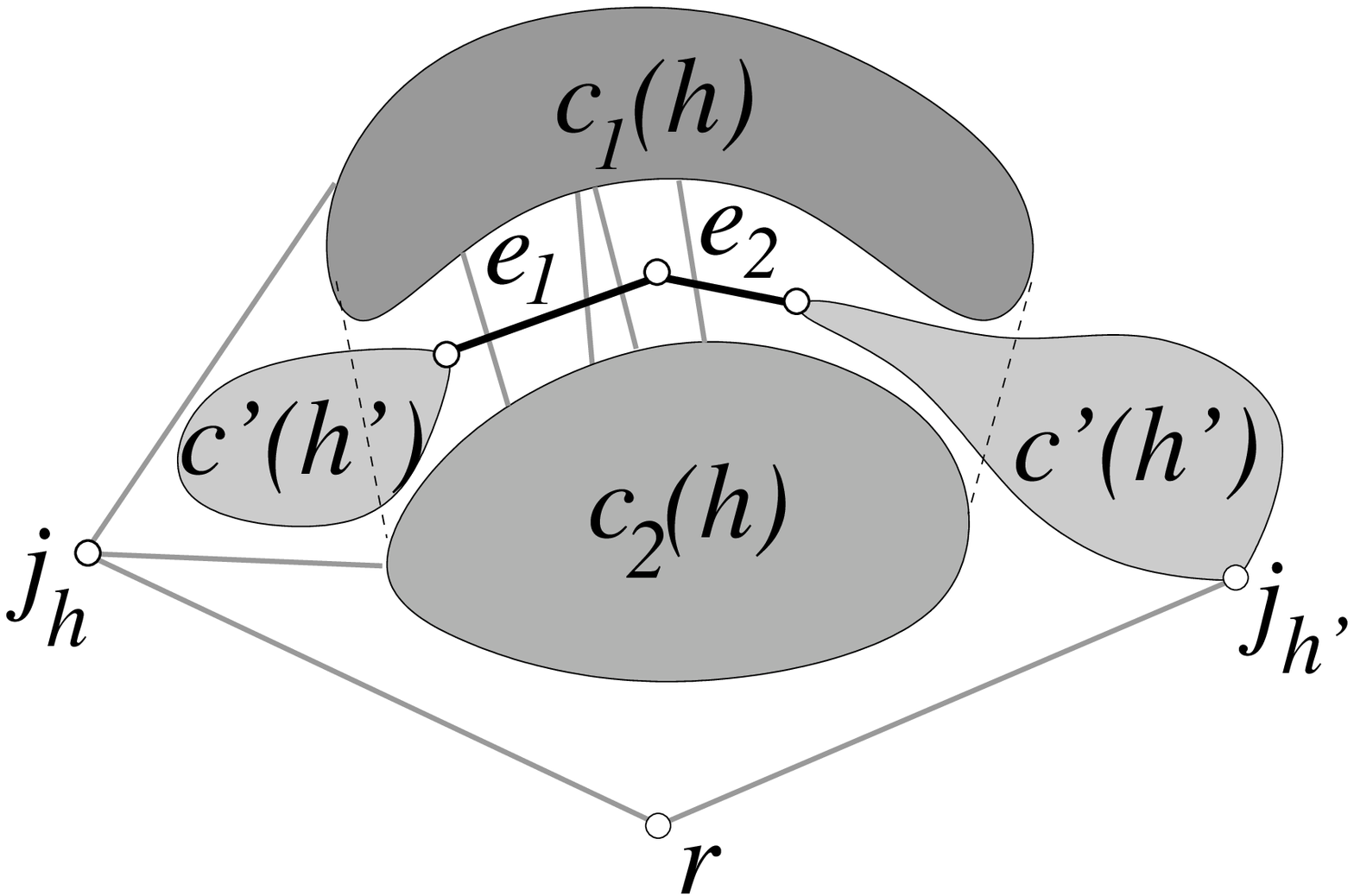} \hspace{1cm} &
\includegraphics[height=3.1cm]{./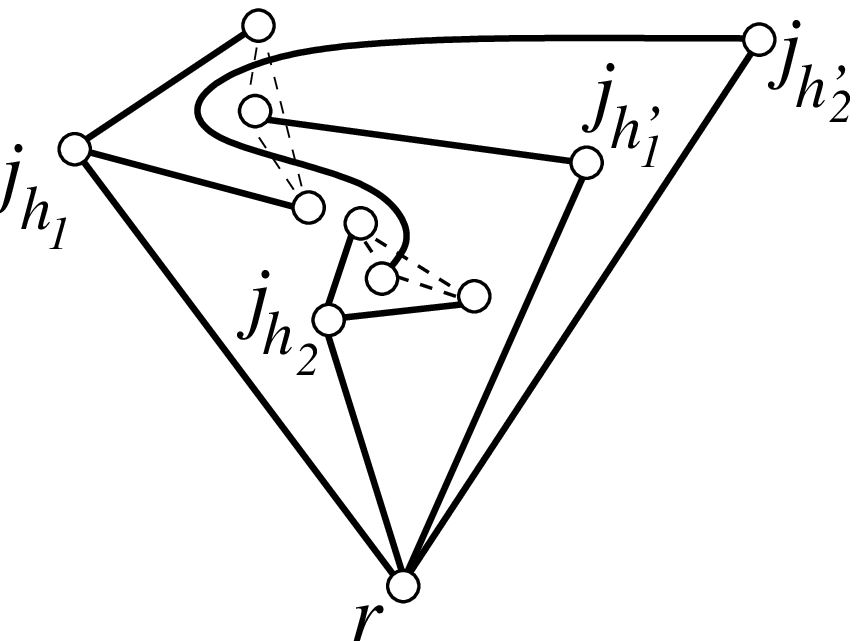} \\
(a) \hspace{1cm} & (b)
\end{tabular}}
\caption{(a) A passage between cells $c_1(h)$, $c_2(h)$, and $c'(h')$. (b) Two interconnected passages.}\label{fig:passage}
\end{figure}

\begin{property}\label{prop:passage_2_edges}
In a passage between cells $c_1$, $c_2$, and $c'$ there exist at least two path-edges $e_1,e_2$ of $c'$ such that both $e_1$ and $e_2$ are intersected by tree-edges connecting vertices of $c_1$ to vertices of $c_2$.
\end{property}

For two passages $P_1$ between $c_1(h_1)$, $c_2(h_1)$, and $c'(h_1')$, and $P_2$ between $c_{3}(h_2)$ , $c_{4}(h_2)$, and $c'(h_2')$ (w.l.o.g., we assume $h_1< h_1'$, $h_2<h_2'$, and $h_1<h_2$), we distinguish three different configurations: (i) If $h_1' < h_2$, $P_1$ and $P_2$ are {\it independent}; (ii) if $h_2' < h_1'$, $P_2$ is {\it nested} into $P_1$; and (iii) if $h_2< h_1'< h_2'$, $P_1$ and $P_2$ are {\it interconnected} (see Fig.~\ref{fig:passage}(b)).

{\bf Doors:}
Let $c_1(h),c_2(h)$, and $c'(h')$ be three cells creating a passage. Consider any triangle given by a vertex $v'$ of $c'$ inside the convex hull of $c_1\cup c_2$ and by any two vertices of $c_1\cup c_2$.
This triangle is a \emph{door} if it encloses neither any other vertex of $c_1,c_2$ nor any vertex of $c'$ that is closer than $v'$ to $j_{h'}$ in \T. A door is \emph{open} if no tree edge incident to $v'$ crosses the opposite side of the triangle, that is, the side between the vertices of $c_1$ and $c_2$ (see Fig.~\ref{fig:door}(a)), otherwise it is \emph{closed} (see Fig.~\ref{fig:door}(b)).

\begin{figure}[hb]
  \centering{
\begin{tabular}{c c}
\includegraphics[height=3.5cm]{./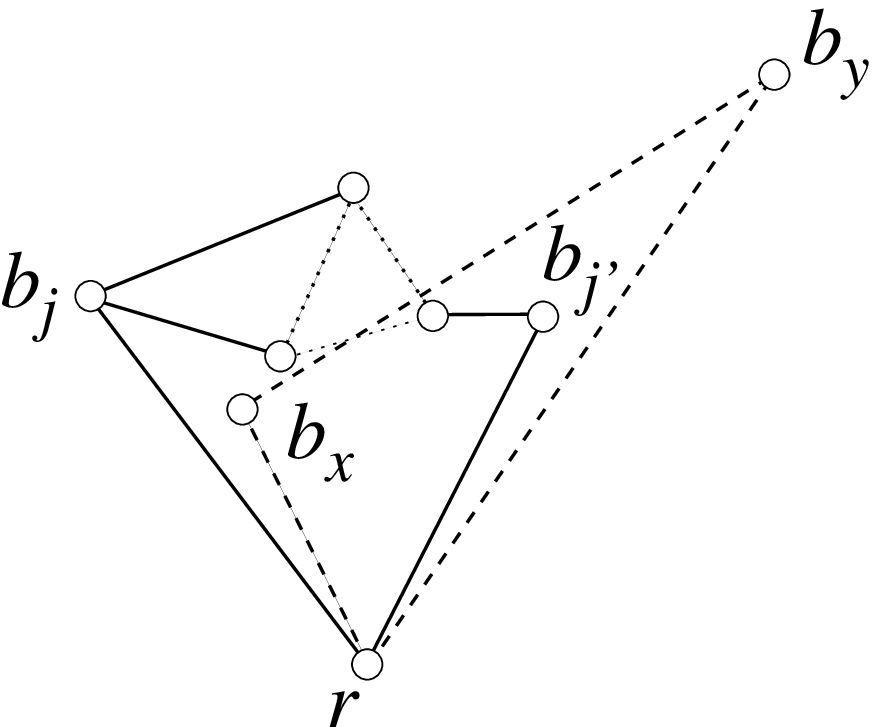} \hspace{1cm} &
\includegraphics[height=3.5cm]{./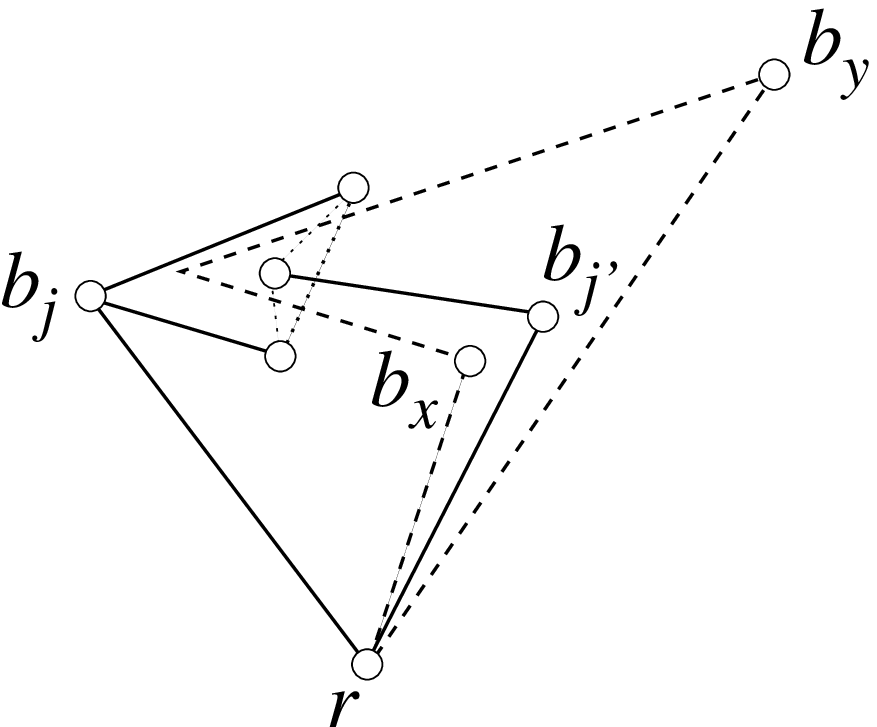} \\
(a) \hspace{1cm} & (b)
\end{tabular}}
\caption{(a) An open door. (B) A closed door.}\label{fig:door}
\end{figure}

Consider two joints $j_a$ and $j_b$, with $h,a,h',b$ appearing in this circular order around the root. Any polyline connecting the root to $j_a$, then to $j_b$, and again to the root, without crossing tree edges, must traverse each door by crossing both the sides adjacent to $v'$. If a door is closed, such a polyline has to bend after crossing one side adjacent to $v'$ and before crossing the other one.
Also, if two passages $P_1$ and $P_2$ are interconnected, either all the closed doors of $P_1$ are traversed by a path of tree-edges belonging to $P_2$ or all the closed doors of $P_2$ are traversed by a path of tree-edges belonging to $P_1$ (see Fig.~\ref{fig:passage}(b)).

In the rest of the argumentation we will exploit the fact that the closed door of a passage requests a bend in the tree to obtain the claimed property that a large part of \T has to follow the same shape. In view of this, we state the following lemmata relating the concepts of doors, passages, and formations.

\begin{lemma}\label{lemma:PS_passage}
For each formation $F(H)$, with $H=(h_1, \ldots ,h_4)$, there exists a passage between some cells $c_1(h_a),c_2(h_a),c'(h_b) \in F(H)$, with $1 \leq a,b \leq 4$.
\end{lemma}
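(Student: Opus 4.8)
The plan is to argue inside a fixed (putative) geometric simultaneous embedding of \T and \P, so that both the tree drawing and the path drawing are planar; under this hypothesis I will exhibit a passage inside an arbitrary formation $F(H)$, $H=(h_1,\ldots,h_4)$. The first step is to produce two cells of a common joint whose convex hulls overlap. Fix any joint $h_a\in H$ and look at its $148$ cells $c_r(h_a)\in F(H)$. Two cells can be separated by a straight line exactly when their convex hulls are disjoint, so if \emph{every} pair among these $148$ cells were linearly separable we could in particular select four pairwise separated cells $c_r(h_a)$; by Property~\ref{prop:four_sep_areas_PS} this would force a crossing in \T, contradicting planarity of the tree drawing. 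Hence some pair $c_1(h_a),c_2(h_a)\in F(H)$ cannot be separated by a straight line, i.e. $\mathrm{conv}(c_1(h_a))$ and $\mathrm{conv}(c_2(h_a))$ meet. The abundance of cells ($148$ per joint, $=37\cdot 4$) makes this selection robust: among so many cells I may additionally insist that the non-separable pair be chosen non-consecutive in the path order, so that the portion of \P joining the two cells genuinely traverses cells of the remaining joints.

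The second step is to extract the third cell $c'(h_b)$, with $h_b\neq h_a$, whose path-polyline separates $c_1(h_a)$ from $c_2(h_a)$. Here I use the interleaved visiting order $((h_1h_2h_3)^{37}h_4^{37})^4$ of the cells of $F(H)$: two cells of the same joint are never adjacent along \P, so between the two chosen occurrences of $h_a$ the curve \P runs entirely through cells of the other three joints. Since $c_1(h_a)$ and $c_2(h_a)$ are vertex-disjoint yet have overlapping convex hulls, and each is attached to $j_{h_a}$ and thence to the root $r$ by non-crossing tree edges, the planar sub-curve of \P connecting them cannot avoid the common region: it must enter $\mathrm{conv}(c_1(h_a)\cup c_2(h_a))$ and leave it, thereby cutting the vertices of $c_1(h_a)$ off from those of $c_2(h_a)$. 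Decomposing this connecting curve into the cell-polylines $\P(c')$ it is built from, at least one foreign cell $c'(h_b)$ carries the separation; this $c'(h_b)$, together with $c_1(h_a)$ and $c_2(h_a)$, is the desired passage. (As a consistency check, since $c_1(h_a)$ and $c_2(h_a)$ are not linearly separable, the separating polyline $\P(c')$ cannot be straight, so it bends at a vertex of $c'$ lying in $\mathrm{conv}(c_1(h_a)\cup c_2(h_a))$, matching the remark that follows the definition of passage.)

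I expect the second step to be the main obstacle: turning the combinatorial fact that \P weaves through the overlap region into the precise topological statement that a \emph{single} cell's polyline separates $c_1(h_a)$ from $c_2(h_a)$. I would argue this by a Jordan-arc/planarity count, using that the tree edges from $j_{h_a}$ to the two cells, together with the common root bundle, confine $c_1(h_a)$ and $c_2(h_a)$ to a region whose boundary the connecting sub-path of \P must pierce, and that planarity of \P forbids it from re-crossing the tree edges. The delicate case is when the separation appears to be split among several consecutive foreign cells rather than realized by one; ruling this out is exactly where the non-adjacency of same-joint cells and the $37$-fold repetitions in the definition of $F(H)$ are invoked, guaranteeing a clean single-cell separator $c'(h_b)$ and hence the passage asserted by the lemma.
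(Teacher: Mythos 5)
Your first step is sound and mirrors the paper's opening observation: by Property~\ref{prop:four_sep_areas_PS}, at most three cells of a joint inside $F(H)$ can be pairwise linearly separated, so a non-separable pair $c_1(h_a),c_2(h_a)$ exists. The genuine gap is in your second step, and it is fatal as written: non-separability of a pair does not imply that \emph{anything} in the drawing separates them. A passage requires, by definition, that the polyline of a single foreign cell $c'(h_b)$ separates the vertices of $c_1$ from those of $c_2$, but two cells with overlapping convex hulls can perfectly well coexist with their vertices interleaved and no separating polyline at all. Your claim that the sub-path of \P joining the two cells ``must enter $\mathrm{conv}(c_1\cup c_2)$ and leave it, thereby cutting the vertices of $c_1(h_a)$ off from those of $c_2(h_a)$'' conflates crossing a region with separating two point sets: that sub-path is an arc with endpoints \emph{inside} both cells, and such an arc never separates its own endpoints' cells in the Jordan sense. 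There is also a second concrete flaw: you insist on choosing the pair non-consecutive in path order, but then the portion of \P between them does \emph{not} run entirely through foreign cells --- in the order $((h_1h_2h_3)^{37}h_4^{37})^{4}$ only consecutive same-joint cells have purely foreign cells between them, so your connecting arc contains other cells of $j_{h_a}$, and any separation along it could be carried by a same-joint cell $c_3(h_a)$, which is not a passage. The paper confronts exactly this possibility at the outset; it is the reason cells of each joint are grouped into at most three separability classes.

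The missing idea is the paper's global counting-and-spiral machinery, which your promised ``Jordan-arc/planarity count'' cannot replace. The paper argues by contradiction: assuming no passage, the $37$ path edges from cells of $j_{h_1}$ to cells of $j_{h_2}$ in one repetition pigeonhole (over the nine pairs of separability classes) into five edges $e_1,\dots,e_5$ joining the same pair of classes; these five edges and their cells cut the plane into five regions, while $j_{h_3}$ offers only three classes, so some region between consecutive $e_s,e_{s+1}$ contains no cell of $j_{h_3}$ and the path, being continuous and unable to re-enter that region, forms a spiral. The four repetitions of $((h_1h_2h_3)^{37}h_4^{37})$ then yield four cells of $j_{h_4}$ separated by spiral edges built entirely from cells of \emph{other} joints; Property~\ref{prop:four_sep_areas_PS} forces two of them to be non-linearly-separable yet path-separated, which is precisely a passage --- contradiction. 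In short, the existence of a separating foreign polyline is \emph{forced} only by the order and continuity constraints of \P across the whole formation, not by any local property of one non-separable pair; your proposal identifies the right starting pair but supplies no mechanism that produces the separator, which is the entire content of the lemma.
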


\begin{lemma}\label{lemma:closed_door_in_each_passage}
Each passage contains at least one closed door.
\end{lemma}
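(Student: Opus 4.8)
The plan is to argue by contradiction: assume that \emph{every} door of the passage is open, and derive that $c_1(h)$ and $c_2(h)$ can be separated by a straight line, contradicting the very hypothesis under which a passage is defined. First I would fix the geometric setting. Since $c_1$ and $c_2$ cannot be separated by a straight line, their convex hulls overlap, so the convex hull $CH$ of $c_1\cup c_2$ contains points surrounded by vertices of both cells. By definition of a passage, the polyline $\pi$ traced by \P on $c'$ separates the vertices of $c_1$ from those of $c_2$; hence $\pi$ must enter $CH$, and by Property~\ref{prop:passage_2_edges} there are two path-edges $e_1,e_2$ of $c'$ crossed by tree-edges joining $c_1$ to $c_2$. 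In particular $\pi$ has at least one vertex $v'$ of $c'$ lying inside $CH$. Each such interior vertex $v'$, together with a suitable pair $a\in c_1$, $b\in c_2$ chosen so that the triangle $v'ab$ is empty of further $c_1,c_2$-vertices and of $c'$-vertices closer to $j_{h'}$, yields a door whose side $ab$ is exactly the portion of the separation carried out by $\pi$ at $v'$.

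Next I would exploit the openness assumption. An open door at $v'$ means that no tree-edge incident to $v'$ crosses the separating segment $ab$, so locally the tree of $c'$ does not obstruct replacing the bend of $\pi$ at $v'$ by a straight shortcut through the region between $c_1$ and $c_2$. The concrete realization I would use is an extremal vertex: take the interior vertex $v'$ of $c'$ that is closest to $j_{h'}$ in \T. Its tree-parent is strictly closer to $j_{h'}$, hence cannot itself be interior, so it lies outside $CH$; consequently the tree-path from $v'$ back to $j_{h'}$ must leave $CH$. Combining this with the assumption that \emph{all} doors are open, I would conclude that every interior vertex of $c'$ can be connected to $j_{h'}$ without crossing any separating side, which forces all of $c'$'s interior vertices to lie on a single side of a straight chord separating $c_1$ from $c_2$. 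Extending that chord to a full line contradicts the non-separability of $c_1$ and $c_2$, so at least one door must be closed.

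The main obstacle I anticipate is making the escape-of-the-parent-edge argument precise: I must guarantee that the tree-edge from $v'$ to its parent leaves $CH$ through the boundary on the correct side of $ab$ rather than re-crossing the separation, and that the straightening performed at one vertex does not destroy the emptiness (and hence the door status) of another. Here the order in which the interior vertices are processed — by increasing tree-distance to $j_{h'}$ — is what keeps the door emptiness conditions intact, and Property~\ref{prop:passage_2_edges}, which supplies \emph{two} distinct crossed path-edges and therefore a genuine bend of $\pi$ inside $CH$, is exactly what rules out the degenerate case in which $\pi$ could be straight. Turning this topological picture into a rigorous local argument about the triangle side $ab$ is the delicate step of the proof.
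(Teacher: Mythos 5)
There is a genuine gap, and it sits exactly where your proof has to land: the final inference of your second paragraph. From ``all doors are open'' you conclude that every interior vertex of $c'$ can reach $j_{h'}$ ``without crossing any separating side,'' and that this ``forces all of $c'$'s interior vertices to lie on a single side of a straight chord separating $c_1$ from $c_2$.'' No such chord is ever constructed, and openness of doors cannot produce one: a door being open only constrains the tree edges incident to that door's apex $v'$ relative to that door's base; it is a local statement about $\T$, not about the path polyline $\pi$, and it does not license the ``straight shortcut'' of the bend of $\pi$ that your first paragraph gestures at. Indeed the interior of a door is guaranteed empty only of vertices (of $c_1\cup c_2$, and of $c'$-vertices closer to $j_{h'}$); it can still be traversed by other path edges of $c'$ and by tree edges joining $c_1$ to $c_2$ --- Property~\ref{prop:passage_2_edges} tells you such tree edges must cross the path edges of $c'$, so the region you would shortcut through is in general obstructed. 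Since producing a straight line separating $c_1$ from $c_2$ is precisely the (strong) statement your contradiction needs, and nothing in the proposal derives it, the argument does not go through as written.

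The frustrating part is that your extremal-vertex idea is one step away from a complete \emph{direct} proof, essentially the paper's. Take, as you do, an interior vertex $v'$ of $c'$ whose tree-distance to $j_{h'}$ is minimal among interior vertices, and let $e$ be the tree edge from $v'$ toward $j_{h'}$. The edge $e$ enters one of the triangles $\triangle(v',v_1,v_2)$ with $v_1,v_2\in c_1\cup c_2$ that is empty of further vertices of $c_1\cup c_2$; since all three corners lie in the convex hull $CH$, convexity puts the whole triangle inside $CH$, so by your extremality it contains no $c'$-vertex closer to $j_{h'}$ --- hence it \emph{is} a door. The far endpoint of $e$ lies outside $CH$ (granting your parent argument), so $e$ must leave the triangle, and it cannot cross the two sides incident to its own endpoint $v'$; it therefore crosses the opposite side $v_1v_2$, and that door is closed. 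Done, with no contradiction framework and no chord. The paper argues in the same spirit but without the extremal choice: it follows the tree path from an arbitrary interior vertex, and at each step either the path exits a door through its opposite side (closed door) or it hits another vertex of $c'$, at which the argument repeats; termination is guaranteed because some vertex of $c'$ lies outside $CH$. One caveat your extremality step shares with the paper's iteration: the tree-parent of a vertex of $c'$ need not itself belong to $c'$ (in $\T$, the parent of a $3$-vertex of a cell lies in a different cell), so ``the parent is not interior by extremality'' requires either extending the minimization to all vertices on the tree path to $j_{h'}$ or iterating past non-$c'$ vertices, which is worth making explicit if you rewrite the proof along these lines.
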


From the previous lemmata we conclude that each formation contains at least one closed door. To prove that the effects of closed doors belonging to different formations can be combined to obtain more restrictions on the way in which the tree has to bend, we exploit a combinatorial argument based on the Ramsey Theorem~\cite{grs-rt-90} and state that there exists a set of joints pairwise creating passages.

\begin{lemma}\label{lemma:k_cluster_passage}
Given a set of joints $J=\{j_1,\ldots ,j_y\}$, with $|J|=y:= {2^7\cdot 3\cdot x + 2 \choose 3}$, there exists a subset $J'=\{j'_1,\ldots ,j'_r\}$, with $|J'|=r \geq 2^7\cdot 3\cdot x$, such that for each pair of joints $j'_i,j'_h \in J'$ there exist two cells $c_1(i),c_2(i)$ creating a passage with a cell $c'(h)$.
\end{lemma}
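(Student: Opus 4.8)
The plan is to regard the existence of a passage as a binary relation on pairs of joints and to extract the desired subset by the off-diagonal Ramsey theorem. First I would fix the (supposed) geometric simultaneous embedding of \T and \P and work entirely inside this one drawing, so that whether two joints create a passage becomes a well-defined property. I would then colour each unordered pair $\{j_a,j_b\}$ of joints \emph{good} if some two cells at one of them create a passage with a cell at the other, and \emph{bad} otherwise. The target set $J'$ is precisely a \emph{good clique}, a set of joints all of whose pairs are good. Setting $r:=2^7\cdot 3\cdot x$, the goal becomes finding a good clique of size $r$ among the $y=\binom{r+2}{3}$ joints of $J$.

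The structural input is Lemma~\ref{lemma:PS_passage}, which guarantees that every formation contains a passage. By the construction of \P (the extended formations and the sequence of extended formations), any four joints of $J$ occur together in a common formation $F(H)$, and Lemma~\ref{lemma:PS_passage} then yields cells $c_1(h_a),c_2(h_a),c'(h_b)$ with $h_a,h_b$ among these four, i.e. a passage between two of the four joints. Consequently every set of four joints contains at least one good pair, which is exactly the statement that the graph of \emph{bad} pairs contains no $K_4$. I expect the verification that every $4$-subset of joints lies in a common formation — so that Lemma~\ref{lemma:PS_passage} applies to an arbitrary $4$-subset, not only to the $4$-tuples explicitly singled out by the path — to be the main obstacle, and I would discharge it by appealing to the abundance of formations built on all relevant tuples in the definitions of $EF$ and $SEF$.

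Once the bad graph is shown to be $K_4$-free, the conclusion follows from the off-diagonal Ramsey bound $R(s,t)\le\binom{s+t-2}{s-1}$~\cite{grs-rt-90}. Taking $s=r$ and $t=4$ gives
\[
R(r,4)\;\le\;\binom{r+2}{3}\;=\;y .
\]
Since $J$ has $y\ge R(r,4)$ joints and the bad class contains no $K_4$, the good class must contain a clique of size $r=2^7\cdot 3\cdot x$, and this clique is the required subset $J'$. By the definition of \emph{good}, each pair in $J'$ creates a passage; after renaming the endpoints of the pair so that the joint contributing the two cells is $j'_i$, we obtain two cells $c_1(i),c_2(i)$ creating a passage with a cell $c'(h)$, as claimed. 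The only quantitative check is that the chosen value of $y$ meets the Ramsey threshold, which the identity $\binom{r+2}{3}=\binom{r+2}{r-1}$ makes exact.
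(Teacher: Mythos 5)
Your proposal takes essentially the same route as the paper's own proof: both invoke Lemma~\ref{lemma:PS_passage} together with the constructive fact that every four joints share a formation to conclude that every $4$-set of joints contains a pair creating a passage (i.e., the ``bad'' graph is $K_4$-free), and then apply the off-diagonal Ramsey bound $R(r,4)\le\binom{r+2}{3}=y$ to extract the clique $J'$. The only differences are presentational — you make the quantitative Ramsey bound and the relabeling of the passage's endpoints explicit, which the paper leaves implicit.
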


Now we formally define the claimed property that part of the tree has to follow a fixed shape by considering how the drawing of the subtrees attached to two different joints force the drawing of the subtrees attached to the joints between them in the order around the root.

{\bf Enclosing bendpoints:}
Consider two paths $p_1 = \{u_1,v_1,w_1\}$ and $ p_2= \{ u_2,v_2,w_2\}$. The bendpoint $v_1$ of $p_1$ \emph{encloses} the bendpoint $v_2$ of $p_2$ if $v_2$ is internal to triangle $\triangle (u_1,v_1,w_1)$. See Fig.~\ref{fig:bendpoint}(a).

{\bf Channels:}
Consider a set of joints $J =\{j_1, \ldots , j_k \}$ in clockwise order around the root. The \emph{channel} $c_i$ of a joint $j_i$, with $i=2,\dots,k-1$, is the region given by the pair of paths, one path of $j_{i-1}$ and one path of $j_{i+1}$, with the maximum number of enclosing bendpoints with each other. We say that $c_i$ is an \emph{x-channel} if the number of enclosing bendpoints is $x$. Observe that, by Property~\ref{prop:three_bends}, $x\leq 3$. A $3$-channel is depicted in Fig.~\ref{fig:bendpoint}(b). Note that, given an $x$-channel $c_i$ of $j_i$, all the vertices of the subtree rooted at $j_i$ that are at distance at most $x$ from the root lie inside $c_i$.

\begin{figure}[tb]
  \centering{
\begin{tabular}{c c}
\includegraphics[height=2.8cm]{./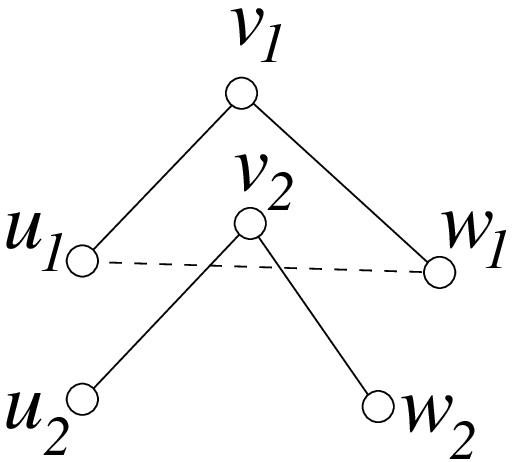} \hspace{1cm} &
\includegraphics[height=3cm]{./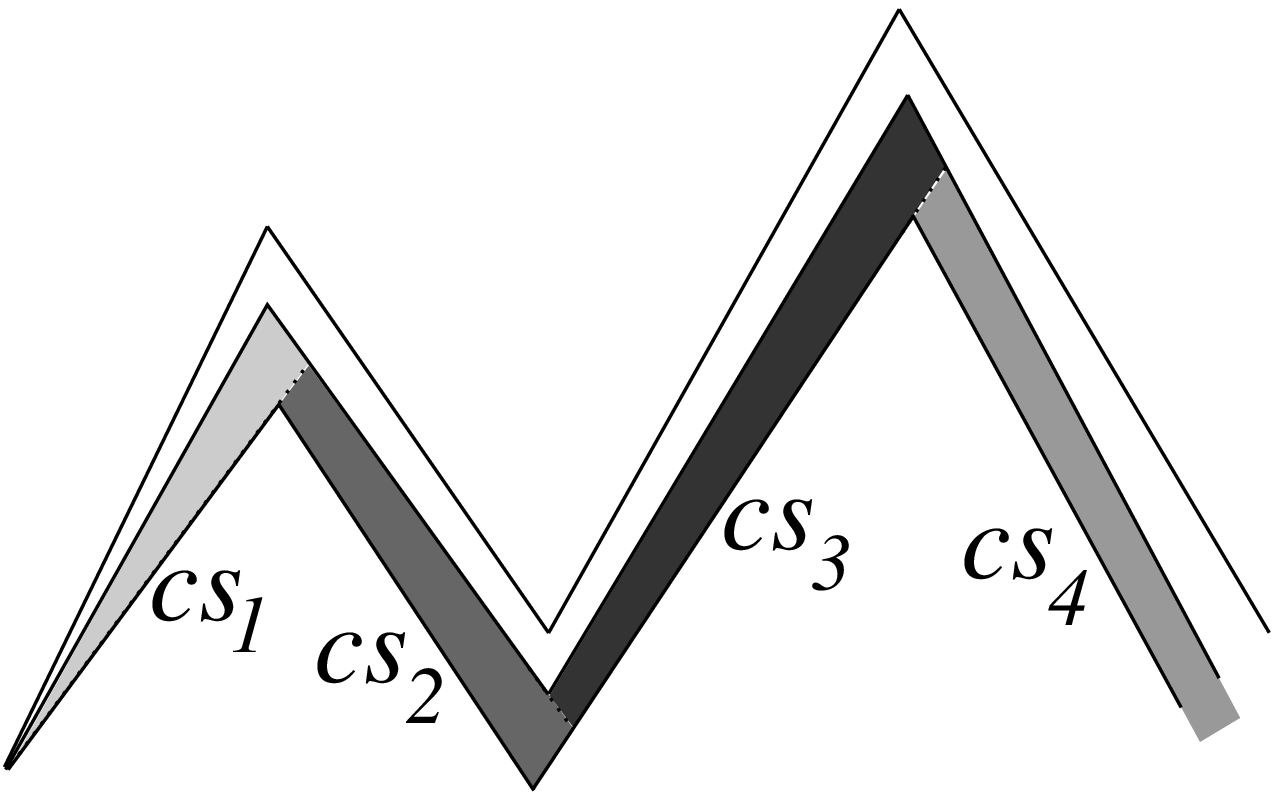} \\
(a) \hspace{1cm} & (b)
\end{tabular}}
\caption{(a) An enclosing bendpoint. (b) A $3$-channel and its channel segments.}\label{fig:bendpoint}
\end{figure}

{\bf Channel segments:}
An $x$-channel $c_i$ is composed of $x+1$ parts called \emph{channel segments} (see Fig.~\ref{fig:bendpoint}(b)). The first channel segment $cs_1$ is the part of $c_i$ that is visible from the root. The $h$-th channel segment $cs_h$ is the region of $c_i$ disjoint from $cs_{h-1}$ that is bounded by the elongations of the paths of $j_{i-1}$ and $j_{i+1}$ after the $h$-th bend.

Observe that, since the channels are created by tree-edges, any tree-edge connecting vertices in the channel has to be drawn inside the channel, while path-edges can cross other channels. In the following we study the relationships between path-edges and channels. The following property descends from the fact that every second vertex reached by \P in a cell is either a 1-vertex or a stabilizer.

\begin{property}\label{prop:CS_1_2}
For any path edge $e=(a,b)$, at least one of $a$ and $b$ lie inside either $cs_1$ or $cs_2$.
\end{property}

{\bf Blocking cuts:}
A \emph{blocking cut} is a path edge connecting two consecutive channel segments by cutting some of the other channels twice. See Fig.~\ref{fig:blocking}.

\begin{property}\label{prop:blocking-cut}
Let $c$ be a channel that is cut twice by a blocking cut. If $c$ has vertices in both the channel segments cut by the path edge, then it has some vertices in a different channel segment.
\end{property}
\begin{proof}
Consider the vertices lying in the two channel segments of $c$. In order to connect them in \T, a vertex $v$ is needed in the bendpoint area of $c$. However, in order to have path connectivity between $v$ and the vertices in the two channel segments, some vertices in a different channel segment are needed.
\end{proof}

\begin{figure}[hb]
  \centering{
\begin{tabular}{c}
\includegraphics[height=2.9cm]{./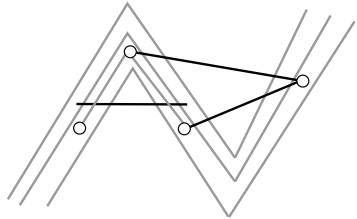}
\end{tabular}}
\caption{A blocking cut.}
\label{fig:blocking}
\end{figure}

In the following lemma we show that in a set of joints as in Lemma~\ref{lemma:k_cluster_passage} it is possible to find a suitable subset such that each pair of paths of tree-edges starting from the root and containing such joints has at least two common enclosing bendpoints, which implies that most of them create $2$-channels.

\begin{lemma}\label{lemma:2_channels}
Consider a set of joints $J= \{j_1,\ldots ,j_k\}$ such that there exists a passage between each pair $(j_i,j_h)$, with $1 \leq i,h \leq k$. Let ${\mathcal P}_1 =\{P \mid P\textrm{ connects } c_i \textrm{ and } c_{\frac{3k}{4}+1-i}$, for $i=1,\dots, \frac{k}{4} \}$ and ${\mathcal P}_2 =\{P \mid P\textrm{ connects } c_{\frac{k}{4}+i} $ and $c_{k+1-i}  \textrm{, for } i=1,\dots, \frac{k}{4} \}$ be two sets of passages between pairs of joints in $J$ (see Fig.~\ref{fig:lemma5}). Then, for at least $\frac{k}{4}$ of the joints of one set of passages, say ${\mathcal P}_1$, there exist paths in \T, starting at the root and containing these joints, which traverse all the doors of ${\mathcal P}_2$ with at least 2 and at most 3 bends. Also, at least half of these joints create an $x$-channel, with $2\leq x \leq 3$.
\end{lemma}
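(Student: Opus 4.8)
The plan is to combine three ingredients already available: that every passage contains a closed door (Lemma~\ref{lemma:closed_door_in_each_passage}), the interconnection property of passages stated just before the lemma, and the bound of Property~\ref{prop:three_bends} that a tree-path from the root bends at most three times. First I would pin down how the two families interleave. Writing a generic passage of ${\mathcal P}_1$ as one between joints $j_i$ and $j_{\frac{3k}{4}+1-i}$ with $1\le i\le \frac{k}{4}$, and a generic passage of ${\mathcal P}_2$ as one between $j_{\frac{k}{4}+i'}$ and $j_{k+1-i'}$ with $1\le i'\le\frac{k}{4}$, the four indices satisfy $i<\frac{k}{4}+i'<\frac{3k}{4}+1-i<k+1-i'$ (each inequality is immediate from the ranges of $i,i'$; the last one because $i'-i\le\frac{k}{4}-1<\frac{k}{4}$). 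Hence every passage of ${\mathcal P}_1$ is interconnected with every passage of ${\mathcal P}_2$ (configuration (iii)), and moreover the passages of ${\mathcal P}_2$ form a \emph{nested} family, all of them enclosing the middle-quarter joints that carry the ${\mathcal P}_1$ passages.

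Next I would run a pigeonhole argument on the orientation forced by interconnection. For each interconnected pair $(P_1,P_2)$, the interconnection property forces exactly one of two outcomes: the closed door of $P_1$ is traversed by a tree-path belonging to $P_2$, or the closed door of $P_2$ is traversed by a tree-path belonging to $P_1$. Summing over all pairs, one of the two families, say ${\mathcal P}_1$ (this is the ``say ${\mathcal P}_1$'' of the statement), realizes at least half of the outcomes in its favour. Charging each favourable outcome to a joint of the corresponding ${\mathcal P}_1$ passage then yields at least $\frac{k}{4}$ joints of ${\mathcal P}_1$ whose root-to-joint tree-paths traverse the closed doors of the passages of ${\mathcal P}_2$.

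I would then bound the bends of such a tree-path. The upper bound of $3$ is exactly Property~\ref{prop:three_bends}, since the path starts at the root. For the lower bound of $2$, I would use that traversing a closed door requires, by its very definition, a bend between the two sides incident to its apex, and that the root lies outside the whole nested family of ${\mathcal P}_2$ while the traversed closed doors lie strictly inside: the path must both enter and leave the nested region through closed doors, forcing at least two bends. The key point that makes ``traverse \emph{all} the doors of ${\mathcal P}_2$ with at most $3$ bends'' consistent with the linearly many closed doors is precisely the nesting: because the nested doors are essentially aligned, a single straight segment crosses the corresponding sides of all of them, so the same two or three bends suffice for the whole family. This reconciliation is the delicate part of the whole argument.

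Finally, to extract the channels I would order the selected $\ge\frac{k}{4}$ joints around the root and observe that the bends forced above are enclosing bendpoints of neighbouring paths: for two consecutive selected joints, the tree-paths have $2$ or $3$ mutually enclosing bendpoints, which by definition makes the joint between them the apex of an $x$-channel with $2\le x\le 3$. Since enclosure is a relation between consecutive paths, this holds for at least half of the selected joints, giving the second assertion. The hard part I expect to be the bend count of the third paragraph, namely arguing rigorously that one tree-path with at most three bends can cross all the closed doors of ${\mathcal P}_2$ while needing at least two; this rests entirely on the nested, aligned geometry of the passages together with Property~\ref{prop:three_bends}. The pigeonhole selection of the winning family and the passage from bends to enclosing bendpoints and channels are comparatively routine once that geometric picture is established. \qed
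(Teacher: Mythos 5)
Your setup (interleaving of the two families, interconnection of every $\mathcal{P}_1$-passage with every $\mathcal{P}_2}$-passage, nesting within each family) matches the paper, and the upper bound of $3$ bends via Property~\ref{prop:three_bends} is fine. The genuine gap is in your lower bound of $2$ bends, which is the heart of the lemma. You argue that the selected tree-path must ``enter and leave the nested region through closed doors, forcing at least two bends,'' but nothing forces the path to leave: a root-to-joint path only needs to reach its joint (whose subtree can sit inside the region), and your own reconciliation remark undercuts the claim — since the nested doors are aligned, a single bend placed inside all of them serves every traversal, so the closed-door mechanism of Lemma~\ref{lemma:closed_door_in_each_passage} yields only \emph{one} forced bend, not two. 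This is exactly where the paper does something different: it first establishes only ``at least $1$ bend,'' then forces the second bend by a separate argument internal to the selected subset $J'$. It takes the extremal paths $p_1$ (whose bendpoint encloses all others) and $p_r$ (enclosed by all others), invokes the hypothesis that a passage exists between $j'_1$ and $j'_r$ themselves, and shows in both cases (separating cell in the outermost vs.\ innermost channel) that drawing this passage inside $1$-channels is impossible — in one case via blocking cuts and Property~\ref{prop:blocking-cut}, in the other because the separating edges miss $p_1$ entirely. This forces an additional bend for all but one joint of $J'$. Your proposal never uses the passages \emph{within} the selected family, so it has no mechanism to rule out $1$-channels.

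A secondary weakness: your final ``at least half create an $x$-channel with $2\le x\le 3$'' is asserted from enclosure being ``a relation between consecutive paths,'' which does not give the factor $\frac{1}{2}$. In the paper this comes from a dichotomy on the \emph{additional} bendpoints: each path $p_k$ (other than $p_1$, $p_r$, and the one exceptional joint $p_q$) must have its extra bendpoint enclose the extra bendpoints of either all predecessors $p_1,\ldots,p_{k-1}$ or all successors $p_{k+1},\ldots,p_r$, and at least half go the same way. Relatedly, your pigeonhole in the second paragraph (``at least half of the pairwise outcomes favour one family'') only gives joints whose paths traverse the doors of \emph{some} $\mathcal{P}_2$-passages; to match the statement you need a consistent choice — as in the paper, either all $\mathcal{P}_1$-joints between the $\mathcal{P}_2$-joints in the circular order, or all those not between — whose paths traverse \emph{all} doors of $\mathcal{P}_2$.
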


\begin{figure}[tb]
  \centering{
\begin{tabular}{c}
\includegraphics[height=3.3cm]{./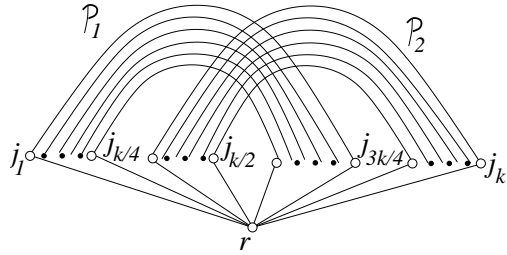}
\end{tabular}}
\caption{Two sets of passages ${\mathcal P}_1$ and ${\mathcal P}_2$ as described in Lemma~\ref{lemma:2_channels}.}
  \label{fig:lemma5}
\end{figure}

By Lemma~\ref{lemma:2_channels}, any formation attached to a certain subset of joints must use at least three different channel segments. In the remainder of the argumentation we focus on this subset of joints and give some properties holding for it, in terms of interaction between different formations with respect to channels. Since we need a full sequence of extended formations attached to these joints, $k$ has to be at least eight times the number of channels inside a sequence of extended formations, that is, $k \geq 8 \cdot 48x=2^7\cdot 3 x$.

First, we give some further definitions.

{\bf Nested formations}
A formation $F$ is {\it nested} in a formation $F'$ if there exist two edges $e_1,e_2 \in F$ and two edges $e'_1,e'_2 \in F'$ cutting a border $cb$ of a channel $c$ such that all the vertices of the path in $F$ between $e_1$ and $e_2$ lie inside the region delimited by $cb$ and by the path in $F'$ between $e'_1$ and $e'_2$ (see Fig.~\ref{fig:nested_formations}(a)).

A series of pairwise nested formations $F_1, \ldots, F_k$ is {\it $r$-nested} if there exist $r$ formations $F_{q_1}, \ldots, F_{q_r}$, with $1\leq q_1,\ldots,q_r \leq k$, belonging to the same channel and such that, for each pair $F_{q_p},F_{q_{p+1}}$, there exists at least one formation $F_z$, $1 \leq z \leq k$, belonging to another channel and such that $F_{q_p}$ is nested in $F_z$ and $F_z$ is nested in $F_{q_{p+1}}$ (see Fig.~\ref{fig:nested_formations}(b)).

\begin{figure}[htb]
  \centering{
\begin{tabular}{c c}
\includegraphics[height=3.3cm]{./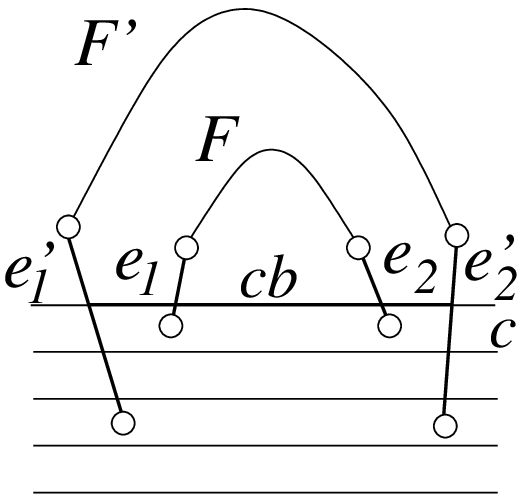} \hspace{1cm} &
\includegraphics[height=3.3cm]{./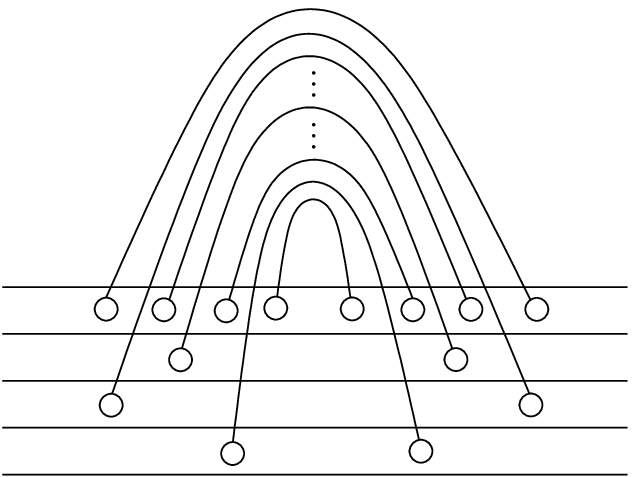} \\
(a) \hspace{1cm} & (b)
\end{tabular}}
\caption{(a) A formation $F$ nested in a formation $F'$. (b) A series of $r$-nested formations.}
  \label{fig:nested_formations}
\end{figure}

{\bf Independent sets of formations}
Let $S_1, \ldots , S_k$ be sets of formations of one extended formation such that each set $S_i$ contains formations $F_i(H_1), \ldots, F_i(H_r)$ on the set of $4$-tuples $H=\{H_1, \ldots, H_r\}$, where the joints of $H_i$ are between the joints of $H_{i-1}$ and of $H_{i+1}$ in the order around the root. Further, let $F_a(H_c)$ be not nested in $F_b(H_d)$, for each $1\leq a,b \leq k, \; a \neq b,$ and $1\leq c,d \leq r$. If for each pair of sets $S_a,S_b$ there exist two lines $l_1,l_2$ separating the vertices of $S_a$ and $S_b$ inside channel segment $cs_1$ and $cs_2$, respectively, the sets are \emph{independent} (see Fig.~\ref{fig:independent}).

\begin{figure}[htb]
  \centering{
\begin{tabular}{c}
\includegraphics[height=3.8cm]{./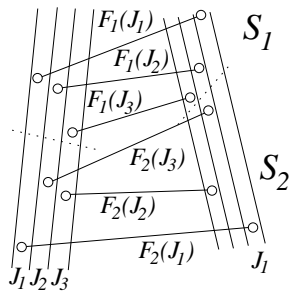}
\end{tabular}}
\caption{Two independent sets $S_1$ and $S_2$.}
  \label{fig:independent}
\end{figure}

In the following lemmata we prove that in any extended formation there exists a nesting of a certain depth (Lemma~\ref{lemma:nest_independent}). This important property will be the starting point for the final argumentation and will be deeply exploited in the rest of the paper. We get to this conclusion by first proving that in an extended formation the number of independent sets of formations is limited (Lemma~\ref{n-independent nestings}) and then by showing that, although there exist formations that are neither nested nor independent, in any extended formation there exists a certain number of pairs of formation that have to be either independent or nested (Lemma~\ref{lem:nest_sequence}).

\begin{lemma}\label{n-independent nestings}
There exist no $n\geq 2^{22}\cdot 14$ independent sets of formations $S_1,\ldots ,S_n$ inside any extended formation,
where each $S_i$ contains formations of a fixed set of channels of size $r\geq 22$.
\end{lemma}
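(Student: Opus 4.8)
The plan is to argue by contradiction: assume an extended formation contains $n \geq 2^{22}\cdot 14$ pairwise independent sets $S_1,\dots,S_n$ on a fixed set of $r\geq 22$ channels, and derive that two of them cannot in fact be independent. The first step is to reduce everything to the two distinguished channel segments. By Property~\ref{prop:CS_1_2}, every path edge of every formation has an endpoint in $cs_1\cup cs_2$, so the mutual position of two sets is already determined by their vertices inside $cs_1$ and $cs_2$; consequently the separating lines $l_1\subset cs_1$ and $l_2\subset cs_2$ promised by the definition of independence carry all the relevant information. Since the $r$ channels are fixed and appear in a fixed order around the root, each segment $cs_1$ and $cs_2$ is sliced by the channel borders into $r$ consecutive sub-regions, and within the slice belonging to a given channel the formations of the various sets are stacked along the channel by their distance from the root.

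The core is a signature argument. Inside a single channel slice the formations of the sets $S_i$ that reach that channel are linearly ordered by distance from the root, so each $S_i$ receives a rank there. I would show, using that no formation of $S_a$ is nested in a formation of $S_b$, that a single line $l_1$ separating $S_a$ from $S_b$ across all $r$ slices of $cs_1$ can exist only if the rank profiles of $S_a$ and $S_b$ do not ``cross'', i.e.\ $S_a$ precedes $S_b$ in every slice where both appear or follows it in every such slice; a crossing profile would force either a nesting (excluded by independence) or, via Property~\ref{prop:blocking-cut}, vertices in a third channel segment, contradicting that only $cs_1$ and $cs_2$ are used for the separation. Hence to each independent set I can attach a binary signature recording, on the $22$ fixed channels, which side of a common threshold it falls on in $cs_1$, refined by a bounded local index ($cs_2$ together with the internal multiplicity of a formation contributes the factor $14$). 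Distinct independent sets must receive distinct signatures, for otherwise the two coincident-signature sets would have crossing or equal profiles and could not be separated by the required single lines in both $cs_1$ and $cs_2$ without creating a nesting. The number of signatures is at most $2^{22}\cdot 14$, so $n\leq 2^{22}\cdot 14$, contradicting $n\geq 2^{22}\cdot 14$.

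The step I expect to be the real obstacle is the geometric lemma turning ``separability by one line in $cs_1$'' into the combinatorial statement that the per-channel rank profiles are non-crossing. Pairwise linear separability of point sets in the plane is, by itself, far too weak to bound their number; the whole force of the argument must come from combining it with the non-nesting hypothesis and with the rigidity of the channel stacks, so that each of the $r\geq 22$ channel borders can be charged at most once and a single separating line is forced to act as a global threshold. Making this charging precise --- simultaneously in $cs_1$ and $cs_2$, and accounting exactly for the local multiplicity that yields the constant $14$ --- is where the technical work lies; the subsequent pigeonhole on signatures is then routine.
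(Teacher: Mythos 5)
Your signature argument has a genuine gap at exactly the point you flagged, and the gap is not repairable in the form you propose. The pigeonhole step requires that two independent sets with the same signature cannot be separated, but independence only supplies a separating line \emph{per pair} of sets; nothing in the definition or in the non-nesting hypothesis forces these pairwise lines to align into a common per-channel threshold against which a signature could be measured. Two sets lying on the same side of every fixed threshold can still be separated from each other by a finer, pair-specific line --- just as arbitrarily many points on a segment are pairwise linearly separable while inducing identical signatures with respect to any bounded family of reference lines. So ``distinct independent sets receive distinct signatures'' is false as stated, and the charging scheme you gesture at (each of the $22$ channel borders charged once, one line promoted to a global threshold) is precisely the unproven content, not a technical refinement. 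Your reading of the constants is also a symptom of this: in the paper $2^{22}$ does not count binary signatures over $22$ channels, and $14$ is not a local multiplicity.

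The paper runs the pigeonhole in the opposite direction and closes the argument geometrically rather than by counting. Starting from Lemma~\ref{lemma:PS_passage} and Property~\ref{prop:CS_1_2}, it repeatedly \emph{halves} the family of sets --- once for each of $r\geq 22$ common connections, whence $n/2^{22}\geq 14$ surviving sets --- so that in every surviving set a fixed subfamily of at least $r/2$ formations all connect into one channel segment $CS$, crossing the same channels on the same side. It then shows that the formations $F_{r/8}$ of alternate sets are separated inside $CS$ by the edges of the formations $F_{r/4}$ of the interleaved sets, and that avoiding a monotone order of separated cells (which would already give a region-level nonplanar tree) forces an escape into an adjacent channel segment $CS'$, creating blocking cuts (Property~\ref{prop:blocking-cut}); a second round forces the formations $F_1$ into the remaining segment $CS''$; and the formations $F_2$ of the remaining sets are then separated in $CS$ with no channel segment visible to their separated areas left, so Property~\ref{prop:four_sep_areas_PS} yields the contradiction. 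The bound $2^{22}\cdot 14$ is thus consumed by the halving extraction, and the contradiction is a forced region-level nonplanar tree, not a shortage of signatures. To complete your proof you would have to replace the signature pigeonhole wholesale by an extraction-plus-exhaustion argument of this kind; your correct observation that pairwise linear separability alone is too weak is in fact the reason your own counting step cannot work.
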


\begin{lemma}\label{lem:nest_sequence}
Consider four subsequences $Q_1,\ldots ,Q_4$, where $Q_i = (H_1,H_2,\ldots,H_x)$, of an extended formation $EF$, each consisting of a whole repetition of $EF$. Then, there exists either a pair of nested subsequences or a pair of independent subsequences.
\end{lemma}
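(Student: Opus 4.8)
The plan is to classify, for each of the $\binom{4}{2}=6$ pairs $(Q_a,Q_b)$, the way the two whole repetitions sit with respect to each other, and to prove that the only bad relationship---neither nested nor independent---cannot occur for all six pairs at once. Since each $Q_i$ is a whole repetition, it contains a formation at (almost) every tuple $H_1,\dots,H_x$, and these tuples correspond to joints appearing in a fixed order around the root. By Lemma~\ref{lemma:2_channels} the relevant joints create $x$-channels with $2\le x\le 3$, so by Property~\ref{prop:CS_1_2} essentially all vertices of the four subsequences lie in the first two channel segments $cs_1$ and $cs_2$. I would therefore analyze each pair only through how $Q_a$ and $Q_b$ interact inside $cs_1$ and inside $cs_2$.

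First I would exploit that \P is a single non-self-crossing polyline. Fixing a channel border $cb$, the arcs into which $cb$ cuts \P never cross one another, so restricted to one side of $cb$ (inside one channel segment) they form a laminar family: any two are nested or disjoint. Hence, within a single channel segment, the footprints of $Q_a$ and $Q_b$ are either separable by a straight line or one lies in the region bounded by an arc of the other together with $cb$. Translating this into the formal definitions, a pair is \emph{independent} exactly when its footprints are separable in both $cs_1$ and $cs_2$ (the two lines $l_1,l_2$), and it is \emph{nested} when an arc-in-region containment holds at a border; the only remaining possibility---call it the \emph{crossing} case---is that the containment realized inside $cs_1$ is incompatible with the one inside $cs_2$ (one subsequence lies inside the other across $cs_1$ but outside across $cs_2$, or their several arcs interleave), so that no single line separates them in one segment and no global nesting holds. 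This crossing relation is the subsequence analogue of the \emph{interconnected} relation defined for passages.

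The heart of the argument, and the step I expect to be hardest, is to show that no four subsequences can be pairwise crossing, i.e.\ that the crossing relation contains no $K_4$ on $\{Q_1,\dots,Q_4\}$. The idea is to encode each subsequence by the bounded amount of data governing its nesting behaviour across the channel segments---bounded because a tree path has at most three bends by Property~\ref{prop:three_bends}, hence at most four channel segments---and to argue that a crossing pair forces an inversion of this data between $cs_1$ and $cs_2$; having four mutually inverted subsequences then forces two arcs of the single simple curve \P to cross, contradicting the laminar structure established above. Once this claim is in place the lemma follows immediately: the crossing graph on four vertices is not $K_4$, so it has at most five of the six edges, and any non-edge is a pair that is by the trichotomy either nested or independent. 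The delicate points I would still have to settle are (i) verifying that ``not crossing'' genuinely yields the two separating lines required by the definition of independent sets, rather than mere separability in a single segment, and (ii) making precise the topological step that an inversion across the two segments forces \P to self-cross, since this is exactly where the simplicity of the path and the at-most-three-bends bound on tree paths must be combined, and where the threshold of \emph{four} (as opposed to three) subsequences is decided.
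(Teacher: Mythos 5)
Your overall skeleton---a trichotomy into nested, independent, and ``crossing'' pairs, followed by the claim that the crossing relation cannot be complete on four subsequences---is a reasonable reformulation of the lemma, but the proposal stops exactly where the proof has to happen. The no-$K_4$ claim \emph{is} the content of the statement, and you explicitly leave unsettled both ingredients you would need for it: (i) that pairwise non-crossing genuinely yields the two separating lines required by independence, and (ii) that four mutually ``inverted'' subsequences force \P to self-cross. Point (ii) is moreover doubtful as stated: a pair that is neither nested nor independent does not obviously force two arcs of \P to cross, because in this setting the obstruction is mediated by the channel borders, which are paths of \emph{tree} edges, not by \P alone; two subsequences can interleave around a channel border while \P remains simple. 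This is precisely why the paper's definitions of nesting and independence both refer to edges cutting a channel border $cb$, and why the paper's actual argument never derives a self-intersection of \P.

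The paper's proof proceeds quite differently and more cheaply. Assuming no nested pair exists, it first disposes of the case where one subsequence uses only $cs_1$ and another only $cs_2$ (such a pair is immediately independent), so all four may be assumed to use a common segment, say $cs_2$. The key observation, which replaces your unproven no-$K_4$ step, is local: if two subsequences sharing a segment are not independent, they contain formations on a common channel $a$ intersecting \emph{different} borders of $a$; any third subsequence with a formation intersecting one of those borders is then either nested with or clearly separated from the subsequence on that same border, so with nesting excluded, among any three subsequences sharing a segment some pair is separated there. From this the paper extracts a pair $Q_1,Q_2$ separated in $cs_2$ but (to avoid immediate independence) not in $cs_1$; in the worst case $Q_3$ is separated from neither of them in $cs_2$; then $Q_4$ must be separated in $cs_1$ from two of $Q_1,Q_2,Q_3$ and in $cs_2$ from two of them, and by pigeonhole it is separated in \emph{both} segments from at least one, i.e., independent. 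This counting step is where the threshold of four subsequences is decided---your open question (ii)---and it is combinatorial, not a topological self-crossing argument. To repair your proposal you would have to supply a proof of the local three-subsequence statement (or a genuine proof of your inversion claim); as written, that is the gap.
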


\begin{lemma}\label{lemma:nest_independent}
Consider an extended formation $EF(H_1,H_2,\ldots,H_x)$.
Then, there exists a $k$-nesting, where $k \geq 6$, among the formations of $EF$.
\end{lemma}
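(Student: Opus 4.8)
The plan is to combine the two preceding lemmas (Lemma~\ref{n-independent nestings} and Lemma~\ref{lem:nest_sequence}) via a pigeonhole/Ramsey-type counting argument to force a long chain of nested formations, and then promote that chain into a $k$-nesting with $k\geq 6$ by invoking the channel structure. The overall strategy is: take the extended formation $EF(H_1,\ldots,H_x)$, which by its definition contains a large number $y$ of repetitions $(H_1,\ldots,H_x)^y$ (minus the defects). Each repetition can be viewed as one subsequence $Q_i=(H_1,\ldots,H_x)$. Lemma~\ref{lem:nest_sequence} tells me that among \emph{any four} such subsequences I can find either a nested pair or an independent pair. Lemma~\ref{n-independent nestings} caps the number of mutually independent sets of formations at $n<2^{22}\cdot 14$ (for channel sets of size $r\geq 22$). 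So the plan is to apply Lemma~\ref{lem:nest_sequence} repeatedly to a sufficiently large collection of subsequences, using the independence bound to rule out large independent families, thereby forcing many nested pairs to appear; these nested pairs then assemble into a deep nesting.

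\medskip

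First I would set up the counting. I treat each repetition of $EF$ as producing a subsequence, and I extract from the $y-\tfrac{y}{x}$ available formations a large family of subsequences (the size $y\leq 7^2\cdot 3^3\cdot 2^{26}$ and the channel count $r\geq 22$ from Lemma~\ref{lemma:2_channels} are chosen precisely so the arithmetic closes). The key move is a Ramsey-style 2-coloring: color each unordered pair of subsequences ``nested'' or ``independent'' according to which alternative Lemma~\ref{lem:nest_sequence} yields (more carefully, Lemma~\ref{lem:nest_sequence} is stated for four subsequences, so I would first argue that among any four one finds a monochromatic relation, then use the finite Ramsey theorem~\cite{grs-rt-90} to obtain either a large nested clique or a large independent clique). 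Since Lemma~\ref{n-independent nestings} forbids an independent family of size $\geq 2^{22}\cdot 14$ on channel sets of size $r\geq 22$, a sufficiently large starting family must contain a nested clique of the required length.

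\medskip

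Next I would convert the nested clique into an $r$-nesting in the precise technical sense defined before the lemma, namely a series $F_1,\ldots,F_k$ of pairwise nested formations in which $r$ of them, say $F_{q_1},\ldots,F_{q_r}$, lie in a common channel with ``witness'' formations $F_z$ of another channel interleaved between consecutive pairs. The point is that a chain of nested formations, spread across the (at least three, by Lemma~\ref{lemma:2_channels}) channel segments, cannot stay confined to a single channel; by pigeonholing the formations of the chain over the channels and demanding the interleaving witnesses, I extract the $6$ same-channel formations with interleaved cross-channel witnesses, yielding $k\geq 6$. The numeric slack built into $n\geq 2^{22}\cdot 14$ and the repetition count $120$ inside the $SEF$ is what guarantees $6$ rather than merely $2$.

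\medskip

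\textbf{The main obstacle} I anticipate is the bookkeeping in the middle step: Lemma~\ref{lem:nest_sequence} is a local statement about four subsequences and only guarantees a nested-or-independent \emph{pair}, not a transitive relation, so I cannot naively iterate it to build a chain. Reconciling this four-wise local dichotomy with a global Ramsey clique, while simultaneously respecting that ``nested'' as defined is \emph{not} automatically transitive (being nested is relative to a channel border $cb$ and a specific pair of path edges), is where the real care is needed --- I would likely need an auxiliary claim that a nested clique in the coloring does refine to a genuinely $r$-nested series in the sense of the definition, handling the interleaving-witness requirement by choosing the cross-channel formations from the defect structure of the $SEF$. Getting the constants to land on exactly $k\geq 6$ (rather than losing a factor somewhere and only reaching $k\geq 2$) is the delicate quantitative part.
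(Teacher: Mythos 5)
Your proposal founders at the step you yourself flag as the main obstacle, and the difficulty is worse than bookkeeping: the Ramsey coloring you want does not exist. Lemma~\ref{lem:nest_sequence} only asserts that among any \emph{four} whole repetitions there is at least one nested pair or one independent pair; it does not classify every pair, and the paper explicitly warns (just before stating the lemma) that there exist formations that are neither nested nor independent. So you cannot 2-color all unordered pairs and invoke Ramsey's theorem for pairs. Passing to a coloring of 4-subsets does not rescue the plan: a large family in which every 4-subset contains a nested pair only bounds the independence number of the ``nested'' graph by $3$, and even a clique in the symmetrized nested relation is not a chain (nestedness, being defined relative to a channel border and specific path edges, is not transitive, as you note) and in particular carries none of the interleaved cross-channel witness formations that the definition of an $r$-nesting requires. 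The paper uses the two auxiliary lemmas much more modestly: Lemma~\ref{lem:nest_sequence} bounds the number of distinct nestings of subsequences compatible with fewer than $n$ independent sets by $(n-1)\cdot 3$, and since more than $(n-1)\cdot 3\cdot 3$ channels are available, some nesting involves subsequences with at least three \emph{different defects} --- no Ramsey step at all.

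The real content of the paper's proof is exactly what you defer to an unproved ``auxiliary claim'': a geometric analysis of how the defect structure of $EF$ forces the nesting to deepen. After setting up the regions $R_{in}$ and $R_{nest}$, the paper shows the repetitions must spiral strictly inward or outward; a defect at a channel $c$ then blocks visibility, for all later repetitions, to the zone where the vertices of $c$ were placed, and the only escape --- a ``shift'' of those vertices into another channel's zone --- cascades through all channels and ends with two repetitions separable by a straight line, i.e., independent, contradicting the assumption. This yields one additional level of nesting per $3x+1$ repetitions, hence depth $6$ after $3\cdot x\cdot 6$ repetitions; the constant $6$ comes from iterating this defect/visibility mechanism, not from the size of $n\geq 2^{22}\cdot 14$ or from the repetition count $120$ (which belongs to sequences of extended formations, not to a single $EF$). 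Since your counting framework contains no analogue of this mechanism, the proposal as written does not establish the lemma.
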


Once the existence of $2$-channels and of a nesting of a certain depth in each extended formation has been shown, we turn our attention to study how such a deep nesting can be performed inside the channels.

Let $cs_a$ and $cs_b$, with $1\leq a,b \leq 4$, be two channel segments. If the elongation of $cs_a$ intersects $cs_b$, then it is possible to connect from $cs_b$ to $cs_a$ by cutting both the sides of $cs_a$. In this case, $cs_a$ and $cs_b$ have a \emph{$2-$side connection} (see Fig.~\ref{fig:12-side-connection}(b)). On the contrary, if the elongation of $cs_a$ does not intersect $cs_b$, only one side of $cs_a$ can be used. In this case, $cs_a$ and $cs_b$ have a \emph{$1-$side connection} (see Fig.~\ref{fig:12-side-connection}(a)).

\begin{figure}[htb]
\begin{center}
\begin{tabular}{c c}
\mbox{\includegraphics[height=2.7cm]{./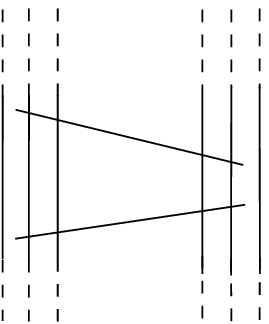}} \hspace{1cm} &
\mbox{\includegraphics[height=2.7cm]{./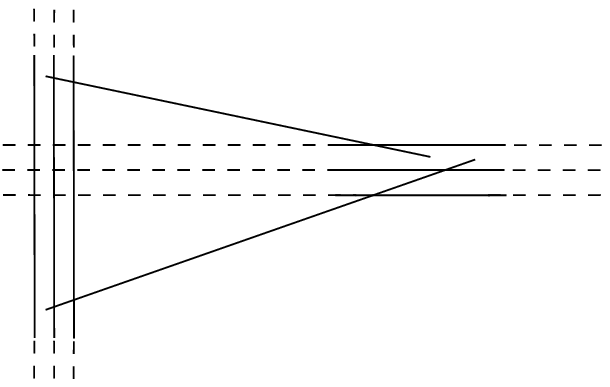}} \\
(a) & (b)\\
\end{tabular}
\caption{(a) A $1-$side connection. (b) A $2-$side connection.}
\label{fig:12-side-connection}
\end{center}
\end{figure}

Based on these different ways of connecting distinct channel segments, we split our proof into three parts, the first one dealing with the setting in which only $1$-side connections are allowed, the second one allowing one single $2$-side connection, and the last one tackling the general case.

\begin{proposition}\label{prop:only-1-side}
If there exist only $1-$side connections, then \T and \P do not admit any geometric simultaneous embedding.
\end{proposition}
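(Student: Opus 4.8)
\noindent\emph{Proof idea.}
The plan is to argue by contradiction. Assume \T and \P admit a geometric simultaneous embedding in which every connection between two channel segments is a $1$-side connection, and derive an unavoidable crossing in \T. I would carry out the whole argument inside one extended formation $EF$ attached to the distinguished joints that, by Lemma~\ref{lemma:2_channels}, create $x$-channels with $2\le x\le 3$; by Property~\ref{prop:three_bends} each such channel has at most four channel segments. The entry point is Lemma~\ref{lemma:nest_independent}: there is a $k$-nesting with $k\ge 6$ among the formations of $EF$, i.e.\ formations $F_{q_1},\ldots,F_{q_6}$ whose cells of a common joint $j_h$ all lie in the single channel $c=c_h$, zigzag-nested through intermediate formations $F_{z_1},\ldots,F_{z_5}$ living in other channels.

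The core step is to show that, under the $1$-side hypothesis, each level of this nesting forces the involved cells of $j_h$ one channel segment deeper. Here I would combine three ingredients. By Property~\ref{prop:CS_1_2} every path edge keeps an endpoint in $cs_1\cup cs_2$, so the path can descend into a deeper segment only by crossing a bounding tree-edge of $c$, which fixes the direction of any descent. Since only $1$-side connections are available, a formation nested inside another across a border of $c$ can reach the enclosed region only through the unique usable side of the relevant segment; hence its cells of $j_h$ are trapped in a segment strictly interior to that of the enclosing formation. Finally, Property~\ref{prop:blocking-cut} rules out the only alternative, namely escaping to a shallower segment by a blocking cut: as $c$ would then have cells in both segments cut by that edge, it would be forced to spread into yet a further segment, again increasing the segment index. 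Iterating along $F_{q_1},F_{z_1},F_{q_2},\ldots,F_{q_6}$ therefore yields a monotone chain of occupied segment indices of $c$.

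A segment-counting argument then closes the proof: the nesting has depth $k\ge 6$, whereas $c$ offers at most $x+1\le 4$ channel segments, so the monotone descent cannot be accommodated without collapsing the cells of a single formation $F_{q_p}$ and joint $j_h$ into (at least) four pairwise distinct segments of $c$. Because consecutive segments are separated by the elongations of the two tree-paths that define $c$, these four cells are pairwise separated by straight lines, and since they belong to the same formation and joint, Property~\ref{prop:four_sep_areas_PS} yields a crossing in \T --- the desired contradiction. The hard part will be precisely this last bridge: turning a nesting relation between \emph{distinct} formations into a statement about four cells of a \emph{single} formation lying in four separated segments. This will require reading the intra-formation traversal order $((h_1h_2h_3)^{37}h_4^{37})^4$ against the $1$-side bookkeeping of which side of each segment is crossed, so that the depth accumulated across the $F_{q_p}$ can be charged to one formation's cells rather than being diluted over several.
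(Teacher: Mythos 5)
Your proposal is correct and follows essentially the paper's own route (the proof of Lemma~\ref{lemma:k-nesting}): take the depth-$\geq 6$ nesting from Lemma~\ref{lemma:nest_independent}, observe that with only $1$-side connections the intermediate formation of another channel blocks visibility on the channel segment used by the inner formation, so each nesting level consumes a fresh channel segment, and derive the contradiction from Property~\ref{prop:three_bends}, which caps the number of segments at four. The final bridge you flag as the hard part---forcing four cells of a \emph{single} formation into four line-separated segments so as to invoke Property~\ref{prop:four_sep_areas_PS}---is unnecessary: the paper concludes directly from the segment count, since a nesting of depth at least $6$ requiring a distinct segment per level is simply undrawable in a channel with at most four segments.
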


We prove this proposition by showing that, in this configuration, the existence of a deep nesting in a single extended formation, proved in Lemma~\ref{lemma:nest_independent}, results in a crossing in either \T or \P.

\begin{lemma}\label{lemma:k-nesting}
If an extended formation lies in a part of the channel that contains only $1-$side connections, then \T and \P do not admit any geometric simultaneous embedding.
\end{lemma}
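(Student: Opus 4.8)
The plan is to argue by contradiction: assuming that \T and \P admit a geometric simultaneous embedding, I would show that the deep nesting guaranteed inside the extended formation cannot coexist with the rigidity imposed by $1$-side connections. First I would invoke Lemma~\ref{lemma:nest_independent} to obtain a $k$-nesting with $k \geq 6$ among the formations of the extended formation, say with formations $F_{q_1},\dots,F_{q_6}$ lying in a common channel $c$ and, between each consecutive pair $F_{q_p},F_{q_{p+1}}$, a separating formation $F_{z_p}$ lying in another channel $c'$. Because the extended formation lies entirely in a part of the channel with only $1$-side connections, every transition of the path between two channel segments may cut only one prescribed side of the target segment.

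The heart of the argument would be to show that, under this restriction, each successive level of the nesting is forced one channel segment deeper. Concretely, I would prove that if $F_{q_p}$ occupies channel segments up to $cs_i$, then the enclosing formation $F_{z_p}$ together with $F_{q_{p+1}}$ cannot remain within $cs_1,\dots,cs_i$: since a $1$-side connection only allows the path to reach $cs_i$ from $cs_{i-1}$ through a single side, the enclosing path has no room to ``wrap around'' $F_{q_p}$ without re-entering an already occupied segment. I would rule out such a re-entry by appealing to Property~\ref{prop:blocking-cut}: reusing a segment would require a blocking cut, and the property forces the corresponding channel to place vertices in a third segment, which under $1$-side connections is again unavailable. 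Hence every nesting level genuinely increases the number of occupied channel segments.

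I would then combine this monotonicity with the global bound on the available segments. By Property~\ref{prop:three_bends} every root-path has at most three bends, so each channel has at most four channel segments; consequently the two channels $c$ and $c'$ together offer only a bounded supply of segments into which the $6$ nesting levels must be funneled. The pigeonhole consequence is that some formation of the nesting is forced to spread four of its cells belonging to the same joint across four pairwise separated regions. At this point Property~\ref{prop:four_sep_areas_PS} applies directly and yields a crossing in \T. In the complementary case, where the cells refuse to separate, Property~\ref{prop:CS_1_2} forces the path edges of the innermost, deeply confined formation to reach back to $cs_1$ or $cs_2$, and these edges must cross the enclosing formations, producing a crossing in \P. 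Either way the assumed embedding is impossible.

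The step I expect to be the main obstacle is the monotonicity claim of the second paragraph, namely that a $1$-side connection genuinely prevents the enclosing path from reusing a channel segment. Making this rigorous requires a careful topological description of how the border $cb$ of the channel, the elongations bounding the channel segments, and the path of the enclosing formation interact, together with a complete treatment of the blocking-cut case so that no ``shortcut'' through an already occupied segment is overlooked. The remaining counting and the two final crossing arguments are then comparatively routine applications of Properties~\ref{prop:four_sep_areas_PS} and~\ref{prop:CS_1_2}.
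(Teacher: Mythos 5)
Your proposal follows essentially the same route as the paper's proof: invoke Lemma~\ref{lemma:nest_independent} to obtain a nesting of depth at least $6$, show that under the $1$-side-connection restriction each level of the nesting blocks the channel segment used by the inner formation and thereby forces the enclosing formation into a fresh segment, and conclude from the bound of at most four channel segments implied by Property~\ref{prop:three_bends}. Your central monotonicity claim is precisely the paper's visibility-blocking step (there phrased as the intermediate formation $F''$ blocking visibility for $F'$ on the segment used by $F$), and your closing case analysis via Properties~\ref{prop:four_sep_areas_PS} and~\ref{prop:CS_1_2} merely spells out the contradiction the paper states tersely.
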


Next, we study the case in which there exist $2$-side connections. We distinguish two types of $2$-side connections, based on the fact that the elongation of channel segment $cs_a$ intersecting channel segment $cs_b$ starts at the bendpoint that is closer to the root, or not. In the first case we have a \emph{low Intersection} (see Fig.~\ref{fig:intersection}(a)), denoted by $I^l_{(a,b)}$, and in the second case we have a \emph{high Intersection} (see Fig.~\ref{fig:intersection}(b)), denoted by $I^h_{(a,b)}$, where $a,b \in \{1,\dots,4\}$. We use the notation $I_{(a,b)}$ to describe both $I^h_{(a,b)}$ and $I^l_{(a,b)}$. We say that two intersections $I_{(a,b)}$ and $I_{(c,d)}$ are \emph{disjoint} if $a,d \in \{1,2\}$ and $b,c \in \{3,4\}$. For example, $I_{(1,3)}$ and $I_{(4,2)}$ are disjoint, while $I_{(1,3)}$ and $I_{(2,4)}$ are not.

\begin{figure}[htb]
\begin{center}
\begin{tabular}{c c}
\mbox{\includegraphics[height=2.5cm]{./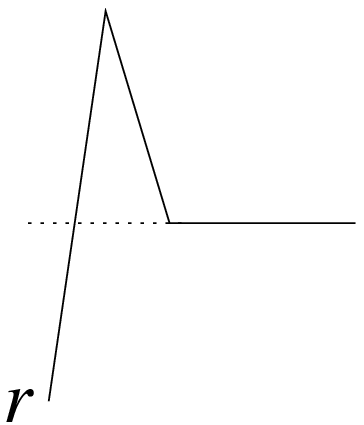}} \hspace{2cm} &
\mbox{\includegraphics[height=2.5cm]{./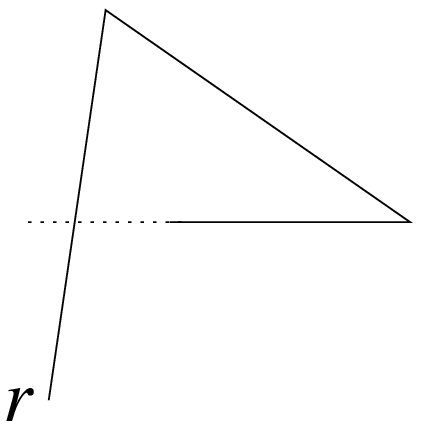}} \\
(a) \hspace{2cm} & (b)\\
\end{tabular}
\caption{(a) A low Intersection. (b) A high Intersection.}
\label{fig:intersection}
\end{center}
\end{figure}

Since consecutive channel segments can not create any $2$-side connection, in order to explore all the possible shapes we consider all the combinations of low and high intersections created by channel segments $cs_1$ and $cs_2$ with channel segments $cs_3$ and $cs_4$.
With the intent of proving that intersections of different channels have to maintain certain consistencies, we state the following lemma.

\begin{lemma}\label{lem:2-different-shapes}
Consider two channels $ch_p,ch_q$ with the same intersections. Then, none of channels $ch_i$, where $p<i<q$, have an intersection that is disjoint with the intersections of $ch_p$ and of $ch_q$.
\end{lemma}

As for Proposition~\ref{prop:only-1-side}, in order to prove that $2$-side connections are not sufficient to obtain a simultaneous embedding of \T and \P, we exploit the existence of the deep nesting shown in Lemma~\ref{lemma:nest_independent}. First, we analyze some properties relating such nesting to channel segments and bending areas. A \emph{bending area} $b(a,a+1)$ is the region between $cs_a$ and $cs_{a+1}$ where bendpoints can be placed. We first observe that all the extended formations have to place vertices inside the bending area of the channel segment where the nesting takes place, and then prove that not many of the formations involved in the nesting can use the part of the path that creates the nesting to place vertices in such a bending area, which implies that the extended formations have to reach the bending area in a different way.

\begin{lemma}\label{lem:nesting-bending-area}
Consider an $x$-nesting of a sequence of extended formations on an intersection $I_{(a,b)}$, with $a\leq 2$.
Then, there exists a triangle $t$ in the nesting that separates some of the triangles nesting with $t$ from the bending area $b(a,a+1)$ (or $b(a-1,a)$).
\end{lemma}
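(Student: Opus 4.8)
The plan is to exploit the linear order by containment that the nesting imposes on its triangular regions, together with the connectedness of the bending area, so that some triangle of the chain necessarily has other triangles of the nesting on the side opposite to the bending area.

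First I would fix the local geometry of the intersection $I_{(a,b)}$. Since $a\le 2$, channel segment $cs_a$ lies near the root and the relevant bending area---$b(a,a+1)$ for a high intersection and $b(a-1,a)$ for a low one---is adjacent to $cs_a$ on the side of its bend, whereas the elongation of $cs_a$ meets $cs_b$ with $b\in\{3,4\}$, so the apex where the nesting accumulates lies on the far side of $cs_a$. It is exactly this separation of the apex from the bending area that the hypothesis $a\le 2$ provides and that makes the two regions distinguishable.

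Next I would make the triangles of the nesting explicit. Each formation participating in the $x$-nesting contributes the part of \P that realizes the nesting through $I_{(a,b)}$; this subpath, together with the channel border $cb$ that it cuts, bounds a closed region, which I take as a triangle $t$ of the nesting (its two path-sides run toward the apex and its third side is carried by $cb$). By the definitions of \emph{nested} and \emph{$r$-nested} formations, writing the participating formations in nesting order $F_{q_1},F_{z_1},F_{q_2},\dots$ yields a chain of these regions under inclusion, $t_1\subset t_2\subset\cdots\subset t_x$, whose boundaries are pairwise non-crossing Jordan curves.

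Then I would locate the bending area in this chain and conclude. The bending area is a connected region that does not contain the apex, so it lies in exactly one of the parts cut out by the chain: inside $t_1$, outside $t_x$, or in a single annulus $t_{i+1}\setminus t_i$; connectedness is what forbids it from meeting two of these parts at once, since to do so it would have to cross some $\partial t_j$, which is made of path-edges and a channel border and cannot be straddled by a tree bendpoint without a crossing in \T or \P. In the first case $t_1$ separates $t_2,\dots,t_x$ from the bending area, in the second $t_x$ separates $t_1,\dots,t_{x-1}$, and in the third $t_i$ separates $t_1,\dots,t_{i-1}$; in every case the required triangle $t$ exists. The main obstacle I anticipate is the rigorous verification that the chain really is a containment chain of closed Jordan regions: because the formations alternate between two channels and the border $cb$ may be cut several times, one must argue carefully---using the nesting definition and the fact that tree-edges within a channel stay inside it---that each triangle closes off cleanly along $cb$ and that no two triangle boundaries cross.
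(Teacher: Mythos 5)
There is a genuine gap, and it sits exactly where you placed your confidence rather than your anticipated obstacle. Your trichotomy step --- that the bending area, being connected, must lie wholly inside $t_1$, wholly outside $t_x$, or in a single annulus $t_{i+1}\setminus t_i$ --- does not hold in the simultaneous-embedding setting. The boundary of each region $t_i$ consists of path edges plus a portion of the channel border $cb$, and $cb$ is made of \emph{tree} edges, which path edges are free to cross (only path--path and tree--tree crossings are forbidden). So the chain boundaries are not Jordan barriers: a path connection from a formation nested inside $t_i$ can escape through the $cb$ side without crossing any path edge, and symmetrically the path-edge sides are no obstacle to tree edges. Moreover, the bending area $b(a,a+1)$ is a fixed region determined by the channel geometry, and nothing prevents the path edges of the nesting from running through it --- indeed, later lemmata in the paper explicitly connect path edges into bending areas --- so the bending area can perfectly well meet several complementary regions of your chain. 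Consequently, set-theoretic position relative to the containment chain does not by itself deny any formation access to the bending area, and the separation claimed in the lemma does not follow. Your justification that crossing $\partial t_j$ ``cannot be straddled \dots\ without a crossing in \T or \P'' is precisely the false step.

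The paper's proof supplies the content your argument is missing, and it is not a containment argument at all. It picks extremal extended formations in the two outermost channels of the sequence ($EF_1(H_1),EF_2(H_1),EF_3(H_1)$ in $ch_1$ and $EF_1(H_2),EF_2(H_2)$ in $ch_2$, with nothing else nesting between them) and exploits the fact that every formation must place vertices on channel segment $cs_a$ with $a\leq 2$ (this, via Property~\ref{prop:CS_1_2}, is the real role of the hypothesis $a\leq 2$ --- not your remark about the apex lying far from the bend). A connection of $F_1'\in EF_1(H_2)$ to $cs_a$ and back must go around the vertex placed there by the nested formation $F_1\in EF_1(H_1)$, hence cuts \emph{all} channels between $ch_1$ and $ch_2$; this separates $F_1$ from $F_2\in EF_2(H_1)$ on $cs_a$, which in turn forces a connection of $F_2$ cutting all those channels in $cs_a$ or in $cs_{a+1}$ (resp.\ $cs_{a-1}$), i.e., a blocking cut. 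It is Property~\ref{prop:blocking-cut} --- a statement mixing tree-vertex placement with path connectivity, not plane topology --- that then prevents the formations nesting inside $F_2$ from placing vertices in $b(a,a+1)$ (or $b(a-1,a)$) outside $F_2$. To repair your proof you would need an analogue of this blocking-cut step; the ``careful verification that the chain is a containment chain,'' which you flagged as the main difficulty, is the comparatively harmless part.
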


Then, we study some of the cases involving $2$-side connections and we show that the connections between the bending area and the "endpoints" of the nesting create a further nesting of depth greater than $6$. Hence, if no further $2-$side connection is available, this second nesting is not drawable.

\begin{proposition}\label{prop:triangle}
Let $t$ be a triangle open on a side splitting a channel segment $cs$ into two parts such that every extended formation $EF$ has vertices in both parts. If the only possibility to connect vertices in different parts of $cs$ is with a $1$-side connection and if any such connection creates a triangle open on a side that is nested with $t$, then \T and \P do not admit any geometric simultaneous embedding.
\end{proposition}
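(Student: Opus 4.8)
The plan is to assume, for the sake of contradiction, that \T and \P admit a geometric simultaneous embedding, and to extract from the hypotheses on $t$ a second nesting that cannot be drawn. First I would record how the triangle $t$ interacts with the sequence of extended formations: since $t$ is open on a side and splits $cs$ so that every extended formation $EF$ has vertices in both parts, the portion of the path of each such $EF$ lying on one side of $t$ must be joined to the portion lying on the other side. Every such join is a connection across $t$, and by hypothesis the only connection available is a $1$-side connection, each of which creates a triangle open on a side that is nested with $t$. In particular, the ``endpoints'' of the deep nesting guaranteed by Lemma~\ref{lemma:nest_independent} inside each $EF$ --- a $k$-nesting with $k\geq 6$ --- lie on the side of $t$ separated from the bending area in which the bends of the channel must be placed (cf. Lemma~\ref{lem:nesting-bending-area}), so each of them has to be connected back across $t$ to reach that bending area.

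Next I would argue that these connections cannot be made independently but must themselves nest. The endpoints of the first nesting inherit a linear nesting order from the $k$-nesting; connecting each of them to the bending area forces the corresponding $1$-side triangles to respect this order, because two such triangles crossing transversally would produce a crossing between their defining path edges. Hence the family of triangles created by the connections forms a new nesting, and since it is built on top of a nesting of depth at least $6$ coming from Lemma~\ref{lemma:nest_independent}, its depth is itself greater than $6$. I would make the counting explicit by noting that the number of extended formations forced to cross $t$ is large enough, via the sizes fixed in the definition of a sequence of extended formations, to guarantee this threshold even after discarding the formations that could, in principle, reach the bending area through the part of the path already creating the first nesting.

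Finally, I would close the contradiction by showing that this second nesting is not drawable under the standing assumption that the only available connections are $1$-side. This is exactly the regime of Lemma~\ref{lemma:k-nesting} and Proposition~\ref{prop:only-1-side}: a nesting of depth greater than $6$ realized with only $1$-side connections forces four cells of a common formation, attached to the same joint, to be pairwise separated by straight lines; by Property~\ref{prop:four_sep_areas_PS} this yields a crossing in \T, contradicting planarity of the tree drawing. Therefore no geometric simultaneous embedding of \T and \P can exist.

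The main obstacle is the middle step: rigorously establishing that the $1$-side connections to the bending area inherit the nesting order of the first nesting and that enough of them survive to push the depth of the second nesting past $6$. The delicate point is ruling out that an extended formation reaches the bending area ``for free'' through path edges that already participate in the first nesting, rather than through a fresh $1$-side connection; this is where the precise accounting of the formation and extended-formation sizes, together with the property that each connection must create a triangle nested with $t$, has to be invoked to ensure that the second nesting genuinely exceeds depth $6$ and thus falls under the undrawability argument of Lemma~\ref{lemma:k-nesting}.
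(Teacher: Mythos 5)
The middle step you flag yourself is a genuine gap, and it does not close in the way you hope. The hypothesis of Proposition~\ref{prop:triangle} only guarantees that each $1$-side connection creates a triangle nested with the \emph{single} triangle $t$; it does not make these connection triangles pairwise nested. Triangles that are each nested with $t$ can perfectly well sit side by side inside the region cut off by $t$, and your transversality remark only rules out crossings between their defining path edges, not disjoint (unnested) placements --- so no chain of depth exceeding $6$ is ``inherited'' from the $k$-nesting of Lemma~\ref{lemma:nest_independent}. Your closing step is also misapplied: the proof of Lemma~\ref{lemma:k-nesting} works by counting channel segments consumed by the internal nesting of a \emph{single} extended formation (one fresh segment per nesting level, at most four available by Property~\ref{prop:three_bends}); it does not proceed via Property~\ref{prop:four_sep_areas_PS}, and neither mechanism transfers verbatim to your ``second nesting,'' which is assembled from triangles belonging to \emph{different} extended formations of a sequence. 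Finally, the ``free ride'' worry --- an extended formation reaching the bending area through path edges already participating in the first nesting --- is not resolved in the paper by size accounting of formations, as you suggest, so that part of your plan has no support either.

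The paper's actual proof takes a different, essentially combinatorial route, precisely because depth counting does not close this case. It defines for each extended formation a canonical \emph{turning vertex}, chosen so that no alternative inner-to-outer connection through another channel segment exists (this is exactly what neutralizes the free-ride issue), classifies connections into configurations $\alpha$ and $\beta$ according to which endpoint of the open side of $t$ they enclose, and then exploits the defects built into the sequence of extended formations: Lemma~\ref{lem:one_channel_segment} shows that not all extended formations can place their turning vertices in the same channel segment (a defect at $H_2$ forces the connection between $EF(H_1)$ and $EF(H_3)$ to block the region the next $EF(H_2)$ would need to respect the ordering of triangles), while Lemma~\ref{lem:prop3-nodrawing} shows that, since the path is continuous and must spiral in a fixed direction, a single change of channel segment propagates along the sequence until, after a full repetition with defects at every channel, \emph{all} turning vertices would have to lie in one channel segment --- contradicting Lemma~\ref{lem:one_channel_segment}. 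To repair your sketch you would need this machinery (or an equivalent order-forcing argument over the defect structure); as written, both the construction of the second nesting and its undrawability are asserted rather than proven.
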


\begin{figure}[htb]
\begin{center}
\begin{tabular}{c}
\mbox{\includegraphics[height=4cm]{./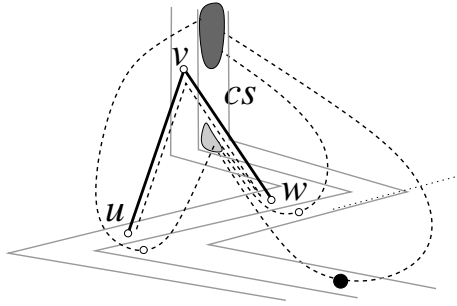}} \hspace{0.2cm}
\end{tabular}
\caption{A situation as in Proposition~\ref{prop:triangle}. The chosen turning vertex is represented by a big black circle and is in configuration $\beta$. The inner and the outer areas are represented by a light grey and a dark grey region, respectively.}
\label{fig:turning-vertex}
\end{center}
\end{figure}

Refer to Fig.~\ref{fig:turning-vertex}. Consider the two path-edges $e_1=(u,v),e_2=(v,w)$ creating $t$ such that the common point $v$ is in the channel segment $cs$ that is split into two parts, that we call {\it inner area} and {\it outer area}, respectively. We assume that $e_1,e_2$ do not cut any channel segment $cs'$ completely, since such a cut would create more restrictions than placing $u$ or $w$ inside $cs'$. Consider the path in an extended formation $EF$ connecting the inner and the outer area through a $1$-side connection at $cs'$. As a generalization, consider for such a path of $EF$ only a vertex, called \emph{turning vertex}, which is placed in $cs'$ and for which no other path in $EF$ exists that connects the inner and the outer area by using a channel segment $cs''$ such that the subpath to $cs''$ intersects either $cs''$ or its elongation. If there exist more than one of such vertices, then arbitrarily choose one of them. Observe that the path connecting from the inner area to the outer area through the turning vertex encloses exactly one of $u$ and $w$. If it encloses $u$, it is in configuration $\alpha$, otherwise it is in configuration $\beta$. If there exist both paths in $\alpha$ and paths in $\beta$ configuration, then we arbitrarily consider one of them. Finally, consider the connections between different extended formations inside a sequence of extended formations. Consider a turning vertex $v$ in a channel segment $cs$ of a channel $ch$ such that the edges incident to $v$ cut a channel $ch'$. Then, any connection of an extended formation of $ch'$ from the inner to the outer area in the same configuration as $ch$ and with its turning vertex $v'$ in $cs$ is such that $v'$ lies inside the convex hull of the two edges incident to $v$.

In the following two lemmata we show that in the setting described in Proposition~\ref{prop:triangle} there exists a crossing either in \T or in \P.

\begin{lemma}
\label{lem:one_channel_segment}
In a situation as described in Proposition~\ref{prop:triangle}, not all the extended formations in a sequence of extended formations can place turning vertices in the same channel segment.
\end{lemma}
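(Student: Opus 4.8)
The plan is to argue by contradiction, assuming that a geometric simultaneous embedding of \T and \P exists and that, within it, \emph{every} extended formation of the sequence of extended formations places its turning vertex in one and the same channel segment $cs^\ast$. Since each turning vertex is in configuration $\alpha$ or $\beta$, and a sequence of extended formations contains a large number of extended formations, I would first apply the pigeonhole principle to extract a sub-collection $\mathcal{E}$ of extended formations, still of substantial size, whose turning vertices all share a single configuration, say $\beta$. This step is needed because the convex-hull containment stated just before the lemma only applies to turning vertices in the same configuration.

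Next, I would order the extended formations of $\mathcal{E}$ according to the circular order of their channels around the root, and show that their turning vertices are totally nested inside $cs^\ast$. Concretely, whenever the two path-edges incident to the turning vertex $v$ of a channel $ch$ cut the channel $ch'$ of a later extended formation, the containment property forces the turning vertex $v'$ of $ch'$ to lie inside the convex hull of the two edges incident to $v$. I would verify that, because all connections across $cs$ are $1$-side connections and all turning vertices lie in the single segment $cs^\ast$, this cutting relation holds along the whole chain, so the convex hulls of the edge-pairs form a strictly decreasing sequence of nested wedges. The minimality built into the definition of the turning vertex (no other inner-to-outer path of the same extended formation uses a channel segment whose elongation it meets) is what guarantees that the chosen vertex is extremal, so that the nesting cannot be short-circuited.

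Having produced a nesting of turning vertices of depth equal to $|\mathcal{E}|$ inside the single segment $cs^\ast$, I would then derive the crossing. Each turning vertex must still connect the inner and the outer area of $cs$, so the edges incident to the innermost turning vertex have to leave the surrounding wedges and reach both areas; since the only available connections are $1$-side and all wedges open toward $cs$, these edges are trapped and must cross the edges of a shallower turning vertex, producing a crossing in \P. Equivalently, one may push the nesting until four cells of a common joint are forced into four pairwise line-separated regions and invoke Property~\ref{prop:four_sep_areas_PS} to obtain a crossing in \T, contradicting planarity either way.

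I expect the main obstacle to be the second step: establishing that the nesting chain is \emph{unbroken}, i.e. that the edges of each turning vertex really do cut the channel of the next one so that the containment property propagates transitively through the whole sub-collection $\mathcal{E}$. This requires a careful case analysis of how $1$-side connections sit relative to the triangle $t$ and of the extremal role of the turning vertex, and it is where the configuration-fixing pigeonhole step and the minimality clause of the turning-vertex definition must be combined most delicately.
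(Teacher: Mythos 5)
Your argument has a genuine gap at its core, and it sits exactly where you do \emph{not} expect it: even if step two succeeded and you obtained a totally nested chain of turning vertices of depth $|\mathcal{E}|$ inside $cs^\ast$, no contradiction follows. An arbitrarily deep family of nested wedges inside a single channel segment is geometrically realizable; nesting by itself exhausts no finite resource here, and the claim that the innermost turning vertex's edges are ``trapped'' and must cross a shallower one is unsupported --- each extended formation can route its inner-to-outer connection entirely within the wedge it is nested in. The paper's proof does not derive the contradiction from nesting depth at all; it derives it from the \emph{defect} structure of the sequence of extended formations, which your proposal never invokes. In the paper, a defect at $H_2$ first forces the coexistence of both configurations (the direct connection from $EF(H_1)$ to $EF(H_3)$ crosses channel $H_2$ and blocks any later $EF(H_2)$ from being in $\alpha$), so your pigeonhole step, which restricts to a single configuration, discards precisely the mixed $\alpha$/$\beta$ structure the argument needs. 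The paper then shows that the triangles must be ordered coherently with the order of the channels --- the key observation being that a triangle of $EF(H_k)$ nested inside one of $EF(H_s)$ nested inside one of $EF'(H_k)$, with $k<s$, already forces a channel-segment change; in other words, \emph{out-of-order} nesting is the contradiction trigger, not the in-order nesting you are trying to build. Finally, a later defect at $H_2$ makes the $EF(H_1)$--$EF(H_3)$ connection block exactly the region the following $EF(H_2)$ would need to respect that ordering, so after three full repetitions some extended formation must move its turning vertex to a different channel segment.

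Two further points confirm the route cannot be repaired as stated. The ``unbroken chain'' step you flag as the main obstacle is indeed unprovable: the containment property stated before the lemma applies only when the edges incident to $v$ actually cut the channel $ch'$ \emph{and} the two connections are in the same configuration, and since extended formations on the same channel recur throughout $SEF$, nothing forces each later turning vertex to be cut by the wedge of the previous one; ordering the channels circularly around the root does not make the cutting relation transitive. And your fallback ending via Property~\ref{prop:four_sep_areas_PS} does not go through either: that property requires four pairwise line-separated cells attached to a \emph{single} joint within one formation, whereas your nested turning vertices belong to extended formations on \emph{different} channels, hence different joints, so the nesting you construct never produces the configuration that property speaks about.
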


\begin{lemma}\label{lem:prop3-nodrawing}
In a situation as described in Proposition~\ref{prop:triangle}, \T and \P do not admit any geometric simultaneous embedding.
\end{lemma}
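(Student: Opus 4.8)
The plan is to combine Lemma~\ref{lem:one_channel_segment} with the deep nesting guaranteed by Lemma~\ref{lemma:nest_independent} and with the convex-hull containment property stated just after Proposition~\ref{prop:triangle}, in order to force a crossing of either tree- or path-edges. Lemma~\ref{lem:one_channel_segment} already disposes of the degenerate situation in which all turning vertices lie in one channel segment; the residual task is to show that as soon as the turning vertices occupy two or more channel segments, the inner-to-outer connections required by Proposition~\ref{prop:triangle} cannot all be drawn planar.

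First I would record, for each extended formation $EF$ of the sequence, the way it connects across the split of $cs$ induced by $t$. By the hypothesis of Proposition~\ref{prop:triangle} every $EF$ has vertices in both the inner and the outer area, and since only $1$-side connections are available, each such connection is realized through a turning vertex lying in some channel segment, in one of the two configurations $\alpha$ or $\beta$. The two edges incident to the turning vertex, together with the splitting side of $t$, bound a triangle open on a side that is nested with $t$. By Property~\ref{prop:three_bends} a channel has at most four channel segments, so each turning vertex carries a label from a set of size at most $4\cdot 2=8$.

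Next I would transport the $k$-nesting with $k\geq 6$ of Lemma~\ref{lemma:nest_independent} to the turning vertices, obtaining a long chain of triangles that are all nested with $t$ and all straddle the same splitting side of $cs$. Within a fixed channel segment and a fixed configuration the convex-hull containment property forces these triangles to be linearly nested, the edges of a later turning vertex lying inside the convex hull of the edges of an earlier one; hence, after passing to a subsequence, the chain may be assumed to live, level by level, inside a single channel segment. On the other hand, Lemma~\ref{lem:one_channel_segment} guarantees at least one extended formation whose turning vertex lies in a \emph{different} channel segment, and whose triangle therefore straddles the same splitting side of $cs$ transversally to the chain.

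The final and hardest step is the geometric crossing argument. Both the chain and the transversal triangle separate the inner from the outer area of $cs$, but their tips lie in distinct channel segments, so I would argue that the transversal triangle must interleave with the triangles of the chain along the common separating chord: entering the chord from the ``wrong'' channel segment, it can neither be nested inside nor contain an entire deep level of the chain, whence one of its two edges crosses a side of some chain triangle. A crossing of such sides is a crossing of a path-edge or of a tree-edge, which is the desired contradiction. A short case analysis over the configuration labels $\alpha,\beta$ (whether each triangle encloses $u$ or $w$) and over whether the two channel segments reach the split from the same or from opposite sides completes the proof. I expect this interleaving-forces-a-crossing argument, together with the bookkeeping needed to guarantee that the chain is long enough while a transversal turning vertex survives, to be the main obstacle; once settled, it shows that \T and \P admit no geometric simultaneous embedding, proving the lemma and, with Lemma~\ref{lem:one_channel_segment}, Proposition~\ref{prop:triangle}.
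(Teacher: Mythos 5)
There is a genuine gap, and it sits exactly where you predicted: the ``interleaving-forces-a-crossing'' step. Your claim that a triangle whose turning vertex lies in a \emph{different} channel segment ``can neither be nested inside nor contain an entire deep level of the chain, whence one of its two edges crosses a side of some chain triangle'' is not true as stated. Placing the turning vertex in another channel segment is precisely the legal escape move in this configuration: a triangle with tip in $cs'$ can planarly enclose, or be enclosed by, triangles with tips in $cs''$, since the triangles are open on one side and the connection through a turning vertex only has to go around one of $u$ or $w$ (configuration $\alpha$ or $\beta$). This is exactly the situation depicted in the paper's treatment of Lemma~\ref{lem:one_channel_segment} (Fig.~\ref{fig:nested-triangles-prop-3}(a)), where a formation squeezed between two nested triangles \emph{avoids} a crossing by switching to a different channel segment. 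So a static geometric argument from ``turning vertices occupy at least two channel segments'' cannot by itself produce a crossing; the adversary can keep switching segments. A secondary problem is your transport of the $k$-nesting of Lemma~\ref{lemma:nest_independent} to the turning vertices: the paper's proof of this lemma does not invoke Lemma~\ref{lemma:nest_independent} at all, and the transfer of the nesting from formations to turning-vertex triangles is itself nontrivial and unjustified in your sketch.

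The paper closes the gap with a dynamic argument that runs in the opposite direction from yours. It considers two consecutive extended formations $EF(H_x),EF(H_1)$ of the sequence, observes that their connection cuts all channels $2,\ldots,x-1$ in $cs_1$ or $cs_2$, and uses the continuity of \P: since this connection recurs periodically, the path forms a spiral whose direction, once chosen, cannot reverse. Consequently, if one inner-to-outer connection changes channel segment, all subsequent connections of that type must change as well; and each defect at a channel $s+1$ forces the connection between $EF(s)$ and $EF(s+2)$ to change segment, thereby blocking future changes for the pair $(s,s+1)$. After a whole repetition of $SEF$ containing a defect at every channel, the changes exhaust themselves and \emph{all} extended formations are forced to place their turning vertices in the same channel segment --- which is exactly what Lemma~\ref{lem:one_channel_segment} forbids. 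In short, the paper uses the defect structure to funnel the drawing into the hypothesis of Lemma~\ref{lem:one_channel_segment}, whereas you tried to use Lemma~\ref{lem:one_channel_segment} to exit that hypothesis and finish with a direct crossing argument; the latter route fails because the multi-segment configurations you must rule out are genuinely drawable in isolation.
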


Based on the property given by Proposition~\ref{prop:triangle}, we present the second part of the proof, in which we show that having two intersections $I_{(a,b)}$ and $I_{(c,d)}$ does not help if $I_{(a,b)}$ and $I_{(c,d)}$ are not disjoint.

\begin{proposition}\label{prop:non-disjoint-intersections}
If there exists no pair of disjoint $2$-side connections, then \T and \P do not admit any geometric simultaneous embedding.
\end{proposition}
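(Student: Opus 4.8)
The plan is to reduce this case to the situation already handled by Proposition~\ref{prop:triangle}, using the deep nesting guaranteed by Lemma~\ref{lemma:nest_independent} as the driving structure. Since we are assuming that at least one $2$-side connection is available (otherwise Proposition~\ref{prop:only-1-side} already settles the claim), I would first fix the intersection $I_{(a,b)}$ along which the $k$-nesting, with $k\geq 6$, of the sequence of extended formations is realized, choosing the representation so that $a\leq 2$. Lemma~\ref{lem:nesting-bending-area} then supplies a triangle $t$ inside the nesting that separates some of the triangles nesting with it from the adjacent bending area. This triangle is open on a side and splits the channel segment $cs$ containing its apex into an \emph{inner} and an \emph{outer} area; the required nesting depth forces every extended formation of the sequence to place vertices in both parts, so each of them must connect the two parts.

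Next I would distinguish according to how an extended formation bridges the inner and the outer area. If the bridge is a $1$-side connection, then the hypotheses of Proposition~\ref{prop:triangle} hold verbatim: the triangle $t$ is open on a side, every extended formation has vertices in both parts of $cs$, and the connection creates a triangle open on a side that is nested with $t$. Proposition~\ref{prop:triangle} then produces a crossing in \T or in \P, which is the desired contradiction.

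The substantive case is when an extended formation tries to escape $t$ through a second $2$-side connection $I_{(c,d)}$. Here I would invoke the standing hypothesis: since no pair of disjoint $2$-side connections exists and $I_{(a,b)}$ is already in use, $I_{(c,d)}$ must be non-disjoint with $I_{(a,b)}$, that is, it fails the pattern $a,d\in\{1,2\}$, $b,c\in\{3,4\}$. I would then argue, by going through the possible non-disjoint pairs, that a non-disjoint intersection aligns the relevant channel-segment elongations so that any path exploiting $I_{(c,d)}$ to connect the inner and the outer area of $cs$ still creates a triangle open on a side that is nested with $t$. Lemma~\ref{lem:2-different-shapes} provides the consistency across the intervening channels that prevents combining such non-disjoint intersections into an actual detour around $t$. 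Consequently the configuration again meets the hypotheses of Proposition~\ref{prop:triangle}, and we obtain a crossing.

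The main obstacle I anticipate is precisely this last case: making rigorous the claim that a non-disjoint $2$-side connection cannot be used to leave the nesting. This is exactly where the crossing pattern encoded in the definition of \emph{disjoint} matters, since only a genuinely disjoint pair would give the geometric freedom to route around $t$, and that freedom is excluded by assumption. I would therefore spend the bulk of the argument on a careful, intersection-by-intersection verification that every admissible non-disjoint $I_{(c,d)}$ keeps the bridging triangle nested with $t$ and open on a side, so that Proposition~\ref{prop:triangle} keeps applying; all remaining steps are direct appeals to the lemmata cited above.
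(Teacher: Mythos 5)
There is a genuine gap, and it sits exactly where you flagged your ``main obstacle'': your plan funnels every case into Proposition~\ref{prop:triangle}, but the paper's proof of this proposition does not, and cannot, do that uniformly. The paper splits the statement into two lemmata after observing that the cases involving segments $2$ and $4$ reduce to $I_{(1,3)}$ and $I_{(3,1)}$. Your ``choose the representation so that $a\leq 2$'' silently discards the shape $I_{(3,1)}$, which is not a relabeling of $I_{(1,3)}$: the elongation of $cs_3$ hitting $cs_1$ is a geometrically different configuration, Lemma~\ref{lem:nesting-bending-area} is only stated for nestings on $I_{(a,b)}$ with $a\leq 2$, and the paper needs a separate argument (Lemma~\ref{lem:intersects_three_one}) that first forces the deep nesting of Lemma~\ref{lemma:nest_independent} to occur at $I_{(3,1)}$ via convex-hull case analysis, and only then invokes the machinery behind Proposition~\ref{prop:triangle} (through Lemma~\ref{lem:prop3-nodrawing}), after establishing that all inner-to-outer connections are $1$-side because the relevant channel segments lie on the convex hull.

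The deeper problem is your claimed verification that every non-disjoint $I_{(c,d)}$ ``keeps the bridging triangle nested with $t$,'' so that Proposition~\ref{prop:triangle} keeps applying. This is not true as stated, and it is precisely why the paper's Lemma~\ref{lem:intersects_one_three} takes a completely different route. With $I_{(1,3)}$ present, the admissible non-disjoint companions are $I_{(2,4)}$ and $I_{(1,4)}$, and these provide genuine $2$-side connections through which an extended formation can bridge the two parts of the split channel segment; the hypothesis of Proposition~\ref{prop:triangle} --- that the \emph{only} way to connect the two parts is a $1$-side connection --- then fails outright. The paper instead shows (using Property~\ref{prop:CS_1_2}, the placement constraints on $1$-vertices and stabilizers, and a case analysis of $I_{(2,4)}^h$ versus $I_{(2,4)}^l$) that all these intersections together can support at most \emph{one} nesting of depth at least $6$, and then runs a defect/spiral counting argument over repetitions of the sequence to conclude that one is forced into either seven formations separated on $cs_1$ and $cs_2$ --- yielding a region-level nonplanar tree via Property~\ref{prop:four_sep_areas_PS} and Lemma~\ref{n-independent nestings} --- or a second nesting of depth $6$ that no remaining intersection can support. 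None of this counting machinery appears in your proposal, and the intersection-by-intersection check you defer to would have to reproduce it rather than merely re-verify the hypotheses of Proposition~\ref{prop:triangle}.
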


Observe that, in this setting, it is sufficient to restrict the analysis to cases $I_{(1,3)}$ and $I_{(3,1)}$, since the cases involving $2$ and $4$ can be reduced to them.

\begin{lemma}\label{lem:intersects_one_three}
If a shape contains an intersection $I_{(1,3)}$ and does not contain any other intersection that is disjoint with $I_{(1,3)}$, then \T and \P do not admit any geometric simultaneous embedding.
\end{lemma}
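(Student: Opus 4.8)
The plan is to reduce the configuration to the one treated in Proposition~\ref{prop:triangle}, exploiting the deep nesting of Lemma~\ref{lemma:nest_independent} together with the separation property of Lemma~\ref{lem:nesting-bending-area}. Throughout I would assume, for contradiction, that a geometric simultaneous embedding of \T and \P exists, and work with the sequence of extended formations attached to the chosen subset of joints.

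First I would invoke Lemma~\ref{lemma:nest_independent} to obtain, inside each extended formation, a $k$-nesting with $k \geq 6$. Since the shape contains the $2$-side connection $I_{(1,3)}$ and no intersection disjoint with it, this nesting must be realized through $I_{(1,3)}$; in particular it lies on an intersection $I_{(a,b)}$ with $a = 1 \leq 2$, so Lemma~\ref{lem:nesting-bending-area} applies and produces a triangle $t$ in the nesting that separates some of the triangles nested with $t$ from the bending area $b(1,2)$. I would then note that $t$ is open on one side and splits the channel segment $cs_1$ into an inner and an outer area. Because the nesting develops in $b(1,2)$ and every extended formation must place vertices in that bending area, every extended formation has vertices in both areas, so the splitting hypothesis of Proposition~\ref{prop:triangle} is met.

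It then remains to show that the only way to connect the two areas is a $1$-side connection creating a triangle nested with $t$, and this is where I expect the main difficulty to lie. Reconnecting the inner and outer areas by going around $t$ would require a $2$-side connection in the complementary direction, namely one of $I_{(3,1)}$, $I_{(3,2)}$, $I_{(4,1)}$, $I_{(4,2)}$, each of which is disjoint with $I_{(1,3)}$ and hence excluded by hypothesis. The delicate part is excluding such an escape not only for the channel carrying the nesting but uniformly: I would use Lemma~\ref{lem:2-different-shapes} to propagate the restriction across the channels lying between the two endpoints of the nesting, so that no intermediate channel can supply a disjoint intersection that would let the two areas be joined around $t$.

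Consequently only $1$-side connections remain to join the inner and outer areas, and each of them creates a triangle open on a side that is nested with $t$. This is exactly the hypothesis of Proposition~\ref{prop:triangle}, whose conclusion (through Lemmata~\ref{lem:one_channel_segment} and~\ref{lem:prop3-nodrawing}) contradicts the assumed embedding. Hence \T and \P admit no geometric simultaneous embedding.
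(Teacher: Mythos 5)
Your reduction to Proposition~\ref{prop:triangle} breaks at the step where you exclude all $2$-side escapes around the triangle $t$. You claim that reconnecting the inner and outer areas would require one of $I_{(3,1)}$, $I_{(3,2)}$, $I_{(4,1)}$, $I_{(4,2)}$, each disjoint with $I_{(1,3)}$ and hence forbidden. But by the paper's definition of disjointness, $I_{(2,4)}$ and $I_{(1,4)}$ are \emph{not} disjoint with $I_{(1,3)}$ --- the paper's own proof opens precisely by observing that these two intersections ``could occur at the same time as $I_{(1,3)}$''. They are therefore permitted by the hypothesis of the lemma, and each furnishes a genuine $2$-side connection (into $cs_4$) through which an extended formation could pass from the inner to the outer area without creating a triangle nested with $t$. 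Consequently the precondition of Proposition~\ref{prop:triangle} --- that the \emph{only} way to join the two parts is a $1$-side connection whose triangle nests with $t$ --- is not established. The appeal to Lemma~\ref{lem:2-different-shapes} does not repair this: that lemma only constrains channels lying between two channels with the \emph{same} intersections; it does not exclude $I_{(2,4)}$ or $I_{(1,4)}$ coexisting with $I_{(1,3)}$ on the very channels carrying the nesting. A related soft spot is your unargued assertion that the deep nesting of Lemma~\ref{lemma:nest_independent} ``must be realized through $I_{(1,3)}$'': a priori it could sit at $I_{(2,4)}$ or $I_{(1,4)}$ instead (though since all three have first index at most $2$, Lemma~\ref{lem:nesting-bending-area} still applies, so this part is repairable).

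The paper's proof is organized around exactly the difficulty you skipped: it performs a case analysis on $I_{(2,4)}^h$ and $I_{(2,4)}^l$ (together with $I_{(1,4)}$), showing that any connection to $cs_4$ can be traded for a connection to $cs_2$ or $cs_3$, so the candidate nesting locations merge and at most one nesting of depth at least $6$ can exist. It then finishes not via Proposition~\ref{prop:triangle} at all, but by a direct counting argument: using Property~\ref{prop:four_sep_areas_PS} to bound monotonically separated cells in $cs_1$, placing defects at odd and even channels on opposite sides of the nesting, and invoking Lemma~\ref{lemma:k-nesting} to cap the nesting depth, it concludes that after $7\cdot 6\cdot 2$ repetitions one obtains either $7$ formations separated on $cs_1$ and $cs_2$ (yielding a region-level nonplanar tree via Lemma~\ref{n-independent nestings}) or a second nesting of depth $6$, neither of which is drawable without a further supporting intersection. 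Your triangle-based reduction is essentially the strategy the paper reserves for Lemma~\ref{lem:intersects_three_one} and Lemma~\ref{lem:cs_two_convex_hull}, where it can actually verify that all available escapes are only $1$-side; to salvage it here you would first have to prove the same merging step for the permitted intersections $I_{(2,4)}$ and $I_{(1,4)}$, i.e., that they can neither host a separate deep nesting nor supply an inner-to-outer connection avoiding a triangle nested with $t$.
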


\begin{lemma}\label{lem:intersects_three_one}
If there exists a sequence of extended formation in any shape containing an intersection $I_{(3,1)}$, then \T and \P do not admit any geometric simultaneous embedding.
\end{lemma}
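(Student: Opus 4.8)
The plan is to argue by contradiction, funneling the situation into the hypotheses of Proposition~\ref{prop:triangle} and then invoking Lemma~\ref{lem:prop3-nodrawing}, exactly as in the companion Lemma~\ref{lem:intersects_one_three}. Assume a geometric simultaneous embedding of \T and \P exists and that a whole sequence of extended formations is drawn inside a channel whose shape realizes the intersection $I_{(3,1)}$. Since we are operating under Proposition~\ref{prop:non-disjoint-intersections}, we may additionally assume that no pair of disjoint $2$-side connections is present, so that $I_{(3,1)}$ is, up to non-disjoint companions, the only $2$-side connection available for routing. By Lemma~\ref{lemma:nest_independent} each extended formation contains a $k$-nesting with $k\geq 6$, and the whole argument is aimed at showing that the single fold provided by $I_{(3,1)}$ cannot accommodate a nesting of this depth planarly.

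The first real step is to produce the separating triangle $t$ required by Proposition~\ref{prop:triangle}. Here lies the genuine difference from the $I_{(1,3)}$ case: in $I_{(1,3)}$ the active segment has low index and Lemma~\ref{lem:nesting-bending-area} applies verbatim, whereas in $I_{(3,1)}$ the active segment is $cs_3$ with $a=3>2$, so Lemma~\ref{lem:nesting-bending-area} cannot be invoked directly. Instead I would exploit the fact that the elongation of $cs_3$ cuts $cs_1$, thereby splitting $cs_1$ into an inner area and an outer area, and re-derive the separating triangle around $cs_1$ by a symmetric version of the bending-area argument, now anchored at $b(1,2)$. The point to establish is that the $k$-nesting cannot be confined to a single part of $cs_1$, so every extended formation of the sequence is forced to place vertices in both the inner and the outer area, with a triangle $t$ of the nesting open on the side splitting $cs_1$ and separating some of its nested triangles from that bending area.

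With $t$ in hand, I would show that every connection of an extended formation between the inner and the outer area must be a $1$-side connection nested with $t$. The $2$-side connection $I_{(3,1)}$ is precisely what produces the fold creating $t$, so it is consumed by the separation and cannot in addition serve as a free crossing of $cs_1$; and since there is no disjoint $2$-side connection, any surviving companion intersection is non-disjoint with $I_{(3,1)}$ and, by Lemma~\ref{lem:2-different-shapes}, cannot provide an independent fold crossing the open side of $t$. Hence the only escape for an extended formation is a $1$-side connection at another channel segment, and by the turning-vertex analysis preceding Lemma~\ref{lem:one_channel_segment} each such connection creates a triangle open on a side and nested with $t$. All hypotheses of Proposition~\ref{prop:triangle} are then satisfied, and Lemma~\ref{lem:prop3-nodrawing} yields the required crossing in \T or \P, contradicting the assumed embedding.

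The main obstacle I anticipate is the second step, namely re-obtaining the splitting triangle $t$ for an active segment of index $3$, where Lemma~\ref{lem:nesting-bending-area} is not available off the shelf. I would have to verify carefully that the tight $cs_3$-into-$cs_1$ fold still forces each extended formation to straddle $cs_1$ and that the separating triangle genuinely has the orientation demanded by Proposition~\ref{prop:triangle}. A secondary difficulty is bookkeeping the reductions promised in the remark after Proposition~\ref{prop:non-disjoint-intersections}: I must confirm that the companion intersections involving $cs_2$ and $cs_4$ collapse to the present configuration and that none of them opens an alternative $2$-side route across $cs_1$ that is not nested with $t$. Establishing that every escaping path of an extended formation is truly a $1$-side connection nested with $t$, rather than a disguised $2$-side one, is where the bulk of the case analysis will concentrate.
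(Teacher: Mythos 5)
Your end-game (reducing to the setting of Proposition~\ref{prop:triangle} and invoking Lemma~\ref{lem:prop3-nodrawing}) matches the paper's, but your opening assumption creates a genuine gap: you restrict to shapes with no pair of disjoint $2$-side connections, while the lemma is stated for \emph{any} shape containing $I_{(3,1)}$, and the paper explicitly remarks right after it that the claim holds even when a disjoint companion such as $I_{(1,4)}$ or $I_{(2,4)}$ is present. That surplus strength is not optional: Proposition~\ref{prop:disjoint} relies on it to restrict its own case analysis to configurations pairing $I_{(1,3)}$ with $I_{(4,\{1,2\})}$, so under your restriction the disjoint cases involving $I_{(3,1)}$ would be covered nowhere. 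Relatedly, you skip the first half of the paper's proof, which establishes that the required nesting of depth $7\cdot 12$ must be localized \emph{at} $I_{(3,1)}$ (or $I_{(4,1)}$) rather than at a companion intersection $I_{(1,4)}$, $I_{(2,4)}$, or $I_{(4,2)}$; the paper does this by a case analysis on which of $cs_1$, $cs_2$, $cs_4$ lies on the convex hull, bounding the nesting possible at the other intersections via an argument analogous to Lemma~\ref{lem:intersects_one_three}. Your substitute --- citing Lemma~\ref{lem:2-different-shapes} to claim a non-disjoint companion ``cannot provide an independent fold'' --- misapplies that lemma, which only constrains the intersections of channels lying between two channels with equal intersections and says nothing about where a nesting can be performed.

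Your construction of the separating triangle is also set up on the wrong segment. You propose to split $cs_1$, anchored at $b(1,2)$, by a ``symmetric'' re-derivation of Lemma~\ref{lem:nesting-bending-area} --- which you yourself flag as the main obstacle and leave unverified. The paper instead splits $cs_3$ (or $cs_4$): once the nesting is pinned at $I_{(3,1)}$, every extended formation has vertices in the bending area; two consecutive path edges form a triangle with $cs_1$, $cs_2$, and $cs_3$, and the extremal such triangle separates the nesting area from the bending area, forcing each extended formation to connect the inner to the outer area of $cs_3$ (or $cs_4$). The remaining work is then to check that all such connections are $1$-side: either $cs_1$ is on the convex hull (only $1$-side connections), or $I_{(1,4)}$ exists but $cs_4$ is then on the convex hull, so connections through $cs_4$ are $1$-side as well --- at which point Lemma~\ref{lem:prop3-nodrawing} applies. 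Your plan never produces the triangle on $cs_3$, and your $cs_1$-based variant does not obviously satisfy the hypothesis of Proposition~\ref{prop:triangle} that every extended formation has vertices in both parts of the split segment; as it stands, the proposal would need the paper's localization step and its triangle construction to be completed.
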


Observe that, in the latter lemma, we proved a property that is stronger than the one stated in Proposition~\ref{prop:non-disjoint-intersections}. In fact, we proved that a simultaneous embedding cannot be obtained in any shape containing an intersection $I_{(3,1)}$, even if a second intersection that is disjoint with $I_{(3,1)}$ is present.

Finally, in the third part of the proof, we tackle the general case where two disjoint intersections exist.

\begin{proposition}\label{prop:disjoint}
If there exists two disjoint $2$-side connections, then \T and \P do not admit any geometric simultaneous embedding.
\end{proposition}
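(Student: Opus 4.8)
The plan is to treat Proposition~\ref{prop:disjoint} as the capstone case, reducing it wherever possible to the machinery already developed for the previous two propositions. First I would enumerate the disjoint pairs $I_{(a,b)},I_{(c,d)}$ with $a,d\in\{1,2\}$ and $b,c\in\{3,4\}$, observing that one member of every such pair is necessarily a \emph{backward} intersection, i.e.\ one reaching $cs_1$ or $cs_2$ from $cs_3$ or $cs_4$. Using the reflection symmetry of a channel (interchanging the two boundary paths coming from $j_{i-1}$ and $j_{i+1}$) together with the reduction that lets cases involving $2$ and $4$ be mapped to cases involving $1$ and $3$, I would collapse the $4\times 4$ table of disjoint pairs to a short list of canonical shapes. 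Any canonical shape whose backward intersection is of type $I_{(3,1)}$ is then dispatched immediately by Lemma~\ref{lem:intersects_three_one}, which already forbids an embedding even in the presence of a disjoint partner.

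The substantive case is the canonical shape in which the two disjoint connections offer genuinely independent escape routes, say an $I_{(1,3)}$-type route reaching $cs_3$ and an $I_{(4,2)}$-type route reaching $cs_2$ from the opposite side. Here I would re-invoke the deep nesting guaranteed by Lemma~\ref{lemma:nest_independent}: every extended formation contains a $k$-nesting with $k\geq 6$. Applying Lemma~\ref{lem:nesting-bending-area} to the relevant intersection produces a triangle $t$ separating part of the nesting from the bending area, placing us exactly in the configuration of Proposition~\ref{prop:triangle}. The key quantitative step is then a pigeonhole argument: each of the two disjoint intersections can serve as the $2$-side connection for only a bounded number of the nested triangles, since by Lemma~\ref{lem:one_channel_segment} not all extended formations can route their turning vertices through the same channel segment, while Lemma~\ref{lem:2-different-shapes} forbids an intervening channel from introducing a third disjoint intersection that could absorb the remainder. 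Because two bounded contributions cannot accommodate a nesting of depth $6$, at least one connection inside the nesting must be realized by a $1$-side connection, and Proposition~\ref{prop:triangle} then yields a crossing in either \T or \P.

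The main obstacle I anticipate is the bookkeeping of the second paragraph: proving that the two disjoint intersections act as truly separate resources whose combined capacity is strictly smaller than the nesting depth. This requires showing that a single turning vertex cannot simultaneously exploit both disjoint connections, and that the consistency enforced by Lemma~\ref{lem:2-different-shapes} prevents the shape from "upgrading" to three or more disjoint connections along the sequence of extended formations. Once this separation is secured, the conclusion follows by combining it with Proposition~\ref{prop:triangle} exactly as in the non-disjoint analysis; together with Propositions~\ref{prop:only-1-side} and~\ref{prop:non-disjoint-intersections}, which cover the remaining shapes, this establishes that \T and \P admit no geometric simultaneous embedding.
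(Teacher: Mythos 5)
Your opening reduction coincides with the paper's: shapes containing $I_{(3,1)}$ are dispatched by Lemma~\ref{lem:intersects_three_one}, which was deliberately proved in the stronger form that tolerates a disjoint partner intersection, leaving the eight shapes combining $I_{(1,3)}^{h,l}$ with $I_{(4,\{1,2\})}^{h,l}$. Moreover, for the \emph{low} cases $I_{(1,3)}^l$ your plan of funneling everything through Proposition~\ref{prop:triangle} is essentially the paper's Lemma~\ref{lem:cs_two_convex_hull}: there $cs_2$ lies on the convex hull, and by Property~\ref{prop:CS_1_2} the connections through $cs_1$ and $cs_3$ are genuinely $1$-side, so the precondition of Proposition~\ref{prop:triangle} is actually satisfied and the reduction is legitimate.

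The gap is in the \emph{high} cases $I_{(1,3)}^h\,I_{(4,\{1,2\})}^{h,l}$. Your pigeonhole step bounds the ``capacity'' of each disjoint intersection by invoking Lemma~\ref{lem:one_channel_segment}, but that lemma --- like the entire turning-vertex machinery --- is stated and proved only \emph{in a situation as described in Proposition~\ref{prop:triangle}}, i.e., under the hypothesis that the only way to connect the two parts of the split channel segment is a $1$-side connection. A disjoint $2$-side connection is exactly the escape route that hypothesis excludes: a path exploiting it crosses a channel segment's elongation and hence has no turning vertex in the defined sense (the definition explicitly requires that no subpath of the extended formation intersects a channel segment or its elongation), so no lemma in the paper bounds how many extended formations can route through it, and your claim that ``two bounded contributions cannot accommodate a nesting of depth $6$'' has no quantitative support --- you flag this yourself as the anticipated obstacle, but it is the entire content of the case. (Lemma~\ref{lem:2-different-shapes} does not repair this either; it only forbids an intervening channel from having an intersection disjoint from those of the enclosing channels, not the reuse of the given $I_{(4,\cdot)}$ connection by arbitrarily many nested triangles.) The paper closes precisely this hole with different machinery: it defines double cuts, shows that in shape $I_{(1,3)}^h\,I_{(4,2)}^h$ every extended formation must create double cuts into a bending area (Lemma~\ref{lem:double_cuts_13}), that the double defects inside a sequence of extended formations force an ordered set $(1,2,\ldots,5)^3$ of \emph{extremal} double cuts with respect to $b(2,3)$ or $b(3,4)$ (Lemma~\ref{lem:ordered-set-of-double-cuts-exists}), and that such an ordered set is undrawable (Lemma~\ref{lemma:no-ordered-double-cuts}, via Properties~\ref{prop:blocking-cut} and~\ref{prop:double-cut}). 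Nothing in your proposal substitutes for this argument, so the high-intersection shapes --- the genuinely new part of Proposition~\ref{prop:disjoint} --- remain unproved.
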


Since the cases involving intersection $I_{3,1}$ were already considered in Lemma~\ref{lem:intersects_three_one}, we only have to consider the eight different configurations where one intersection is $I_{(1,3)}$ and the other is one of $I_{(4,\{1,2\})}$. In the next three lemmata we cover the cases involving $I_{(1,3)}^h$ and in Lemma~\ref{lem:cs_two_convex_hull} the ones involving $I_{(1,3)}^l$.

Consider two consecutive channel segments $cs_i$ and $cs_{i+1}$ of a channel $c$ and let $e$ be a path-edge crossing the border of one of $cs_i$ and $cs_{i+1}$, say $cs_i$. We say that $e$ creates a \emph{double cut} at $c$ if the elongation of $e$ cuts $c$ in $cs_{i+1}$. A double cut is \emph{simple} if $e$ does not cross $cs_{i+1}$ (see Fig.~\ref{fig:double-cut}(a)) and \emph{non-simple} otherwise (see Fig.~\ref{fig:double-cut}(b)). Also, a double cut of an extended formation $EF$ is \emph{extremal} with respect to a bending area $b(x,x+1)$ if there exists no double cut of $EF$ that is closer than it to $b(x,x+1)$.

\begin{figure}[htb]
\begin{center}
\begin{tabular}{c c}
\mbox{\includegraphics[height=3.3cm]{./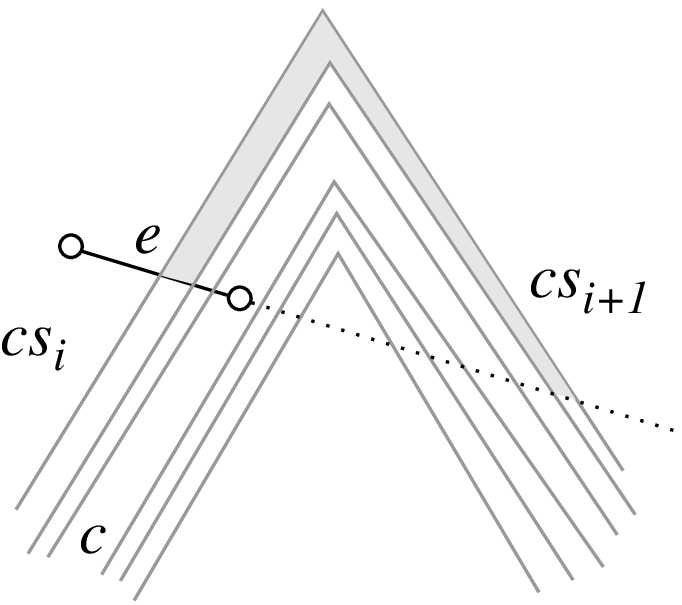}} \hspace{1.5cm} &
\mbox{\includegraphics[height=3.3cm]{./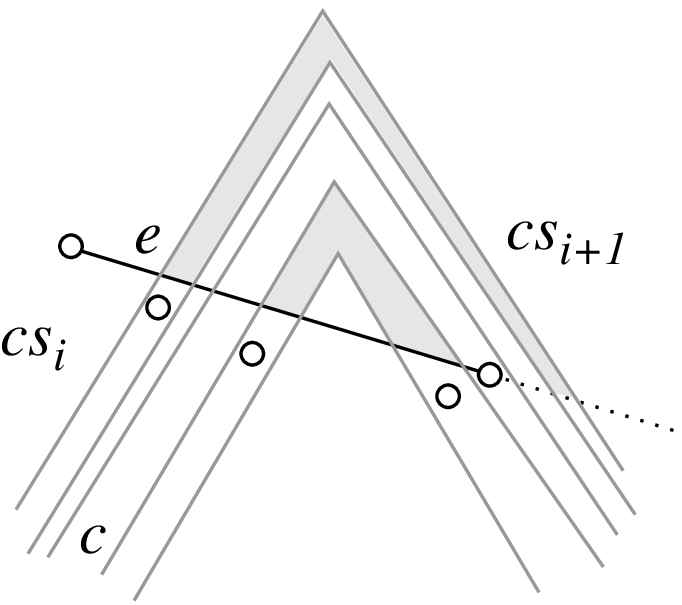}} \\
(a) \hspace{1.5cm} & (b)\\
\end{tabular}
\caption{(a) A simple double cut. (b) A non-simple double cut.}
\label{fig:double-cut}
\end{center}
\end{figure}

\begin{property}\label{prop:double-cut}
Any edge $e_k$ creating a double cut at a channel $k$ in channel segment $cs_i$ blocks visibility to the bending area $b(i,i+1)$ for a part of $cs_i$ in each channel $ch_h$ with $h>k$ (with $h<k$).
\end{property}

In the following lemma we show that a particular ordering of extremal double cuts in two consecutive channel segments leads to a non-planarity in \T or \P. Note that, any order of extremal double cuts corresponds to an order of the connections of a subset of extended formations to the bending area.

\begin{lemma}\label{lemma:no-ordered-double-cuts}
Let $cs_i$ and $cs_{i+1}$ be two consecutive channel segments. If there exists an ordered set $S:=(1,2,\ldots ,5)^3$ of extremal double cuts cutting $cs_i$ and $cs_{i+1}$ such that the order of the intersections of the double cuts with $cs_i$ (with $cs_{i+1}$) is coherent with the order of $S$, then \T and \P do not admit any geometric simultaneous embedding.
\end{lemma}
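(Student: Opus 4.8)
The plan is to assume for contradiction that \T and \P admit a geometric simultaneous embedding realizing the hypothesized set $S=(1,2,\ldots,5)^3$ of extremal double cuts, and to derive an unavoidable crossing in one of the two graphs. Throughout I would keep the two consecutive channel segments $cs_i$ and $cs_{i+1}$ fixed and track, for each of the five channel labels, the three double cuts carrying that label together with the parts of the bending area $b(i,i+1)$ they control.

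The first step is purely combinatorial-topological. Label the fifteen double cuts $d_1,\ldots,d_{15}$ in the order of $S$, so that the label of $d_j$ is the $j$-th entry of $S$ and, by hypothesis, the intersection points of the $d_j$ with the border of $cs_i$ and with $cs_{i+1}$ both appear in this same order. I would first observe that two double cuts whose endpoints occur in the same relative order on the border of $cs_i$ and on $cs_{i+1}$ cannot cross, so that together with the two borders they bound a region in which one of them is nested inside the other. Applying this to the three double cuts sharing a fixed label $k$ yields a chain of three pairwise nested cuts of channel $k$; applying it across labels shows that the whole family $d_1,\ldots,d_{15}$ is laminar. In particular, for the middle label the three nested double cuts partition the strip between $cs_i$ and $cs_{i+1}$ into nested regions.

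The heart of the argument is to feed this laminar structure into Property~\ref{prop:double-cut}. Each double cut at channel $k$ in $cs_i$ blocks visibility to $b(i,i+1)$ for a part of $cs_i$ in every channel lying on one fixed side of $k$. Since the labels sweep $1,\ldots,5$ three times in coherent order on both borders, I would argue that the blocked portion of the bending area grows monotonically through the three rounds, so that after the last round there is a channel whose access to $b(i,i+1)$ is cut off inside a triple of nested same-label double cuts. This channel carries part of the deep nesting guaranteed by Lemma~\ref{lemma:nest_independent}, and since every extended formation must nevertheless place vertices in $b(i,i+1)$ (as recalled before Lemma~\ref{lem:nesting-bending-area}), the path \P and the tree \T are forced to connect these separated vertices to the bending area. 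Such a connection must re-enter a region enclosed by a nested pair of double cuts; entering and leaving it crosses one of the blocking path edges, a crossing in \P, unless the connection is made by a tree edge leaving the channel, which is a crossing in \T. Either way we contradict planarity of the embedding.

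The step I expect to be the main obstacle is reconciling the two symmetric regimes of Property~\ref{prop:double-cut}, namely visibility blocked for $h>k$ versus for $h<k$: the coherent ordering fixes the blocking direction along each border, but a priori the two borders could impose opposite directions. This is exactly where the threefold repetition in $S$ is needed, since a single block or a double block can always be routed around, whereas a same-label triple sandwiches a complete round $1,\ldots,5$ and therefore traps the connection of a full extended formation regardless of which side is blocked. A secondary technicality is to treat simple and non-simple double cuts uniformly, because for a non-simple cut the edge itself, and not merely its elongation, reaches $cs_{i+1}$ and may contribute an additional crossing that has to be charged correctly; I would dispose of this as in the setup preceding Proposition~\ref{prop:triangle}, noting that an edge cutting a segment completely is only more restrictive than the corresponding incidence, so that it suffices to rule out the simple case.
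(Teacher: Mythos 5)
Your proposal has the right high-level skeleton (accumulate visibility blocking via Property~\ref{prop:double-cut} over the three rounds until some later extremal double cut has nowhere to place its end-vertex), but the step that carries the whole proof is asserted rather than proved. The claim that the blocked portion of $b(i,i+1)$ ``grows monotonically through the three rounds'' is precisely what needs an argument, and as a one-sided statement it is not true: a double cut at channel $k$ blocks visibility in only \emph{one} of $cs_i$ or $cs_{i+1}$, and consecutive cuts in $S$ may alternate between the two segments, so neither border accumulates blockage monotonically. The paper's proof replaces your monotonicity claim by an explicit case analysis on which segment specific cuts cross: after fixing that $e_1$ (channel $1$, first round) cuts $cs_i$, it branches on whether $e_3$ (channel $3$, first round) cuts $cs_i$ or $cs_{i+1}$; in each branch it exhibits a concrete later cut ($e_2$ or $e_5$ in the second round) that is forced to cut channel $4$ in \emph{both} segments, whereupon Property~\ref{prop:blocking-cut} leaves no admissible placement for the end-vertex of the extremal double cut $e_4$ at channel $4$ in the next round. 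The three repetitions in $S$ are consumed exactly by this forced alternation between $cs_i$ and $cs_{i+1}$ (and by the symmetric enclosing-order case, which needs all three rounds), not by your ``same-label triple sandwiches a complete round'' heuristic, which never identifies which cut gets trapped or why the adversary cannot keep switching sides. Your laminarity step is likewise shaky: a double cut at channel $k$ cuts the channels $k'>k$ (or $k'<k$) only \emph{once}, in one of the two segments, so the fifteen cuts are not chords of a common region and full laminarity neither follows nor is needed.

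The endgame also does not close. You treat the nested double-cut regions as sealed, but their boundaries consist partly of \emph{elongations} (for simple double cuts), which are not drawn edges and block nothing, and partly of tree edges, which path edges may cross freely; so ``entering and leaving crosses a blocking path edge'' does not follow, and neither does the dichotomy crossing-in-\P versus crossing-in-\T. Your proposed repair --- that a non-simple cut is only more restrictive, so it suffices to rule out the simple case --- runs backwards: the simple case is the \emph{weaker} one, and the paper handles it not by a restrictiveness comparison but by showing that path connectivity between the two separated pieces of the same channel forces a genuine auxiliary edge $e_3'$ between $cs_i$ and $cs_{i+1}$, which creates a real blocking cut with the same effect as a non-simple cut; this is where Property~\ref{prop:blocking-cut} earns its keep, and without it your regions have no teeth. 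Finally, note that the paper derives the contradiction purely from the ordered-set hypothesis together with Properties~\ref{prop:double-cut} and~\ref{prop:blocking-cut}; importing Lemma~\ref{lemma:nest_independent} and the placement of extended formations in $b(i,i+1)$, as you do, adds hypotheses the lemma does not grant and still leaves the trapping step unproved.
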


Then, we show that shape $I_{(1,3)}^h$ $I_{(4,2)}$ induces this order. To prove this, we first state the existence of double cuts in shape $I_{(1,3)}^h$ $I_{(4,2)}^h$. The existence of double cuts in shape $I_{(1,3)}^h$ $I_{(4,2)}^l$ can be easily seen.

\begin{lemma}\label{lem:double_cuts_13}
Each extended formation in shape $I_{(1,3)}^h$ $I_{(4,2)}^h$ creates double cuts in at least one bending area.
\end{lemma}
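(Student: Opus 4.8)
The plan is to fix a single extended formation $EF$ drawn in shape $I_{(1,3)}^h\,I_{(4,2)}^h$ and to exhibit, inside one of the bending areas $b(2,3)$ or $b(3,4)$, a path-edge of $EF$ that double-cuts a neighbouring channel. First I would recall why vertices of $EF$ are forced far out in the channel: by Lemma~\ref{lem:uap-tree} together with Property~\ref{prop:four_sep_areas_PS} the cells of each formation of $EF$ cannot all be pairwise separated, and by Lemma~\ref{lemma:2_channels} the joints under consideration form $x$-channels with $2\le x\le 3$; hence the cells of $EF$ must occupy at least three channel segments, and by the observations preceding Lemma~\ref{lem:nesting-bending-area} every extended formation must place vertices inside the bending area where its nesting takes place. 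The essential combinatorial tool is Property~\ref{prop:CS_1_2}: every path-edge keeps at least one endpoint in $cs_1\cup cs_2$. Consequently a vertex of $EF$ placed in $cs_3$ is reached by an edge emanating from $cs_1$, and a vertex placed in $cs_4$ by an edge from $cs_2$; since these are non-consecutive segments, such edges must realise the only two $2$-side connections available in this shape, namely $cs_1\!-\!cs_3$ and $cs_2\!-\!cs_4$.

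Next I would pass to the geometric core. Take an edge $e=(u,v)$ of $EF$ with $u\in cs_1$ and $v\in cs_3$ realising the high connection $I_{(1,3)}^h$, and follow it across a neighbouring channel $k$, whose shape is compatible with that of the channel of $EF$ by Lemma~\ref{lem:2-different-shapes}. Because the intersection is \emph{high}, the elongation of $cs_1$ meeting $cs_3$ issues from the bendpoint farther from the root, and I would use this to show that, where $e$ enters channel $k$, it crosses the border of some segment $cs_i^{(k)}$ while its elongation re-enters the consecutive segment $cs_{i+1}^{(k)}$ rather than leaving the channel. By definition this is exactly a double cut at channel $k$ in $cs_i^{(k)}$, i.e.\ a double cut localised at the bending area $b(i,i+1)$ (cf.\ Fig.~\ref{fig:double-cut} and Property~\ref{prop:double-cut}); whether it is simple or non-simple would then be read off from whether $e$ itself, and not merely its elongation, crosses $cs_{i+1}^{(k)}$. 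The symmetric argument applied to an edge realising $I_{(4,2)}^h$ from $cs_2$ to $cs_4$ yields the analogous double cut in the neighbouring bending area.

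The step I expect to be the main obstacle is precisely this geometric claim: that the \emph{high} starting bendpoint forces the elongation of the connecting edge to cut a \emph{consecutive} segment of the neighbouring channel, so that the crossing is a genuine double cut confined to one bending area, and that this cannot be evaded by re-routing the edge while keeping an endpoint in $cs_1\cup cs_2$. I would settle it by a direct analysis of the two walls bounding each channel segment and of their elongations, using that consecutive segments admit no $2$-side connection (so the connection is forced to span $cs_1$ to $cs_3$, respectively $cs_2$ to $cs_4$) and that in the both-high configuration both elongations open toward the same bend. Once this is established the conclusion is immediate: every extended formation in shape $I_{(1,3)}^h\,I_{(4,2)}^h$ creates a double cut in at least one bending area. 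I would finally remark that the companion shape $I_{(1,3)}^h\,I_{(4,2)}^l$ needs no such argument, since a low intersection exhibits its double cut directly, which is why only the both-high case requires the elongation analysis above.
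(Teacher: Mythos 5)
The step you yourself flag as ``the main obstacle'' is exactly where the proposal fails, and it cannot be settled the way you plan. Your strategy is local: show that a single connecting edge $e=(u,v)$ realising $I_{(1,3)}^h$ must, by itself, double-cut a neighbouring channel. But nothing forces this. A straight edge from $cs_1$ to $cs_3$ can cross every intervening channel exactly once, with its elongation leaving that channel instead of re-entering the consecutive segment; indeed, the paper's own proof takes precisely this as its contradiction hypothesis (``any edge $e$ connecting to $b(2,3)$ (to $b(3,4)$) is such that $e$ and its elongation cut each channel once''), i.e.\ it treats non-double-cutting connecting edges as geometrically realisable and derives the contradiction from a \emph{global} argument, not from the geometry of one edge. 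Your preliminary reduction is also not sound: Property~\ref{prop:CS_1_2} only guarantees one endpoint in $cs_1 \cup cs_2$, so a vertex in $cs_3$ can perfectly well be reached from $cs_2$ through the bending area $b(2,3)$ --- these are \emph{consecutive} segments, and such a connection needs no $2$-side connection at all. Hence your claim that vertices in $cs_3$ must be reached from $cs_1$ and vertices in $cs_4$ from $cs_2$ is not forced, and the edges on which your whole elongation analysis rests need not exist.

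What is missing is the actual mechanism of the paper, which hinges on the one feature that distinguishes the both-high shape: in $I_{(1,3)}^h$ $I_{(4,2)}^h$ \emph{both} bending areas $b(2,3)$ and $b(3,4)$ lie on the convex hull. The paper argues by contradiction: assuming no double cut in either bending area, an edge connecting to $b(2,3)$ in a channel $c_i$ forms, together with channel segments $3$ and $4$ of $c_i$, a triangle that encloses the bending areas $b(3,4)$ of all channels $c_h$ with $h<i$ (cutting those channels twice); any connection to such an enclosed bending area must then enter from outside the triangle, and because the bending areas sit on the convex hull this entry is possible only via a double cut --- contradicting the assumption. Your sketch uses neither the convex-hull property nor any interaction between the connections of \emph{different} channels to the two bending areas, so even if your wall-and-elongation analysis of a single edge were carried out in full, it would not establish the lemma: the adversary can always route each individual edge so as to avoid a double cut, and only the enclosure argument rules out doing so for all edges simultaneously. (Your closing remark that shape $I_{(1,3)}^h$ $I_{(4,2)}^l$ exhibits its double cuts directly does agree with the paper, but that case is outside the statement of this lemma.)
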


\begin{lemma}\label{lem:ordered-set-of-double-cuts-exists}
Every sequence of extending formations in shape $I_{(1,3)}^h$ $I_{(4,2)}^{h,l}$ contains an ordered set $(1,2,\ldots ,5)^3$ of extremal double cuts with respect to bending area either $b(2,3)$ or $b(3,4)$.
\end{lemma}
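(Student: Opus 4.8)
The plan is to build the ordered set from three ingredients: the existence of double cuts guaranteed by the previous lemma, a pigeonhole step reducing to a single bending area, and the shifted connection pattern of the sequence of extended formations that forces the extremal double cuts to interleave.

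First I would reduce to a single bending area. By Lemma~\ref{lem:double_cuts_13}, every extended formation in shape $I_{(1,3)}^h\,I_{(4,2)}^h$ creates a double cut in at least one of the bending areas $b(2,3)$ or $b(3,4)$; in shape $I_{(1,3)}^h\,I_{(4,2)}^l$ the same conclusion is immediate, as already noted before Lemma~\ref{lem:double_cuts_13}. A sequence of extended formations contains a large number of extended formations (the $SEF$ connects $110$ of them per tuple over its $12$ tuples and $120$ repetitions, hence on the order of a thousand survive the defects). Applying the pigeonhole principle to the two possible bending areas yields a subcollection of more than half of them that all create a double cut in one fixed bending area; assume without loss of generality that this is $b(2,3)$.

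Next I would pass to extremal double cuts and order them. For each extended formation in the subcollection keep only its double cut closest to $b(2,3)$, i.e. the extremal one. By Property~\ref{prop:double-cut}, a double cut at a channel $k$ in $cs_2$ blocks the visibility of $b(2,3)$ for all channels on one side of $k$; consequently the extremal double cuts belonging to distinct channels cannot be placed arbitrarily, but must meet $cs_2$ in an order consistent with the order of their channels around the root, and symmetrically along $cs_3$. This is precisely what turns the collection of extremal double cuts into a linearly ordered set along the two channel segments.

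Finally I would extract the repetition $(1,\dots,5)^3$ from the defect structure. The path connects the extended formations in the order $(H_1^*,\dots,H_{12}^*)^{(120)}$, skipping one tuple (or two consecutive tuples in the double-defect type) in each repetition, so that a fixed block of channels reappears in the same cyclic order in successive repetitions while the skipped positions shift. Combining this with the nesting guaranteed by Lemma~\ref{lemma:nest_independent}, the extremal double cuts coming from successive repetitions do not cluster but interleave: reading along $cs_2$ one encounters the channels of the block in their cyclic order, then the same channels again from the next repetition, and so on. Choosing five channels that survive the defects and three repetitions in which all five are present yields fifteen extremal double cuts whose order along $cs_2$ (and $cs_3$) is coherent with $(1,\dots,5)^3$.

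The main obstacle is this last step: showing that the purely combinatorial shift of the defects is actually realized as a geometric interleaving of the extremal double cuts, rather than a mere clustering round by round. The argument must rule out that an extended formation escapes the induced order---by moving its extremal double cut to the other bending area, or by placing a non-extremal double cut---without contradicting the visibility-blocking of Property~\ref{prop:double-cut} or the shape constraints of $I_{(1,3)}^h\,I_{(4,2)}^{h,l}$; a careful count then ensures that enough extended formations survive both the bending-area pigeonhole and the defect shifts to realize three full cycles of the five-channel pattern.
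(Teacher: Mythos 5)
Your first step (reduce to one bending area) matches the paper: in shape $I_{(1,3)}^h$ $I_{(4,2)}^l$ all connections to $b(2,3)$ come from channel segments $1$ and $4$ and so create double cuts there, and in shape $I_{(1,3)}^h$ $I_{(4,2)}^h$ Lemma~\ref{lem:double_cuts_13} plus a choice of one bending area does the job. But your second step overclaims, and in a way that matters. Property~\ref{prop:double-cut} only says that a double cut blocks visibility to $b(i,i+1)$ for \emph{part} of the channel segment in the channels on one side; it does not force extremal double cuts of distinct channels to interleave across repetitions. The paper's own proof makes this explicit: absent further constraints, the extended formations can be placed so that the extremal double cuts appear in the order $(1,1,2,2,\ldots,5,5)$, and over many repetitions $(1^n,2^n,\ldots,5^n)$ --- an order that is perfectly coherent with the channel order around the root (so your ``linearly ordered along $cs_2$'' conclusion is satisfied) and yet contains no subsequence coherent with $(1,2,\ldots,5)^3$, since after the block of channel-$5$ cuts no channel-$1$ cut recurs. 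Monotonicity in the channel order is not what the lemma needs; ruling out clustering by channel is, and nothing in your first two steps rules it out.

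That anti-clustering step is exactly what you defer as ``the main obstacle,'' and it is the actual content of the paper's proof, which is not obtainable by citing Lemma~\ref{lemma:nest_independent} as a black box (the paper only runs an argument \emph{similar} to it). The mechanism is: each repetition of the sequence of extended formations carries a \emph{double} defect at some channel $i$, and the resulting direct connection from channel $i-1$ to channel $i+2$ cannot be drawn in the areas where the previous repetition connected $i-1$ to $i$ and $i$ to $i+1$; it must either be drawn where $i+1$ was connected to $i+2$, or detour through channel segment $cs_4$ (possible only in shape $I_{(4,2)}^l$), and the $cs_4$ spiral escape is excluded precisely because it would make the considered double cut non-extremal. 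Either way the new connection blocks visibility, for later repetitions, to the areas used before, forcing a partial shift $(i,i+1,i+2)^2$; double defects at overlapping channels (e.g., at $3$, then at $1$, then a repetition with no defect among $1,\ldots,5$) combine to shift the entire sequence $(1,\ldots,5)$, and a counted number of such sets of repetitions yields $(1,2,\ldots,5)^2$ and then $(1,2,\ldots,5)^3$. The paper also closes a loophole you do not mention: if the defects could be partitioned into two classes with no common channel, the classes could be drawn in independent regions and their effects would never combine --- this is exactly why double defects (two consecutive skipped tuples) are built into the construction, since they force at least three classes, in which case the ordered set appears anyway. Without this visibility-blocking/shifting argument your proposal establishes only the setup of the lemma, not its conclusion.
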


Finally, we consider the configurations where one intersection is $I_{(1,3)}^l$ and the other one is one of $I_{(4,2)}^{h,l}$. Observe that, in both cases, channel segment $cs_2$ is on the convex hull.

\begin{lemma}\label{lem:cs_two_convex_hull}
If channel segment $cs_2$ is part of the convex hull, then \T and \P do not admit any geometric simultaneous embedding.
\end{lemma}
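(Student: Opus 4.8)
The plan is to reduce the configuration to the hypotheses of Proposition~\ref{prop:triangle}, so that Lemma~\ref{lem:prop3-nodrawing} immediately yields the non-embeddability; this settles the remaining subcases $I_{(1,3)}^l I_{(4,2)}^h$ and $I_{(1,3)}^l I_{(4,2)}^l$ at once, since by the observation preceding the statement $cs_2$ is on the convex hull in both. The crucial consequence of $cs_2$ lying on the convex hull is purely topological: one of the two sides bounding $cs_2$ is part of the boundary of the convex hull of all the placed vertices, and therefore faces the outer, empty region. Since every path-edge is a straight segment between two vertices, and both vertices lie inside the convex hull, which is convex, the segment stays inside the hull and hence cannot cross that outer side of $cs_2$. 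Consequently, every crossing of the border of $cs_2$ happens on its inner side, so no path-edge can cut both sides of $cs_2$: every connection involving $cs_2$ is forced to be a $1$-side connection. This is exactly the restriction demanded by Proposition~\ref{prop:triangle}.

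First I would invoke the deep nesting guaranteed by Lemma~\ref{lemma:nest_independent}: each extended formation carries a $k$-nesting with $k\geq 6$. Because we are in a shape containing $I_{(1,3)}^l$, the intersection on which this nesting is realized has first index $a=1\leq 2$, so Lemma~\ref{lem:nesting-bending-area} applies and produces a triangle $t$, open on one side, that separates some of the triangles nesting with it from the bending area $b(1,2)$ adjacent to $cs_2$. I would then argue that $t$ splits $cs_2$ into an \emph{inner} and an \emph{outer} part, and that, since every extended formation of the sequence participates in the same nesting and must place vertices in the bending area (as already observed before Lemma~\ref{lem:nesting-bending-area}), every extended formation has vertices in both parts of $cs_2$.

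It then remains to verify the last two hypotheses of Proposition~\ref{prop:triangle}. By the convexity argument of the first paragraph, the only way to connect vertices lying in the two parts of $cs_2$ is through a $1$-side connection at a neighbouring channel segment. Moreover, any such $1$-side connection must route around the obstruction created by $t$ and therefore creates a triangle, open on a side, that is nested with $t$; this is precisely the remaining condition of Proposition~\ref{prop:triangle}. With the triangle $t$, the splitting of $cs_2$, the ``vertices in both parts'' property, and the nested $1$-side connections all in place, Proposition~\ref{prop:triangle} (equivalently Lemma~\ref{lem:prop3-nodrawing}) gives that \T and \P admit no geometric simultaneous embedding.

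The step I expect to be the main obstacle is the middle one: rigorously translating the abstract nesting depth of Lemma~\ref{lemma:nest_independent} into a concrete geometric separation of $cs_2$ by $t$, with every extended formation provably having vertices on both sides. In particular, one must rule out that a $2$-side connection into $cs_2$ sneaks in through $cs_1$, $cs_3$, or $cs_4$ despite the two intersections being present; here the specific geometry of the $I_{(1,3)}^l$ and $I_{(4,2)}^{h,l}$ shapes is needed to confirm that the elongation of $cs_2$ never reaches another channel segment from its hull side, so that the convexity obstruction is truly the only available mode of connection.
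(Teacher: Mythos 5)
Your high-level plan---produce a triangle splitting a channel segment, show every extended formation straddles it, and invoke Proposition~\ref{prop:triangle} via Lemma~\ref{lem:prop3-nodrawing}---is indeed the paper's plan, and your hull-convexity observation (no straight-line path edge can cross the outer border of $cs_2$, so $cs_2$ itself admits only $1$-side connections) is correct and is exactly how the paper uses the hypothesis. But your instantiation has a genuine gap, in two places. First, you assume the deep nesting of Lemma~\ref{lemma:nest_independent} is realized at $I_{(1,3)}^l$ merely because the shape contains that intersection; nothing forces this, since the shape also contains $I_{(4,2)}^{h,l}$, and the paper in fact argues the opposite: by an argument analogous to Lemma~\ref{lem:intersects_one_three}, the nesting must take place at $I_{(4,\{1,2\})}$, i.e.\ on $cs_4$. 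Consequently the triangle the paper works with has its apex in $cs_4$ and (by Property~\ref{prop:CS_1_2}) its two base corners in $cs_1$ or its elongation, and it splits $cs_4$, not $cs_2$. Your claim that the triangle $t$ of Lemma~\ref{lem:nesting-bending-area} ``splits $cs_2$'' is asserted without justification---a nesting at $I_{(1,3)}$ lives on connections between $cs_1$ and $cs_3$, and the separation that lemma yields concerns the bending area, which does not give you vertices of every extended formation in two parts of $cs_2$.

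Second---and this is where your version actually fails rather than merely being unproved---the hypothesis of Proposition~\ref{prop:triangle} that the two parts of the split segment can only be connected by $1$-side connections cannot be verified for $cs = cs_2$: by assumption the shape contains $I_{(4,2)}$, i.e.\ the elongation of $cs_4$ intersects $cs_2$, so a path leaving one part of $cs_2$ can reach $cs_4$ through a $2$-side connection and return to the other part. You flag exactly this in your last paragraph as the main obstacle, but your suggested fix (that the elongation of $cs_2$ never reaches another segment from its hull side) does not address it, because the offending $2$-side connection arises from the elongation of $cs_4$ reaching $cs_2$, not the reverse. The paper avoids the problem by splitting $cs_4$ itself, so that the single available $2$-side connection is consumed by the nesting, and the candidate segments for connecting the inner and outer areas are then $cs_1$ ($1$-side), $cs_2$ (on the hull, hence $1$-side), and $cs_3$ (reduced to $cs_1$/$cs_2$ via Property~\ref{prop:CS_1_2}, hence $1$-side)---which is what makes the reduction to Proposition~\ref{prop:triangle} legitimate.
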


Based on the above discussion, we state the following theorem.

\begin{theorem}
There exist a tree and a path that do not admit any geometric simultaneous embedding.
\end{theorem}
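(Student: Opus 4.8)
The plan is to assemble the final theorem as a case analysis driven entirely by the machinery developed in the overview section. The statement follows once we show that \emph{every} possible way of drawing a simultaneous embedding of \T and \P leads to a crossing in one of the two graphs. Since both graphs are drawn with straight-line edges on a common point set, the key observation is that the path edges of \P impose a rigid channel structure on the subtrees of \T hanging from the joints. First I would invoke Lemma~\ref{lemma:k_cluster_passage} to extract, from the $y$ joints, a large subset $J'$ of size $r\geq 2^7\cdot 3\cdot x$ in which every pair of joints creates a passage. Applying Lemma~\ref{lemma:2_channels} to this subset yields a further subset of joints for which the corresponding root-paths in \T create $x$-channels with $2\leq x\leq 3$. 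This is exactly the structural foothold needed: each formation attached to these joints must distribute its cells across at least three distinct channel segments, so the entire analysis can be conducted in terms of channel segments and the $1$-side/$2$-side connections between them.

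The core of the proof is then a trichotomy on how channel segments connect to each other, exactly as set up in the overview. By Lemma~\ref{lemma:nest_independent}, inside every extended formation there is a $k$-nesting with $k\geq 6$, and this deep nesting is the object that cannot be drawn once the connection possibilities are restricted. The three cases are: (i) only $1$-side connections are available, handled by Proposition~\ref{prop:only-1-side} (via Lemma~\ref{lemma:k-nesting}), where the depth-$6$ nesting already forces a crossing; (ii) there exist $2$-side connections but no \emph{disjoint} pair of them, handled by Proposition~\ref{prop:non-disjoint-intersections}, which reduces to the intersection shapes $I_{(1,3)}$ and $I_{(3,1)}$ treated in Lemmata~\ref{lem:intersects_one_three} and~\ref{lem:intersects_three_one}; and (iii) there exist two disjoint $2$-side connections, the general case covered by Proposition~\ref{prop:disjoint}, where the remaining shapes are dispatched by Lemmata~\ref{lem:double_cuts_13} through~\ref{lem:cs_two_convex_hull} together with the ordered double-cut obstruction of Lemma~\ref{lemma:no-ordered-double-cuts}. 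Since a channel has at most $3$ bends by Property~\ref{prop:three_bends}, there are only finitely many connection types between its $x+1$ channel segments, so these three cases are genuinely exhaustive: any alleged embedding falls into exactly one of them, and each is shown to contain a crossing.

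To tie the whole argument to the original region-level nonplanar gadget, I would recall that Property~\ref{prop:four_sep_areas_PS} guarantees that whenever four cells of a single formation, attached to the same joint, are pairwise separated by straight lines, a crossing is forced in \T (this is where Lemma~\ref{lem:uap-tree} enters, since the four separated cells realize a region-level nonplanar copy of the tree of Fig.~\ref{fig:T_level_nonplanar}). The channel/passage/door vocabulary is precisely what converts the geometric hypothesis ``these joints create $2$-channels'' into the combinatorial hypothesis ``four cells are separated,'' through Lemmata~\ref{lemma:PS_passage} and~\ref{lemma:closed_door_in_each_passage}, which show each formation contains a closed door and hence forces a bend. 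Fixing the parameters $n\leq\binom{2^7\cdot 3\cdot x+2}{3}$, $x\leq 7\cdot 3^2\cdot 2^{23}$, and $y\leq 7^2\cdot 3^3\cdot 2^{26}$ as claimed earlier ensures every Ramsey-type extraction above succeeds, so the configurations the lemmata require are actually present.

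The hard part is not any single case but verifying that the trichotomy is truly complete and that the parameter bookkeeping survives the repeated subset extractions. Each of Lemmata~\ref{lemma:k_cluster_passage} and~\ref{lemma:2_channels} discards a constant fraction of the joints, so I must check that the surviving joints still carry a full sequence of extended formations---this is exactly the requirement $k\geq 8\cdot 48x = 2^7\cdot 3x$ noted after Lemma~\ref{lemma:2_channels}. The other delicate point is confirming that the reductions claimed in the overview---``cases involving $2$ and $4$ reduce to those involving $1$ and $3$,'' and ``$I_{(3,1)}$ was already fully handled even with a disjoint partner''---are applied in the right order so that Proposition~\ref{prop:disjoint} only needs to treat the eight genuinely new shapes with $I_{(1,3)}$. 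Granting the stated lemmata and propositions, the theorem follows immediately: in every case a straight-line simultaneous drawing forces a crossing, so no geometric simultaneous embedding of \T and \P can exist. \qed
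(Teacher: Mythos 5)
Your proposal is correct and follows essentially the same route as the paper's own proof: the identical setup via Lemmata~\ref{lemma:k_cluster_passage}, \ref{lemma:2_channels}, and~\ref{lemma:nest_independent} together with Property~\ref{prop:three_bends}, followed by the same exhaustive trichotomy on connection types resolved by Propositions~\ref{prop:only-1-side}, \ref{prop:non-disjoint-intersections}, and~\ref{prop:disjoint}, closed by the observation that at most two disjoint $2$-side intersections can exist. Your additional remarks on parameter bookkeeping and the reduction of cases involving $cs_2$ and $cs_4$ match the paper's discussion and introduce no deviation.
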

\begin{proof}
Let $\T$ and $\P$ be the tree and the path described in Sect.~\ref{se:tree-path}. Then, by Lemma~\ref{lemma:2_channels}, Lemma~\ref{lem:2-different-shapes}, and Property~\ref{prop:three_bends}, a part of \T has to be drawn inside channels having at most four channel segments. Also, by Lemma~\ref{lemma:nest_independent}, there exists a nesting of depth at least $6$ inside each extended formation.

By Proposition~\ref{prop:only-1-side}, if there exist only $1$-side connections, then \T and \P do not admit any simultaneous embedding. By Proposition~\ref{prop:non-disjoint-intersections}, if there exists either one $2$-side connections or a pair of non-disjoint intersections, then \T and \P do not admit any simultaneous embedding. By Proposition~\ref{prop:disjoint}, even if there exist two disjoint $2$-side intersections, then \T and \P do not admit any simultaneous embedding. Since it is not possible to have more than two disjoint $2$-side intersections, the statement follows.
\end{proof}

\section{Detailed Proofs}\label{se:proofs}

\rephrase{Lemma}{\ref{lemma:PS_passage}}{
For each formation $F(H)$, with $H=(h_1, \ldots ,h_4)$, there exists a passage between some cells $c_1(h_a),c_2(h_a),c'(h_b) \in F(H)$, with $1 \leq a,b \leq 4$.
}

\begin{proof}
Suppose, for a contradiction, that there exists no passage inside $F(H)$. First observe that, if two cells $c_1(h_a),c_2(h_a) \in F(H)$ are separated by a polyline given by the path passing through $F(H)$, then either they are separable by a straight line or such a polyline is composed of edges belonging to a cell $c_3(h_a)$ of the same joint $j_{h_a}$. Since, by Property~\ref{prop:four_sep_areas_PS}, there exists no set of four cells of a given joint inside $F(H)$ that are separable by a straight line, it follows that all the cells of $F(H)$ of a given joint can be grouped into at most $3$ different sets $S^1$, $S^2$, and $S^3$ such that cells from different sets can be separated by straight lines, but cells from the same set can not. Therefore, the cells inside one of these sets can only be separated by other cells of the same set.

\begin{figure}[ht]
\begin{center}
\includegraphics[width=8cm]{./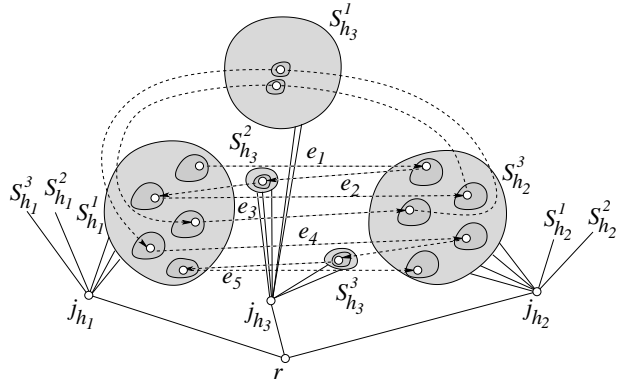}
\caption{The five path edges $e_1,\ldots ,e_5$ connecting five cells of set $S_{h_1}^a$ with five cells of set $S_{h_2}^b$.}
\label{fig:regions}
\end{center}
\end{figure}

Consider the connections of the path through $F(H)$ with regard to this notion of sets of cells.
Let $S_{h_x}^y$, with $x=1,\dots,4$ and $y=1,\dots,3$, be the set of cells belonging to set $S^y$ and attached to joint $j_{h_x}$. Hence, for any two cells $c_1(h_x),c_2(h_{x+1})$ there are nine possible ways to connect between some $S_{h_x}^y$ and $S_{h_{x+1}}^{y'}$.
Since the part of \P through $F(H)$ visits 37 times cells from $j_{h_1}, j_{h_2}, j_{h_3}$, in this order, there exist five path edges $e_1,\ldots ,e_5$ connecting five cells of set $S_{h_1}^a$ with five cells of set $S_{h_2}^b$, where $1\leq a,b \leq 3$ (see Fig.\ref{fig:regions}). Without loss of generality, we assume that edges $e_1,\ldots ,e_5$ appear in this order in the part of \P through $F(H)$. Observe that $e_1,\ldots ,e_5$, together with the five cells of $S_{h_1}^a$ and the five cells of $S_{h_2}^b$ they connect, subdivide the plane into five regions. Since the path is continuous in $F(H)$, it connects from the end of $e_1$ (a cell of joint $j_{h_2}$) to the beginning of $e_2$ (a cell of joint $j_{h_1}$), from the end of $e_2$ to the beginning of $e_3$, and so on. If in the region between edges $e_s$ and $e_{s+1}$, with $1\leq s \leq 4$, there exists no cell of joint $j_{h_3}$, then the path through $F(H)$ will not traverse the region between these edges in the opposite direction, since the path contains no edges going from a cell of $j_{h_2}$ to a cell of $j_{h_1}$ and since the start- (and end-) cells of these edges cannot be separated by straight lines.
Furthermore, note that, in this case, the path-connection from $e_s$ to $e_{s+1}$ does not traverse the region between the edges, therefore forming a spiral shape, in the sense that the part of the path following $e_{s+1}$ is separated from the part of the path prior to $e_s$.
Since we have five edges between $S_{h_1}^a$ and $S_{h_2}^b$ but only 3 possible sets of cells on joint $j_{h_3}$, at least one pair of edges exists creating an empty region and therefore a spiral separating the path.

By this argument, it follows that cells attached to joint $j_{h_4}$ in different repetitions of the subsequence $((h_1h_2h_3)^{37}h_4^{37})$ in $F(H)$ are separated by path edges of the spirals formed by the repeated subsequence of visited cells of the joints $j_{h_1}, j_{h_2} , j_{h_3}$. Since four repetitions create four of such separated cells on $j_{h_4}$, by Property~\ref{prop:four_sep_areas_PS} there exists a pair of cells that are not separable by a straight line but are separated by the path. Since the path of the spiral separating them consists only of cells on different joints, any possible separating polyline leads to a contradiction to the non-existence of a passage inside $F(H)$.
\end{proof}

\rephrase{Lemma}{\ref{lemma:closed_door_in_each_passage}}{
Each passage contains at least one closed door.
}

\begin{figure}[ht]
\begin{center}
\includegraphics[height=3.5cm]{./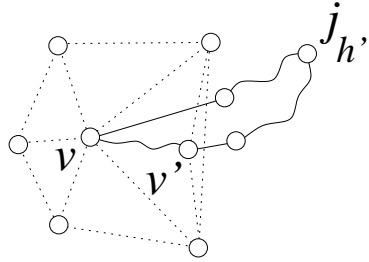}
\caption{There exists a closed door in each passage.}
\label{fig:closed_door_in_passage}
\end{center}
\end{figure}
\begin{proof}
Refer to Fig.~\ref{fig:closed_door_in_passage}. Let $P_1$ be a passage between $c_1(h)$, $c_2(h)$, and $c'(h')$. Consider any vertex $v$ of $c'$ inside the convex hull of $C:= c_1\cup c_2$. Further, consider all the triangles $\triangle(v,v_1,v_2)$ created by $v$ with any two vertices $v_1,v_2 \in C$ such that $\triangle(v,v_1,v_2)$ does not enclose any other vertex of $C$. The path of tree edges connecting $v$ to $j_{h'}$ enters one of the triangles. Then, either it leaves the triangle on the opposite side, thereby creating a closed door, or it encounters a vertex $v'$ of $c'$. Since at least one vertex of $c'$ lies outside the convex hull of $C$, otherwise they would not be separated by $c'$, it is possible to repeat the argument on triangle $\triangle(v',v_1,v_2)$ until a closed door is found.
\end{proof}

\rephrase{Lemma}{\ref{lemma:k_cluster_passage}}{
Given a set of joints $J=\{j_1,\ldots ,j_y\}$, with $|J|=y:= {2^7\cdot 3\cdot x + 2 \choose 3}$, there exists a subset $J'=\{j'_1,\ldots ,j'_r\}$, with $|J'|=r \geq 2^7\cdot 3\cdot x$, such that for each pair of joints $j'_i,j'_h \in J'$ there exist two cells $c_1(i),c_2(i)$ creating a passage with a cell $c'(h)$.
}

\begin{proof}
By construction of the tree, for each set of four joints, there are formations that visit only cells of these joints. By Lemma~\ref{lemma:PS_passage}, there exists a passage inside each of these formations, which implies that for each set of four joints there exists a subset of two joints creating a passage.
The actual number of joints needed to ensure the existence of a subset of joints of size $r$ such that passages exist between each pair of joints is given by the Ramsey Number $R(r,4)$. This number is defined as the minimal number of vertices of a graph $G$ such that $G$ either has a complete subgraph of size $r$ or an independent set of size $4$. Since in our case we can never have an independent set of size $4$, we conclude that a subset of size $r$ exists with the claimed property. The Ramsey number $R(r,4)$ is not exactly known, but we can use the upper bound directly extracted from the proof of the Ramsey theorem to arrive at the bound stated above. \cite{grs-rt-90}
\end{proof}

\rephrase{Lemma}{\ref{lemma:2_channels}}{
Consider a set of joints $J= \{j_1,\ldots ,j_k\}$ such that there exists a passage between each pair $(j_i,j_h)$, with $1 \leq i,h \leq k$. Let ${\mathcal P}_1 =\{P \mid P\textrm{ connects } c_i \textrm{ and } c_{\frac{3k}{4}+1-i}$, for $i=1,\dots, \frac{k}{4} \}$ and ${\mathcal P}_2 =\{P \mid P\textrm{ connects } c_{\frac{k}{4}+i} $ and $c_{k+1-i}  \textrm{, for } i=1,\dots, \frac{k}{4} \}$ be two sets of passages between pairs of joints in $J$ (see Fig.~\ref{fig:lemma5}). Then, for at least $\frac{k}{4}$ of the joints of one set of passages, say ${\mathcal P}_1$, there exist paths in \T, starting at the root and containing these joints, which traverse all the doors of ${\mathcal P}_2$ with at least 2 and at most 3 bends. Also, at least half of these joints create an $x$-channel, with $2\leq x \leq 3$.
}

\begin{figure}[ht]
\begin{center}
\includegraphics[height=4cm]{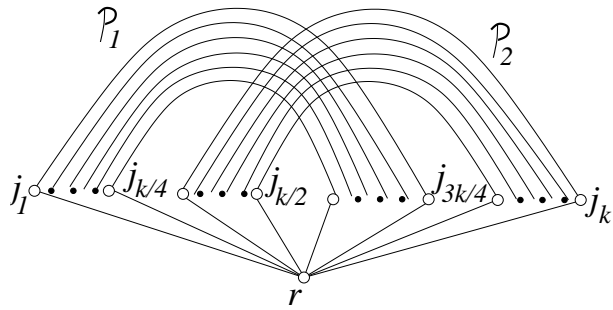}
\caption{The two sets of passages ${\mathcal P}_1$ and ${\mathcal P}_2$ described in Lemma~\ref{lemma:2_channels}.}
\label{fig:lemma5}
\end{center}
\end{figure}

\begin{proof}
Observe first that each passage of ${\mathcal P}_1$ is interconnected with each passage of ${\mathcal P}_2$ and that all the passages of ${\mathcal P}_1$ and all the passages of ${\mathcal P}_2$ are nested.

By Lemma~\ref{lemma:closed_door_in_each_passage} and Property~\ref{prop:three_bends}, for one of ${\mathcal P}_1$ and ${\mathcal P}_2$, say ${\mathcal P}_1$, either for every joint of ${\mathcal P}_1$ between the joints of ${\mathcal P}_2$ in the order around the root or for every joint of ${\mathcal P}_1$ not between the joints of ${\mathcal P}_2$, there exists a path $p_i$ in \T, starting at the root and containing these joints, which has to traverse all the doors of ${\mathcal P}_2$ by making at least $1$ and at most $3$ bends. Also, paths $p_1,\ldots , p_{\frac{k}{4}}$ can be ordered in such a way that a bendpoint of $p_i$ encloses a bendpoint of $p_h$ for each $h>i$.
It follows that there exist $x$-channels with $1\leq x \leq 3$ for each joint.
Consider now the set of joints $J'\subset J$ visited by these paths.
We assume the joints of $J'=\{j'_1,\ldots j'_r \}$ to be in this order around the root.

\begin{figure}[htb]
  \centering{
\begin{tabular}{c c}
\includegraphics[height=4.5cm]{./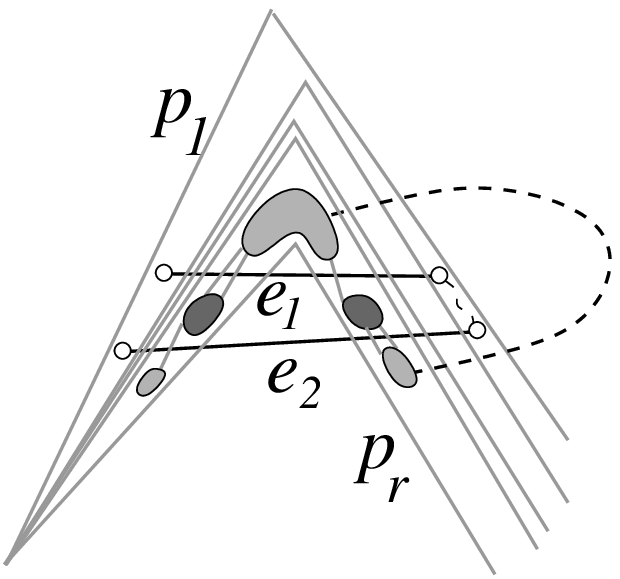} &
\includegraphics[height=4.5cm]{./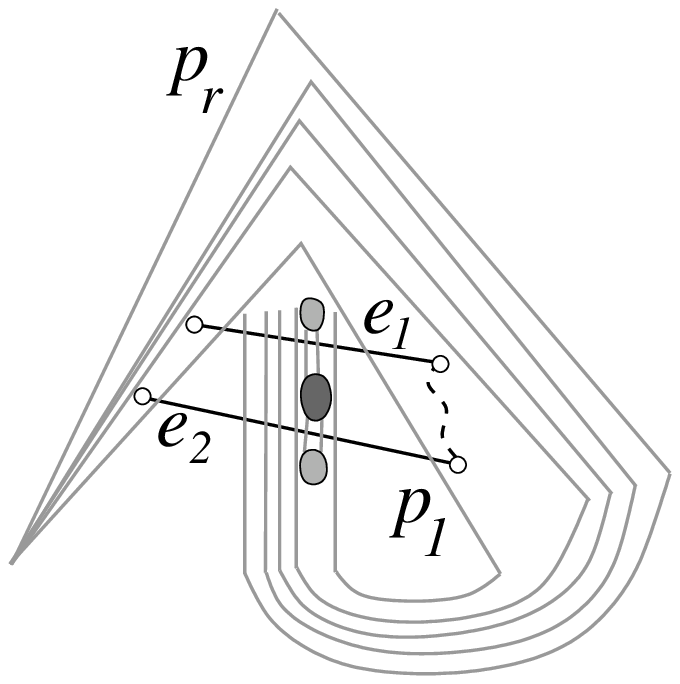} \\
(a) & (b)
\end{tabular}}
\caption{(a) The separating cell $c'$ is in the outermost channel. (b) The separating cell $c'$ is in the innermost channel.}
  \label{fig:lemma5-cases}
\end{figure}

Consider the path $p_1$ whose bendpoint encloses the bendpoint of each of all the other paths and the path $p_r$ whose bendpoint encloses the bendpoint of none of the other paths (see Figs.~\ref{fig:lemma5-cases}(a) and~\ref{fig:lemma5-cases}(b)).
Please note that either $p_1$ visits $j'_1$ and $p_r$ visits $j'_r$ or vice versa, say $p_1$ visits $j'_1$.
By construction, there exists a passage between cells from $j'_1$ and cells from $j'_r$. In this passage there exist either two path-edges $e_1, e_2$ of a cell $c'(1)$ separating two cells $c_1(r),c_2(r)$, thereby crossing the channel of $j'_r$, or two edges of a cell $c'(r)$ separating two cells $c_1(1),c_2(1)$, thereby crossing the channel of $j'_1$. We show that 1-channels are not sufficient to draw these passages.

In the first case (see Fig.~\ref{fig:lemma5-cases}(a)), both separating edges $e_1,e_2$ cross the path $p_r$ before and after the bend, thereby creating blocking cuts separating vertices of the same cell, say $c_1$. Since they are connected by the path, by Property~\ref{prop:blocking-cut}, an additional bend is needed.
In the other case (see Fig.~\ref{fig:lemma5-cases}(b)), any edge connecting vertices of $c'(j'_r)$ is not even crossing any edge of $p_1$ and therefore at least another bend is needed in the channel.
So at least one of the joints needs an additional bend. Since there are passages between each pair of joints in $J'$, all but one joint $j_q$ have a path that has to bend an additional time. We note that the additional bendpoint of each path $p_k$ aside from $p_1$, $p_r$, and $p_q$ has to enclose all the additional bendpoints either of $p_1, \ldots , p_{k-1}$ or of $p_{k+1}, \ldots , p_r$. It follows that, for at least half of the joints, there exist $x$-channels where $2\leq x \leq 3$.
\end{proof}

\rephrase{Lemma}{\ref{n-independent nestings}}{
There exist no $n\geq 2^{22}\cdot 14$ independent sets of formations $S_1,\ldots ,S_n$ inside any extended formation,
where each $S_i$ contains formations of a fixed set of channels of size $r\geq 22$.
}

\begin{proof}
Assume for a contradiction, that such independent sets $S_1,\ldots, S_n$ exist.
By Lemma~\ref{lemma:PS_passage}, each formation in each set will contain a passage and thereby an edge cutting the channel border.
By Property~\ref{prop:CS_1_2} each formation in each set will place an edge to either channel segment $cs_1$ or $cs_2$. As can be easily seen, there exists a set $S^1$ of size $\frac{n}{2}$ of sets of formations that will have at least one common connection for a fixed formation $F_i$ in each set $S_a \subset S^1$, where $1\leq n$.
By repeating the argument we can find a subset $S^2\subset S^1$ of size $\frac{n}{4}$ such that these sets will have at least two common connections for formations $F_i,F_h$ in each set $S_a \subset S^2$. By continuing this procedure we arrive at a subset $S^r$ of size $\frac{n}{2^r}$ that will have at least $r$ common connections. Since all these common connections have to connect to either $cs_1$ or $cs_2$, we have identified a set $S=\{ S'_1, \ldots , S'_{\frac{n}{2^r}}\}$ of size $\frac{n}{2^r}$ of sets of formations of size at least $\frac{r}{2}$ that has all the connections to one specific channel segment $CS$.

We now consider the cutting edges for each of the formations of $S$ in $CS$. Since any of those can intersect the channel border on two different sides, at least half of the connections for a fixed formation $F_{\frac{r}{4}}$ in all the sets will intersect with one side of the channel border, thereby crossing either all the channels $1, \ldots , \frac{r}{4}-1$ or all the channels ${\frac{r}{4}}+1, \ldots , \frac{r}{2}$, assume the first. Consider now the formations $F_{\frac{r}{8}}$ in each of the sets.
These formations of the sets $S'_2, S'_4, \ldots, S'_{\frac{n}{2^{r+1}}}$ will be separated on $CS$ by the edges of the formations $F_{\frac{r}{4}}$ of the sets $S'_3, S'_5, \ldots, S'_{\frac{n}{2^r}-1}$. To avoid a monotonic ordering of the separated formations and thereby the existence of an region-level nonplanar tree these formations $F_{\frac{r}{8}}$ have to place vertices in an adjacent channel segment $CS'$. This will create blocking cuts for either all the channels $1, \ldots , \frac{r}{8}-1$ or all the channels ${\frac{r}{8}}+1, \ldots , \frac{r}{4}$, assume the first.
Consider now the formations $F_1$ in each of the sets. These formations of the sets $S'_3, S'_5, \ldots, S'_{\frac{n}{2^r}-2}$ will be separated on $CS$ by the edges of the formations $F_{\frac{r}{8}}$ of the sets $S'_4, S'_6, \ldots, S'_{\frac{n}{2^r}-3}$.
By the same argument as above also these formations have to place vertices in an adjacent channel segment that are visible from some of the separated areas of $CS$. Since the connection from the formations $F_{\frac{r}{8}}$ are blocking for the connection to $CS'$, the formations $F_1$ have to use the remaining adjacent channel segment $CS''$, thereby blocking all the channels $1, \ldots r_2$. We finally consider the formations $F_2$  of the sets $S'_4, S'_6, \ldots, S'_{10}$. These formations are now separated in $CS$ by the blocking edges to $CS'$ of the formations $F_{\frac{r}{8}}$ and by the blocking edges to $CS''$ of the formations $F_1$. Therefore, these formations cannot use part of any channel segment (tree-)visible to the separated areas in $CS$. So, by Property~\ref{prop:four_sep_areas_PS}, we identified a region-level nonplanar tree, in contradiction to the assumption.
\end{proof}

\rephrase{Lemma}{\ref{lem:nest_sequence}}{
Consider four subsequences $Q_1,\ldots ,Q_4$, where $Q_i = (H_1,H_2,\ldots,H_x)$, of an extended formation $EF$, each consisting of a whole repetition of $EF$. Then, there exists either a pair of nested subsequences or a pair of independent subsequences.
}

\begin{proof}
Assume that no pair of nested subsequences exists. We show that a pair of independent subsequences exists.

First, we consider how $Q_1,\ldots ,Q_4$ use the first two channel segments $cs_1$ and $cs_2$. Each of these subsequences uses either only $cs_1$, only $cs_1$, or both to place its formations. Observe that, if a subsequence uses only $cs_1$ and another one uses only $cs_2$, then such subsequences are clearly independent. So we can assume that all of $Q_1,\ldots ,Q_4$ use a common channel segment, say $cs_2$.

\begin{figure}[ht]
\begin{center}
\includegraphics[height=4cm]{./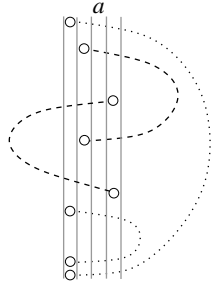}
\caption{If three subsequences use the same channel segment $cs$, then at least two of them are either nesting or separated in $cs$.}
\label{fig:nested-or-independent}
\end{center}
\end{figure}

Then we show that, if three subsequences use the same channel segment $cs$, then at least two of them are separated in $cs$. In fact, if two subsequences using $cs$ are not independent, then they contain formations on the same channel $a$ that intersect with different channel borders of $a$. However, a third subsequence containing a formation that intersects a channel border of $a$ is such that there exists either a nesting or a clear separation between this subsequence and the other subsequence intersecting the same channel border of $a$ (see Fig.~\ref{fig:nested-or-independent}). This fact implies that if three subsequences use only $cs_2$, then at least two of them are independent. From this and from the fact that there are four subsequences using $cs_2$, we derive that two subsequences, say $Q_1,Q_2$, are separated in $cs_2$ and are not separated in $cs_1$. Then, the third subsequence $Q_3$ can be placed in such a way that it is not separated from $Q_1$ and $Q_2$ in $cs_2$. However, this implies that $Q_4$ is separated in $cs_1$ from two of $Q_1,Q_2,Q_3$ and in $cs_2$ from two of $Q_1,Q_2,Q_3$, which implies that $Q_4$ is separated in both channel segments from one of $Q_1,Q_2,Q_3$.
\end{proof}

\rephrase{Lemma}{\ref{lemma:nest_independent}}{
Consider an extended formation $EF(H_1,H_2,\ldots,H_x)$.
Then, there exists a $k$-nesting, where $k \geq 6$, among the formations of $EF$.
}

\begin{proof}
Assume, for a contradiction, that there is no $k$-nesting among the sequence of formations in $EF$. We claim that, under this assumption, there exist more than $n$ sequences of independent formations in $EF$ from the same set of channels $C$, where $n\geq 2^{22}\cdot 14$ and $|C|\geq 22$. By Lemma~\ref{n-independent nestings}, such a claim clearly implies the statement.

Consider sequences that use some common channels in channel segments $cs_1$ and $cs_2$. Then, their separation in $cs_1$ has the opposite ordering with respect to their separation in $cs_2$.

Observe that, by Lemma~\ref{lem:nest_sequence}, there exist at most $(n-1) \cdot 3$ different nestings of subsequences such that there are less than $n$ independent sets of subsequences.
Also note that, if some formations belonging to two different subsequences are nesting, then all the formations of these subsequences have to be part of some nesting. However, this does not necessarily mean for all the formations to nest with each other and to build a single nesting.

Since the number of channels used inside $EF$ is greater than $(n-1) \cdot 3 \cdot 3$, where $n \geq 2^{22}\cdot 14$, we have a nesting consisting of subsequences with at least $3$ different defects.

Let the nesting consist of subsequences $Q^1_1,\ldots ,Q^{r}_1, Q^1_2, \ldots , Q^{r}_2, \ldots, Q^{1}_k \ldots , Q^{r}_k$, where $Q^h_i$ denotes the $h$-th occurrence of a subsequence of $EF$ with a defect at channel $i$. Further, let the path connect them in the order $Q^1_1, Q^1_2, \ldots , Q^1_k,Q^2_1, \ldots ,Q^2_k, \ldots, Q^{r}_k$.
We show that there exists a pair of independent subsequences within this nesting.

\begin{figure}[ht]
\begin{center}
\begin{tabular}{c c c}
\mbox{\includegraphics[height=2.7cm]{./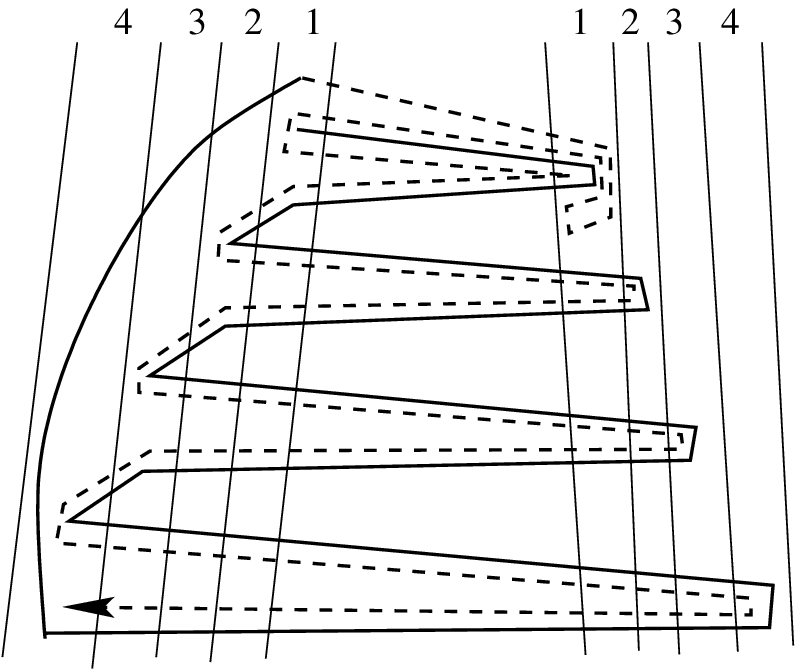}} &
\mbox{\includegraphics[height=2.7cm]{./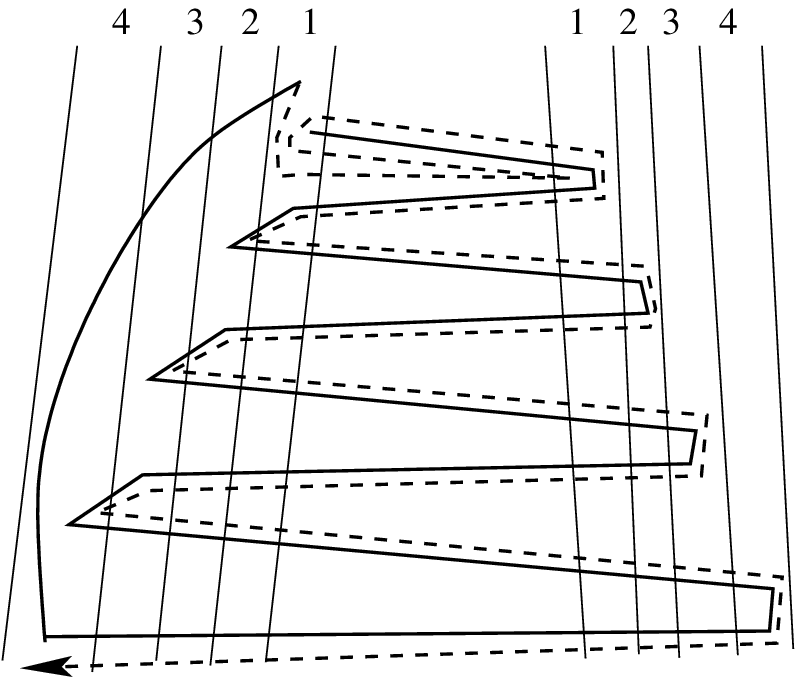}} &
\mbox{\includegraphics[height=2.7cm]{./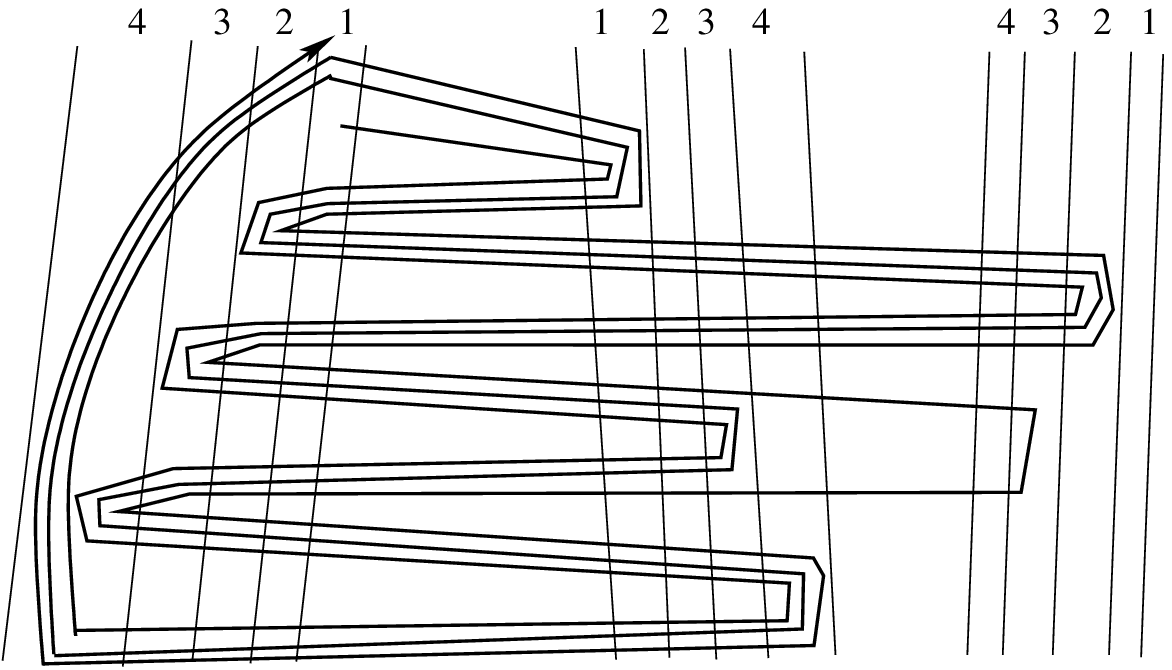}} \\
(a) & (b) & (c)\\
\end{tabular}
\caption{(a) and (b) Possible configurations for $Q^1_1$, $Q^1_2$, and $Q^1_3$. (c) The repetitions follow the outward orientation.}
\label{fig:2repetitions}
\end{center}
\end{figure}

\begin{figure}[ht]
\begin{center}
\begin{tabular}{c c}
\mbox{\includegraphics[height=3cm]{./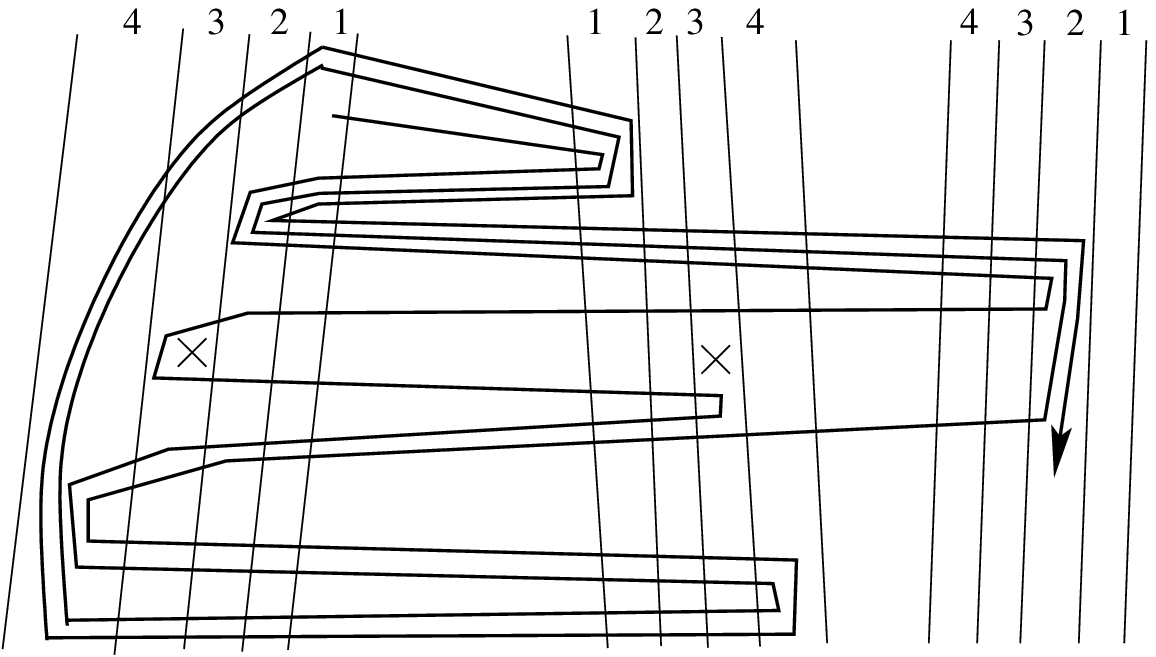}} \hspace{0.2cm} &
\mbox{\includegraphics[height=3cm]{./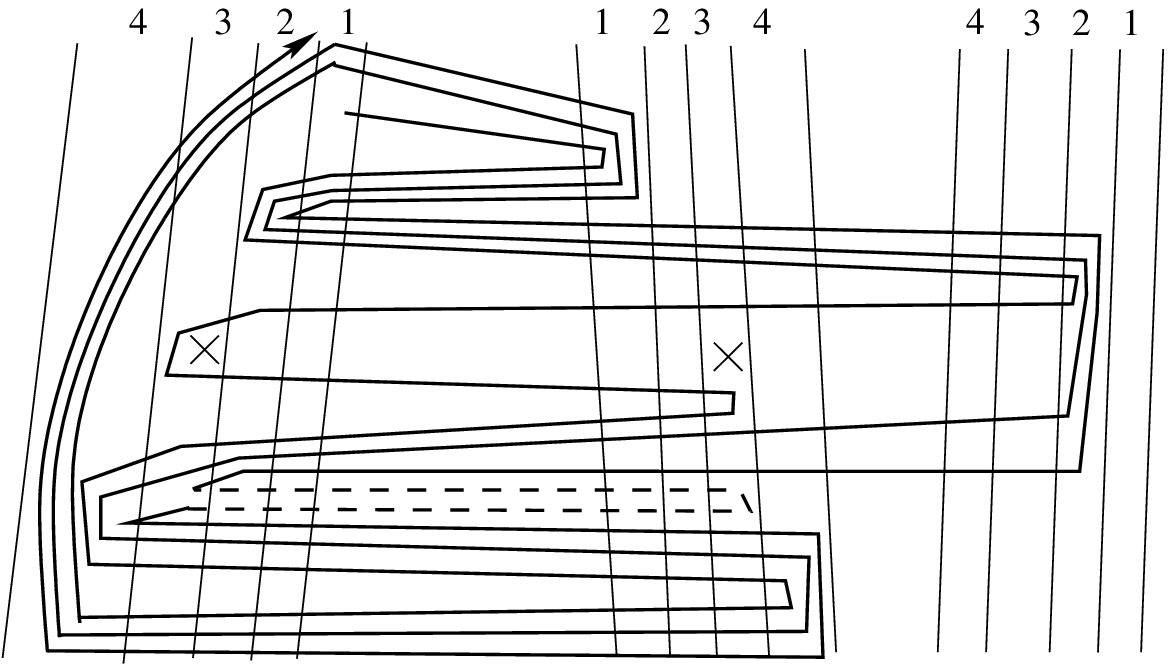}} \\
(a) & (b)\\
\end{tabular}
\caption{The connection between channels $2$ and $4$ blocks visibility for the following repetitions to the part of the channel segment where vertices of channel $3$ were placed till that repetition.}
\label{fig:defects}
\end{center}
\end{figure}

\begin{figure}[ht]
\begin{center}
\begin{tabular}{c}
\mbox{\includegraphics[height=4cm]{./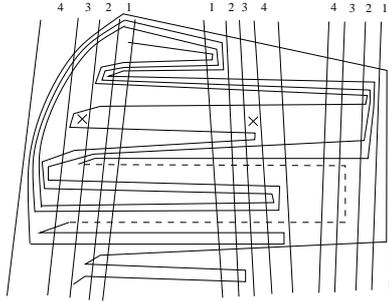}} \\
(a)\\
\end{tabular}
\caption{All the channels $c,\ldots,x$ are shifted and the next repetition starts in a completely different region.}
\label{fig:global-shift}
\end{center}
\end{figure}

Consider now the first two nesting repetitions of sequence $(H_1,H_2,\ldots,H_x)$, that is, $Q^1_1$ and $Q^1_2$. Let the nesting consist of a formation $F(k)$ from $Q^1_1$ nesting in a formation $F'(s)$ from $Q^1_2$.
Consider the edges $e_1,e_2 \in F(k)$ and $e'_1,e'_2 \in F'(s)$ that are responsible for the nesting.
Without loss of generality we assume the path $p$ that connects $e'_2$ and $e_1$ not to contain edges $e'_1,e_2$.
Consider the two parts $a,b$ of the channel border of $s$, where $a$ is between $e_1$ and $e'_1$ and $b$ is between  $e_2$ and $e'_2$. Consider now the closed region delimited by the path through $F'(s)$, the path $p$, the path through $F(k)$, and $b$. Such a region is split into two closed regions $R_{in}$ and $R_{nest}$ by $a$ (see Fig.~\ref{fig:nesting_formations}).

\begin{figure}[htb]
  \centering{
\begin{tabular}{c}
\includegraphics[height=3.5cm]{./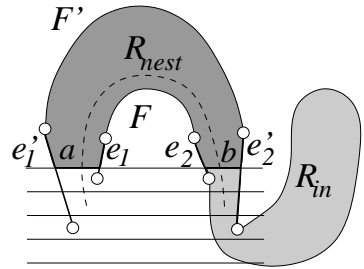}
\end{tabular}}
\caption{Regions $R_{in}$ and $R_{nest}$.}\label{fig:nesting_formations}
\end{figure}

Observe that, in order to reach from $R_{in}$ to the outer region, any path has to cross both $a$ and $b$.
We note that the part of \P starting at $e'_2$ and not containing $F(k)$ is either completely contained in the outer region or has to cross over between $R_{in}$ and the outer region by traversing $R_{nest}$. Similarly, the
the part of \P starting at $e_1$ and not containing $F'(s)$ either does not reach the outer region or has to cross over between $R_{in}$ and the outer region by traversing $R_{nest}$. Furthermore, any formation $F''$ on such a path is also either crossing over and thereby cutting both $a$ and $b$, or not. In the first case $F$ is nested in $F''$ and $F''$ is nested in $F'$.

Consider now the third nesting repetition $Q^1_3$ of sequence $(H_1,H_2,\ldots,H_x)$ (see Figs.~\ref{fig:2repetitions}(a) and~\ref{fig:2repetitions}(b)).
It is easy to see that if $Q^1_3$ is nested between $Q^1_1$ and $Q^1_2$, then there exists a nesting of depth $1$ because $Q^1_3$ contains a defect at a different channel. So we have to consider the cases when the repetitions create the nesting by strictly going either outward or inward. By this we mean that the $i$-th repetition $Q^1_i$ has to be placed such that either $Q^1_i$ is nested inside $Q^1_{i-1}$ (inward) or vice versa (outward).
Without loss of generality, we assume the latter (see Fig.~\ref{fig:2repetitions}(c)).

Consider now a defect in a channel $c$, with $1 < c < k$, at a certain repetition $Q^h_i$. Since the path is moving outward, the connection between channels $c-1$ and $c+1$ blocks visibility for the following repetitions to the part of the channel segment where vertices of channel $c$ were placed till that repetition (see Fig.~\ref{fig:defects}(a) for an example with $c=3$).

A possible placement for the vertices of $c$ in the following repetitions that does not increase the depth of the nesting could be in the same part of the channel segment where vertices of a channel $c'$, with $c' \neq c$, were placed till that repetition. We call \emph{shift} such a move. However, in order to place vertices of $c$ and of $c'$ in the same zone, all the vertices of $c$ belonging to the current cell have to be placed there (see dashed lines in Fig.~\ref{fig:defects}(b), where $c'=c+1$), which implies that a further defect in channel $c$ at one of the following repetitions encloses all the vertices of each of the previously drawn cells, hence separating them with a straight line from the following cells. Hence, also the vertices of $c'$ have to perform a shift to a channel $c''$, with $c \neq c'' \neq c'$. Again, if the vertices of $c'$ and of $c''$ lie in the same zone, we have two cells that are separated by a straight line and hence also the vertices of $c''$ have to perform a shift. By repeating such an argument we conclude that the only possibility for not having vertices of different channels lying in the same zone is to shift all the channels $c,\ldots,x$ and to go back to channel $1$ for starting the following repetition in a completely different region (see Fig.~\ref{fig:global-shift}, where the following repetition is performed completely below the previous one). However, this implies that there exist two repetitions in one configuration that have to be separated by a straight line and therefore are independent, in contradiction to our assumption.
Therefore, we can assume that, after $3\cdot x + 1 $ repetitions, we arrive at a nesting of depth 1. By repeating this argument we arrive after $3\cdot x\cdot 6$ repetitions at the nesting of depth $6$ claimed in the lemma.
\end{proof}

\rephrase{Lemma}{\ref{lemma:k-nesting}}{
If an extended formation lies in a part of the channel that contains only $1-$side connections, then \T and \P do not admit any geometric simultaneous embedding.
}

\begin{proof}
First observe that, by Lemma~\ref{lemma:nest_independent}, there exists a $k$-nesting with $k\geq 6$ in any extended formation $EF$.

\begin{figure}[ht]
\begin{center}
\includegraphics[height=4cm]{./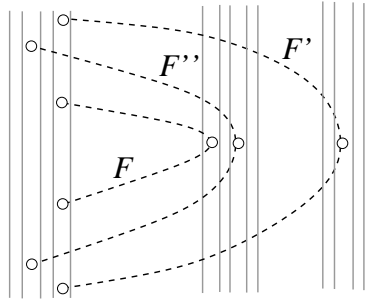}
\caption{Illustration for the case in which only $1$-side connections are possible.}
\label{fig:1-side-nested}
\end{center}
\end{figure}

Consider two nested formations $F,F' \in EF$ belonging to the $k$-nesting. Such formations, by definition, belong to the same channel. Consider now the formation $F'' \in EF$ belonging to a different channel such that $F$ is nested in $F''$ and $F''$ is nested in $F'$. Since each pair of channel segments have a $1-$side connection, we have that $F''$ blocks visibility for $F'$ on the channel segment used by $F$ for the nesting (see Fig.~\ref{fig:1-side-nested}). Hence, $F'$ has to use a different channel segment to perform its nesting, which increases by one the number of used channel segments for each level of nesting. Since the tree supports at most $4$ channel segments, the statement follows.
\end{proof}

\rephrase{Lemma}{\ref{lem:2-different-shapes}}{
Consider two channels $ch_p,ch_q$ with the same intersections. Then, none of channels $ch_i$, where $p<i<q$, have an intersection that is disjoint with the intersections of $ch_p$ and of $ch_q$.
}

\begin{proof}
The statement follows from the fact that the channel borders of $ch_p$ and $ch_q$ delimit the channel for all joints between $p$ and $q$. So, if any channel $ch_i$, with $p<i<q$, had an intersection different from the one of $ch_p$ and $ch_q$, it would either intersect with one of the channel borders of $ch_p$ or $ch_q$ or it would have to bend around one of the channel borders, hence crossing a straight line twice.
\end{proof}

\rephrase{Lemma}{\ref{lem:nesting-bending-area}}{
Consider an $x$-nesting of a sequence of extended formations on an intersection $I_{(a,b)}$, with $a\leq 2$.
Then, there exists a triangle $t$ in the nesting that separates some of the triangles nesting with $t$ from the bending area $b(a,a+1)$ (or $b(a-1,a)$).
}

\begin{proof}
Consider three extended formations $EF_1(H_1), EF_2(H_1), EF_3(H_1)$ lying in a channel $ch_1$ and two extended formations $EF_1(H_2),EF_2(H_2)$ lying in a channel $ch_2$ such that all the channels of the sequence of extended formations are between $ch_1$ and $ch_2$ and there is no formation $F \not\in EF(H_1),EF(H_2)$ nesting between $EF_1(H_1), EF_2(H_1), EF_3(H_1)$ and $EF_1(H_2),EF_2(H_2)$. Suppose, without loss of generality, that the bending point of $ch_1$ is enclosed into the bending point of $ch_2$.

Consider a formation $F_1 \in EF_1(H_1)$ nesting with a formation $F_1' \in EF_1(H_2)$. We have that the connections from $F_1'$ to channel segment $a$ and back has to go around the vertex placed by $F_1$ on channel segment $a$. Therefore, at least one of the connections of $F'_1$ cuts all the channels between $ch_1$ and $ch_2$, that is, all the channels where the sequence of extended formations is placed. Such a connection separates the vertices of $F_1$ from the vertices of a formation $F_2 \in EF_2(H_1)$ on channel segment $a$. Therefore, at least one of the connections of $F_2$ to channel segment $a$ cuts either all the channels in channel segment $a$ or all the channels in channel segment $a+1$ (or $a-1$), hence becoming a blocking cut for such channels. It follows that all the formations nesting inside  $F_2$ on such channels can not place vertices in the bending area $b(a,a+1)$ (or $b(a-1,a)$) outside $F_2$.
\end{proof}

\rephrase{Lemma}{\ref{lem:one_channel_segment}}{
In a situation as described in Proposition~\ref{prop:triangle}, not all the extended formations in a sequence of extended formations can place turning vertices in the same channel segment.
}

\begin{proof}
Assume, for a contradiction, that all the turning vertices are in the same channel segment. Consider a sequence of extended formations $SEF$ and the extended formations in $SEF$ using one of the sets of channels $\{H_1,\ldots ,H_4\}$.

We first show that in $SEF$ there exist some extended formations using connections in $\alpha$ configuration and some using connections in $\beta$ configuration on the channels $\{H_1,\ldots ,H_4\}$. Consider the continuous subsequence of extended formations $EF(H_1),$ $\ldots,EF(H_3)$ in $SEF$. Assume that all the turning vertices of these extended formations are in $\alpha$ configuration. Consider a further subsequence of $SEF$ on the same set of channels with a defect at $H_2$. Then, the connection between $H_1$ and $H_3$ crosses $H_2$, thereby blocking any further $EF(H_2)$ from being in $\alpha$ configuration. Therefore, when considering another subsequence of $SEF$ on the same set of channels which does not contain defects at $H_1,\ldots,H_3$, either the extended formation $EF(H_2)$ is in $\beta$ configuration or it uses another channel segment to place the turning vertex, as stated in the lemma.

So, consider two channels $H_1,H_2$ such that there exists an extended formation $EF(H_1)$ in $\alpha$ configuration and an extended formation $EF(H_2)$ in $\beta$ configuration. Since all the extended formations contain a triangle open on one side that is nested with triangle $t$, we consider five of such triangles, one for each set of channels $H_2,H_3,H_4$ and two for set $H_1$, such that four of the considered extended formations $EF(H_1),\ldots,$ $EF(H_4)$ are continuous in $SEF$ and the other one $EF'(H_1)$ is the first extended formation on the set of channels $H_1$ following $EF(H_4)$ in $SEF$.

\begin{figure}[ht]
\begin{center}
\begin{tabular}{c c}
\mbox{\includegraphics[height=5cm]{./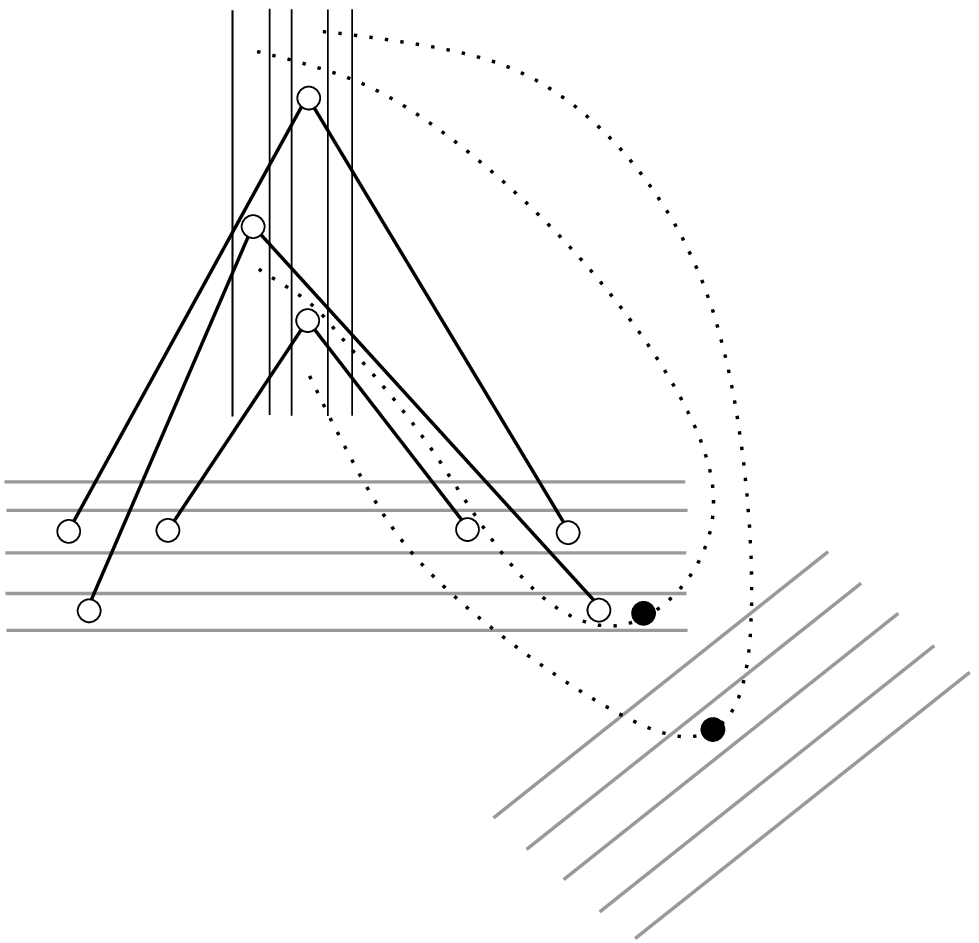}} \hspace{0.1cm} &
\mbox{\includegraphics[height=4.5cm]{./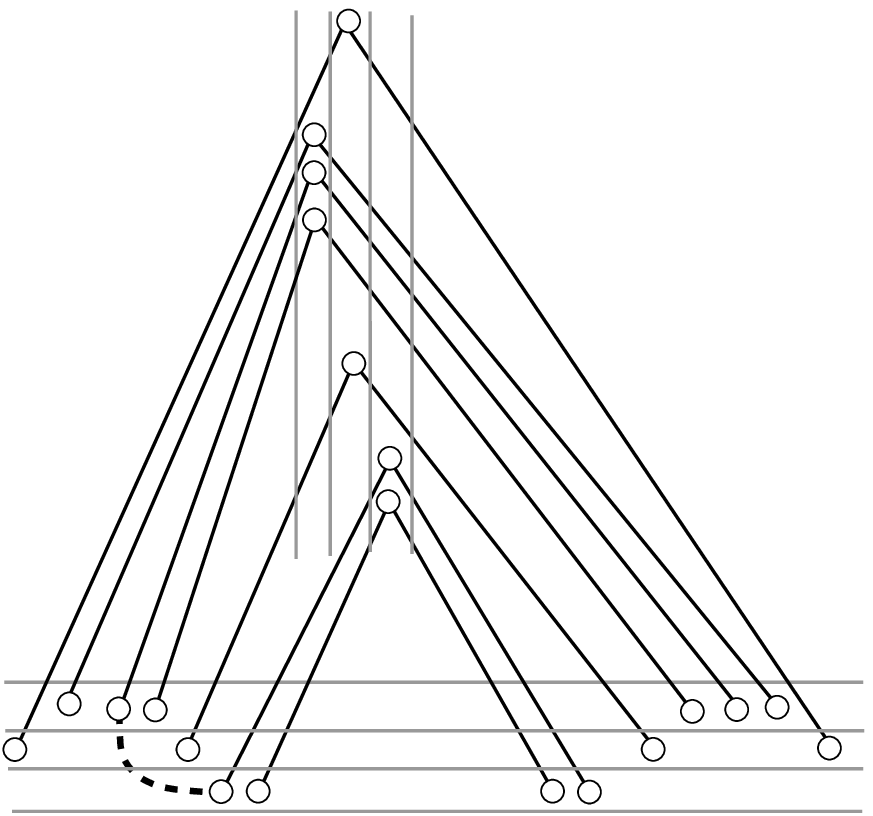}} \\
(a) & (b)\\
\end{tabular}
\caption{(a) Two triangles from the same channel have to use different channel segments if a triangle of another channel is between them. Turning vertices are represented by black circles. (b) When a defect at $H_2$ in encountered, the connection between $EF(H_1)$ and $EF(H_3)$ does not permit the following $EF(H_2)$ to respect the ordering of triangles.}
\label{fig:nested-triangles-prop-3}
\end{center}
\end{figure}

Notice that, if a triangle of an extended formation $EF(H_k)$ is nested in a triangle of an extended formation $EF(H_s)$ and the triangle of $EF(H_s)$ is nested inside a triangle of an extended formation $EF'(H_k)$, with $k<s$, then $EF(H_k)$ has to use a different channel segment to place its turning vertex (see Fig.~\ref{fig:nested-triangles-prop-3}(a)). Hence, the triangles have to be ordered according to the order of the used channels. Also, if the continuous path connecting two triangles $t_1=(u,v,w), t_2=(u',v',w')$ of consecutive extended formations $EF(H_s),EF(H_{s+1})$ connects vertex $u$ to vertex $w'$ (or $u'$ to $w$) via the outer area, then a triangle of $EF(H_1)$ that occurs prior to $EF(H_s)$ and a triangle of $EF'(H_1)$ that occurs after $EF(H_{s+1})$ are nested with the triangle given by the connection of $t_1$ and $t_2$ in an ordering that is different from the order of the channels.

Consider now the following subsequence of $SEF$ having a defect at $H_2$. The connection of $EF(H_1)$ to $EF(H_3)$ in this subsequence blocks access for the following $EF(H_2)$ to the area where it would have to place vertices in order to respect the ordering of triangles (see Fig.~\ref{fig:nested-triangles-prop-3}(b)).
Therefore, after 3 full repetitions of the sequence in $SEF$, at least one extended formation has to use a different channel segment to place its turning vertex.
\end{proof}

\rephrase{Lemma}{\ref{lem:prop3-nodrawing}}{
In a situation as described in Proposition~\ref{prop:triangle}, \T and \P do not admit any geometric simultaneous embedding.
}

\begin{proof}
Consider two extended formations $EF(H_x), EF(H_1)$ that are consecutive in $SEF$.
First note that the connection between $EF(H_x)$ and $EF(H_1)$ cuts all channels $2, \ldots , x-1$ in either channel segment $cs_1$ or $cs_2$.
Since both of these extended formations are also connected to the bending area between channel segments $cs_3$ and $cs_4$, it is not possible for an extended formation $EF(s)$, with $s \in \{2,\ldots ,x-1\}$, to connect from vertices above the connection between $EF(H_x)$ and $EF(H_1)$ to vertices below it by following a path to the bending area.
Note, further, that if all the extended formations $EF(s)$, with $s \in \{2,\ldots ,x-1\}$, are in the channel
segment that is not cut by the connection between $EF(1)$ and $EF(x)$, then a connection is needed from $cs_1$ to $cs_2$ in channel $x$. However, by Lemma~\ref{lem:one_channel_segment}, after three defects in the subsequence of $\{2,\ldots ,x-1\}$ it is no longer possible for some extended formation $EF(s)$, with $s \in \{2,\ldots ,x-1\}$, to place its turning vertex in the same channel segment. Therefore, different channel segments have to be used by the extended formation $EF(s)$, with $s \in \{2,\ldots ,x-1\}$. However, since the path is continuous and since the connection between $EF(H_x)$ and $EF(H_1)$ is repeated after a certain number of steps, we can follow that the path creates a spiral. Also, we note that, in order to respect the order of the sequence, it will be impossible for the path to reverse the direction of the spiral. Hence, once a direction of the spiral has been chosen, either inward or outward, all the connections in the remaining part of the sequence have to follow the same.
This implies that, if a connection between $EF(s)$ and $EF(s+1)$ changes channel segment, that is, it is performed in a different channel segment than the one between $EF(s-1)$ and $EF(s)$, then all the connections of this type have to change. However, when a defect at channel $s+1$ is encountered, also the connection between $EF(s)$ and $EF(s+2)$ has to change channel segment, thereby making impossible for any future connection between $EF(s)$ to $EF(s+1)$ to change channel segment. Therefore, after a whole repetition of the sequence of $SEF$ containing defects at each channel, all the extended formations have to place their turning vertices in the same channel segment, which is not possible, by Lemma~\ref{lem:one_channel_segment}. This concludes the proof that no valid drawing can be achieved in this configuration.
\end{proof}

\rephrase{Lemma}{\ref{lem:intersects_one_three}}{
If a shape contains an intersection $I_{(1,3)}$ and does not contain any other intersection that is disjoint with $I_{(1,3)}$, then \T and \P do not admit any geometric simultaneous embedding.
}

\begin{proof}
First observe that only the intersections $I_{(2,4)}$ and $I_{(1,4)}$ are not disjoint with $I_{(1,3)}$ and could occur at the same time as $I_{(1,3)}$. By Lemma~\ref{lemma:nest_independent}, there exists at least a nesting greater than, or equal to, 6. Each of such nestings has to take place either at intersections $I_{(1,3)}$, $I_{(2,4)}$ or at $I_{(1,4)}$. Remind that, by Property~\ref{prop:CS_1_2}, $1$-vertices can only be placed in $cs_1$ or $cs_2$. Also, the sorting of head vertices to avoid a region-level nonplanar trees can only be done by placing vertices into $cs_3$ or $cs_4$. This implies that the stabilizers have to be placed in $cs_1$ or $cs_2$. Note that the stabilizers also work as $1$-vertices in the tails of other cells. This means that if there exist seven sets of tails that can be separated by straight lines, then there exist a region-level nonplanar tree, by Lemma~\ref{n-independent nestings}.
Observe that, by nesting them according to the sequence, the previous condition would be fulfilled.
This means that we have either a sorting or other nestings. We first show that there exist at most two $x$-nestings with $x \geq 6$. Every $x$-nesting has to take place at either $I_{(1,3)}$, $I_{(2,4)}$ or $I_{(1,4)}$. We assume, w.l.o.g., to have to deal with the greatest possible number of intersections.

Consider the case $I_{(2,4)}^h$ (see Fig~\ref{fig:1324}(a)). Observe that intersections $I_{(1,4)}$ and $I_{(1,3)}$ are either both high or both low and use channel segment $cs_1$. Also, every connection from $cs_1$ to $cs_4$ cuts either $cs_2$ or $cs_3$ and, if one of these connections cuts $cs_2$, then every nesting cutting $cs_1$ closer to $b(1,2)$ has to cut $cs_2$. Hence, we can consider all the connections to $cs_4$ as connections to $cs_2$ or $cs_3$. Also, since any connection cutting a channel segment is more restrictive than a connection inside the same channel segment, such two nestings can be considered as one. Finally, since such a nesting connects to $b_{(2,3)}$, it is not possible to have at the same time a nesting taking place at $I_{(2,4)}^h$. Hence, we conclude that only one nesting is possible in this case.

\begin{figure}[ht]
\begin{center}
\begin{tabular}{c c}
\mbox{\includegraphics[height=4.5cm]{./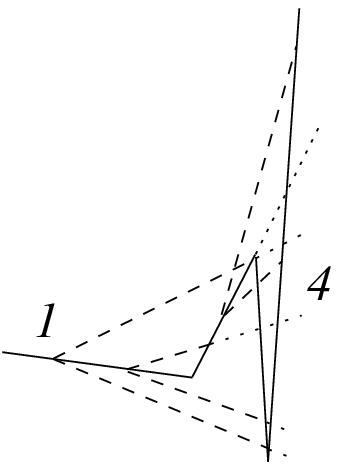}} \hspace{0.1cm} &
\mbox{\includegraphics[height=3cm]{./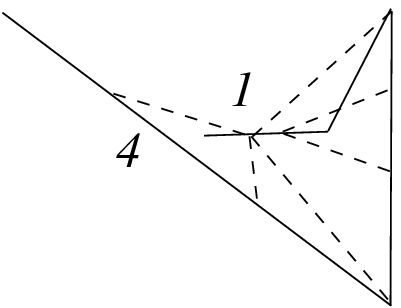}} \\
(a) & (b)\\
\end{tabular}
\caption{(a) Case $I_{(1,3)}$ $I_{(2,4)}^h$. (b) Case $I_{(1,3)}$ $I_{(2,4)}^l$.}
\label{fig:1324}
\end{center}
\end{figure}

Consider the case $I_{(2,4)}^l$ (see Fig~\ref{fig:1324}(b)). Observe that $1$-vertices can be placed at most in $cs_2$ and $2$-vertices can be placed at most in $cs_3$. This means that the extended formations in every nesting have to visit these vertices. Therefore, if there exists both a nesting at $I_{(1,3)}$ and at $I_{(1,4)}$, then the connections to the 1- and 2-vertices in the bending areas $b(2,3)$ and $b(3,4)$ are such that every EF nesting at $I_{(1,4)}$ makes a nesting with the extended formations nesting at $I_{(1,3)}$. Hence, also in this case only one nesting is possible.

So we consider the unique nesting of depth $x\leq 6$ and we show that any way of sorting the nesting formations in the channels will cause separated cells, hence proving the existence of a nonplanar region-level tree.
Consider four consecutive repetitions of the sequence of formations. It is clear that these formations are visiting areas of $cs_1$ and are separated by previously placed formations from other formations on the same channels. This will result in some cells to become separated in $cs_1$. Since, by Property~\ref{prop:four_sep_areas_PS}, the number of monotonically separated cells in $cs_1$ cannot be larger than $3$, for any set of four such separated formations there exists a pair of formations $F_1,F_2$ that change their order in $cs_1$. These connections have to be made on either side of the nesting. If between this pair of formations there is a formation of a different channel, then this formation has to choose the other side to reorder with a formation outside $F_1,F_2$. We further note that, if there are two such connections $F_1,F_4$ and $F_2,F_3$ on the same side that are connecting formations of one channel, nested in the order $F_1,F_2,F_3,F_4$, and another connection on the same side between $F'_1,F'_2$ such that $F'_1$ is nested between $F_1, F_2$ and $F'_2$ between $F_3,F_4$, then this creates a 1-nesting. In the following we show that a nesting of depth at least 6 is reached.

Assume the repetitions of formations in the extended formation to be placed in the order $a,b,c,d,e$.
If this order is not coherent with the order in which the channels appear in the sequence of formations inside the $EF$, then we have already some connections that are closing either side of the nesting for some formations. So we assume them to be in the order given by the sequence. Then, consider a repetition of formations with a defect at some channel $C_i$. We have that there exists a connection closing off at one side all the previously placed formations of $C_i$. However, there are sequences with defects also at channels $C_{i+1}$ and $C_{i-1}$, which can not be realized on the same side as the defects at $C_i$. We generalize this to the fact that all the defects at odd channels are to one side, while the defects at even channels are to the other side.
Since the path is continuous and has to reach from the last formation in a sequence again to the first one, the continuation of the path can only use either the odd or the even defects. This implies that, when considering three further repetitions of formations, the first and the third having a defect at a channel $C_i$ and the second having no defect at $C_i$, there will be a nesting of depth one between these three formations. Since, by Lemma~\ref{lemma:k-nesting}, there cannot be a nesting of depth greater than 5 at this place, we conclude that after 6 repetitions of such a triple of formations there will be at least two formations that are separated from each other. By repeating this argument we arrive after $7\cdot6\cdot2$ repetitions at either the existence of 7 formations that are separated on $cs_1$ and $cs_2$ or at the existence of a nesting of depth 6, both of which will not be drawable without the aid of another intersection that is able to support the second nesting of depth greater than 5.
\end{proof}

\rephrase{Lemma}{\ref{lem:intersects_three_one}}{
If there exists a sequence of extended formation in any shape containing an intersection $I_{(3,1)}$, then \T and \P do not admit any geometric simultaneous embedding.
}

\begin{proof}
Consider a sequence of extended formation in a shape containing an intersection $I_{(3,1)}$. We show that \T and \P do not admit any geometric simultaneous embedding. Observe that there exist several possibilities for channel segment $cs_4$ to be placed. Either there exists no intersection of an elongation of one channel segment with another channel segment or there exists at least one of the intersections $I_{(1,4)}$, $I_{(4,2)}$, $I_{(4,1)}$ or $I_{(2,4)}$.
If there are more than one of such intersections, then it is possible to have several nestings of depth $x$.
We note that, if there exists the intersection $I_{(3,1)}$, then at least one of $cs_1$, $cs_2$, and $cs_4$ are part of the convex hull (see Fig.~\ref{fig:non-convex}).

\begin{figure}[ht]
\begin{center}
\begin{tabular}{c c}
\mbox{\includegraphics[height=3cm]{./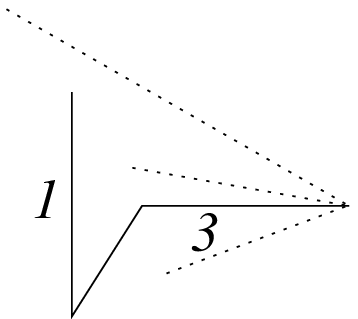}} \hspace{0.1cm} &
\mbox{\includegraphics[height=3cm]{./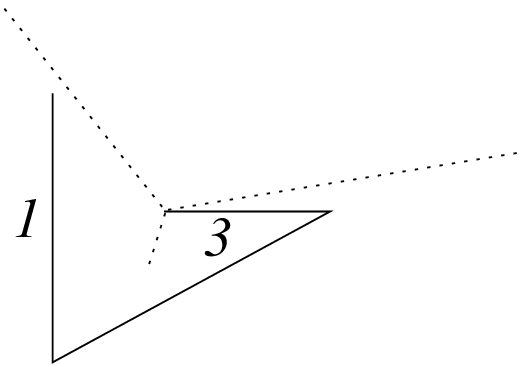}} \\
(a) & (b)\\
\end{tabular}
\caption{If channel segment four is not part of the convex hull then either $cs_1$ or $cs_2$ is part of the convex hull. (a) Case $I_{(1,3)}^l$. (b) Case $I_{(1,3)}^h$.}
\label{fig:non-convex}
\end{center}
\end{figure}

First, we show that there exists a nesting at $I_{(3,1)}$.

Consider case $I^h_{(3,1)}$. We have that $cs_2$ is on the convex hull restricted to the first three channel segments and $cs_4$ can force at most one of $cs_2$ or $cs_1$ out of the convex hull. Hence, one of them is part of the convex hull. We distinguish the two cases.

Suppose $cs_2$ to be part of the convex hull. Assume there exists a nesting at $I_{(1,4)}$. From $cs_4$ the only possible connection without a $1$-side connection is the one to $cs_2$, which, however, is on the convex hull. Hence, an argument analogous to the one used in Lemma~\ref{lem:intersects_one_three} proves that the nesting at $I_{(2,4)}$ has size smaller than $7*12$, which implies that the rest of the nesting has to take place at $I_{(3,1)}$.

Suppose $cs_1$ to be part of the convex hull. Assume that there exists a nesting at $I_{(2,4)}$. Every connection from $cs_4$ has to be either to $cs_1$ or to $cs_2$, by Property~\ref{prop:CS_1_2}. Since $cs_2$ is already part of the nesting, we have connections to $cs_1$. However, $cs_1$ is on the convex hull, hence allowing only $1$-side connections. Therefore, an argument analogous to the one used in Lemma~\ref{lem:intersects_one_three} proves that the nesting at $I_{(2,4)}$ has size smaller than $7*12$, which implies the rest of the nesting has to take place at $I_{(3,1)}$.

Consider case $I^l_{(3,1)}$. Since $cs_2$ is not part of the convex hull, either $cs_1$ or $cs_4$ are. If $cs_1$ is on the convex hull, then the same argument as before holds, while if $cs_4$ is on the convex hull, then no reordering is possible.

Clearly, if there is no intersection other than $I_{(3,1)}$, a nesting in the intersection $I_{(3,1)}$ has to be performed.

Hence, we conclude that a nesting with a depth of $7*12$ in every extended formation has to take place at $I_{(3,1)}$ (or at $I_{(4,1)}$, which can be considered as the same case).

By Lemma~\ref{lem:nesting-bending-area}, the nesting in the bending area is limited. Every extended formation $EF$ which has at least one vertex either in $cs_3$ or in $cs_4$ has a vertex in the bending area. Consider a sequence of extended formations $SEF$ which uses only channels in this particular shape. It's obvious that all of these $EF$ in $SEF$ have to do a nesting at $I_{({3,4},1)}$. Observe that there exist two consecutive edges which are forming a triangle with $cs_1$, $cs_2$, and $cs_3$ by simply placing vertices inside the channel segments. Since every EF creates such triangles, there exists a triangle which is not in the bending area and such that there exists no other triangle
between the bending area and this triangle. This triangle is separating the nesting area from the bending area in all but $s$ extended formations. However, since every EF has to use both of such areas, the inner area of $cs_3$ (or $cs_4$) has to connect to the outer area of $cs_3$ (or $cs_4$). If $cs_1$ is on the convex hull, then there exist only $1$-sided connections, which implies the statement, by Lemma~\ref{lem:prop3-nodrawing}.
On the other hand, if $cs_1$ is not on the convex hull, then there exists $I_{(1,4)}$ and $cs_4$ can be also used to perform connections from the inner to the outer area. However, since $cs_4$ is on the convex hull, such connections are only $1$-side. Hence, by Lemma~\ref{lem:prop3-nodrawing}, the statement follows.
\end{proof}

\rephrase{Lemma}{\ref{lemma:no-ordered-double-cuts}}{
Let $cs_i$ and $cs_{i+1}$ be two consecutive channel segments. If there exists an ordered set $S:=(1,2,\ldots ,5)^3$ of extremal double cuts cutting $cs_i$ and $cs_{i+1}$ such that the order of the intersections of the double cuts with $cs_i$ (with $cs_{i+1}$) is coherent with the order of $S$, then \T and \P do not admit any geometric simultaneous embedding.
}

\begin{proof}
Suppose, for a contradiction, that such a set $S$ exists. Assume first that $cs_i$ and $cs_{i+1}$ are such that the bendpoint of channel $5$ encloses the bendpoint of all the other channels. Hence, any edge creating a double cut at a channel $c$ has to cut all the channels $c'$ with $c'>c$, either in $cs_i$ or in $cs_{i+1}$. Refer to Fig.~\ref{fig:no-ordered-double-cuts}.

Consider the first repetition $(1,2,\ldots ,5)$. Let $e_1$ be an edge creating a double cut at channel $1$. Assume, without loss of generality, that $e_1$ cuts channel segment $cs_i$. Observe that, for channel $1$, the visibility constraints determined in channels $2,\dots,5$ in $cs_i$ and in $cs_{i+1}$ by the double cut created by $e_1$ do not depend on whether it is simple or non-simple. Indeed, by Property~\ref{prop:double-cut}, edge $e_1$ blocks visibility to $b(i,i+1)$ for the part of $cs_i$ where edges creating double cuts at channels $2,\dots,5$ following $e_1$ in $S$ have to place their end-vertices.

\begin{figure}[ht]
\begin{center}
\begin{tabular}{cc}
\mbox{\includegraphics[height=4cm]{./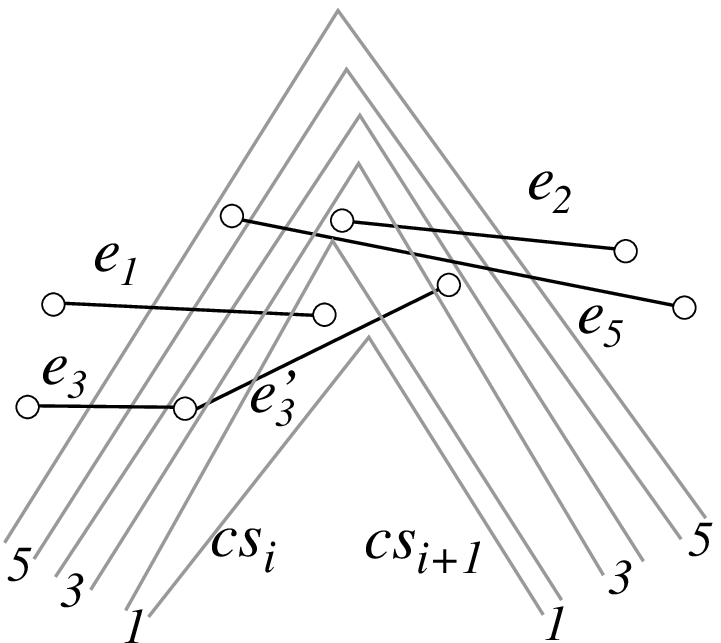}} \hspace{0.1cm} &
\mbox{\includegraphics[height=4cm]{./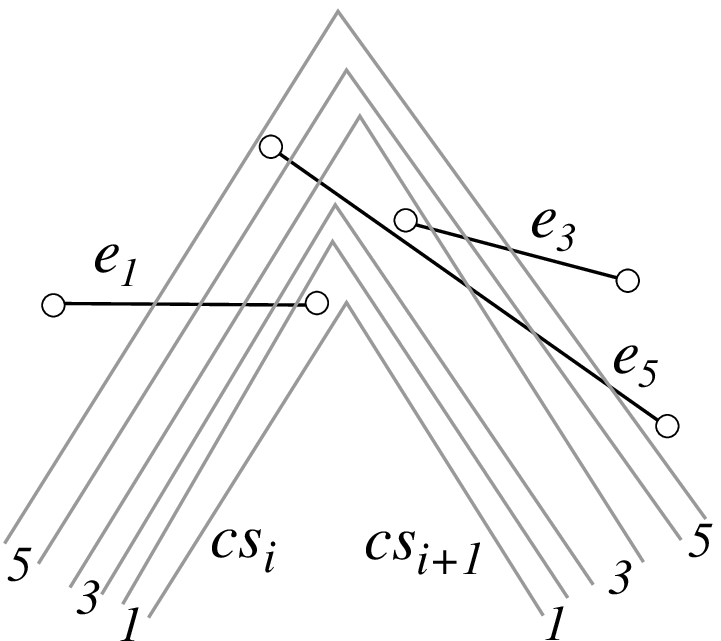}} \\
(a) & (b)\\
\end{tabular}
\caption{Proof of Lemma~\ref{lemma:no-ordered-double-cuts}. (a) $e_3$ cuts $cs_i$. (b) $e_3$ cuts $cs_{i+1}$.}\label{fig:no-ordered-double-cuts}
\end{center}
\end{figure}

Then, consider an edge $e_3$ creating a double cut at channel $3$ in the first repetition of $(1,2,\ldots ,5)$.

If $e_3$ cuts $cs_i$ (see Fig.~\ref{fig:no-ordered-double-cuts}(a)), then it has to create either a non-simple double cut or a simple one. However, in the latter case, an edge $e_3'$ between $cs_i$ and $cs_{i+1}$ in channel $3$, which creates a blocking cut in channel $2$, is needed. Hence, in both cases, channel $2$ is cut both in $cs_i$ and in $cs_{i+1}$, either by $e_3$ or by $e_3'$. It follows that an edge $e_2$ creating a double cut at channel $2$ in the second repetition of $(1,2,\ldots ,5)$ has to cut $cs_{i+1}$, hence blocking visibility to $b(i,i+1)$ for the part of $cs_{i+1}$ where edges creating double cuts at channels $3,\dots,5$ following it in $S$ have to place their end-vertices, by Property~\ref{prop:double-cut}. Further, consider an edge $e_5$ creating a double cut at channel $5$ in the second repetition of $(1,2,\ldots ,5)$. Since visibility to $b(i,i+1)$ is blocked by $e_1$ and $e_3$ in $cs_i$ and by $e_2$ in $cs_{i+1}$, $e_2$ has to create a non-simple double cut (or a simple one plus a blocking cut), hence cutting channel $4$ both in $cs_i$ and in $cs_{i+1}$. It follows that, by Property~\ref{prop:blocking-cut}, an edge $e_4$ creating a double cut at channel $4$ in the third repetition of $(1,2,\ldots ,5)$ can place its end-vertex neither in $cs_i$ nor in $cs_{i+1}$.

If $e_3$ cuts $cs_{i+1}$ (see Fig.~\ref{fig:no-ordered-double-cuts}(b)), then it has to create a simple double cut. Again, by Property~\ref{prop:double-cut}, edge $e_3$ blocks visibility to $b(i,i+1)$ for the part of $cs_{i+1}$ where edges creating double cuts following $e_3$ in $S$ have to place their end-vertices. Hence, an edge $e_5$ creating a double cut at channel $5$ in the first repetition of $(1,2,\ldots ,5)$ cannot create a simple double cut, since its visibility to $b(i,i+1)$ is blocked by $e_1$ in $cs_i$ and by $e_3$ in $cs_{i+1}$. This implies that $e_5$ creates a non-simple double cut (or a simple one plus a blocking cut) at channel $5$, cutting either $cs_i$ or $cs_{i+1}$, hence cutting channel $4$ both in $cs_i$ and in $cs_{i+1}$. It follows that, by Property~\ref{prop:blocking-cut}, an edge $e_4$ creating a double cut at channel $4$ in the second repetition of $(1,2,\ldots ,5)$ can place its end-vertex neither in $cs_i$ nor in $cs_{i+1}$.

The case in which $cs_i$ and $cs_{i+1}$ are such that the bendpoint of $1$ encloses the bendpoint of all the other channels can be proved analogously. Namely, the same argumentation holds with channel $5$ playing the role of channel $1$, channel $1$ playing the role of channel $5$, channel $3$ having the same role as before, channel $4$ playing the role of channel $2$, and channel $2$ playing the role of channel $4$. Observe that, in order to obtain the needed ordering in this setting, $3$ repetitions of $(1,2,\ldots ,5)$ are needed. In fact, we consider channel $5$ in the first repetition, channels $3$ and $4$ in the second one, and channels $1$ and $2$ in the third one.
\end{proof}

\rephrase{Lemma}{\ref{lem:double_cuts_13}}{
Each extended formation in shape $I_{(1,3)}^h$ $I_{(4,2)}^h$ creates double cuts in at least one bending area.
}

\begin{proof}
Refer to Fig.~\ref{fig:13high-41high}(a). Assume, without loss of generality, that the first bendpoint of channel $c_1$ encloses the first bendpoint of all the other channels. This implies that the second and the third bendpoints of channel $c_1$ are enclosed by the second and the third bendpoints of all the other channels, respectively.

\begin{figure}[ht]
\begin{center}
\begin{tabular}{cc}
\mbox{\includegraphics[height=4cm]{./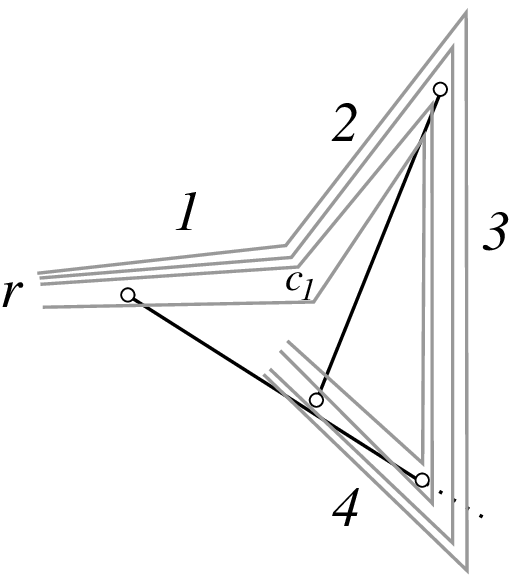}} \hspace{0.1cm} &
\mbox{\includegraphics[height=4cm]{./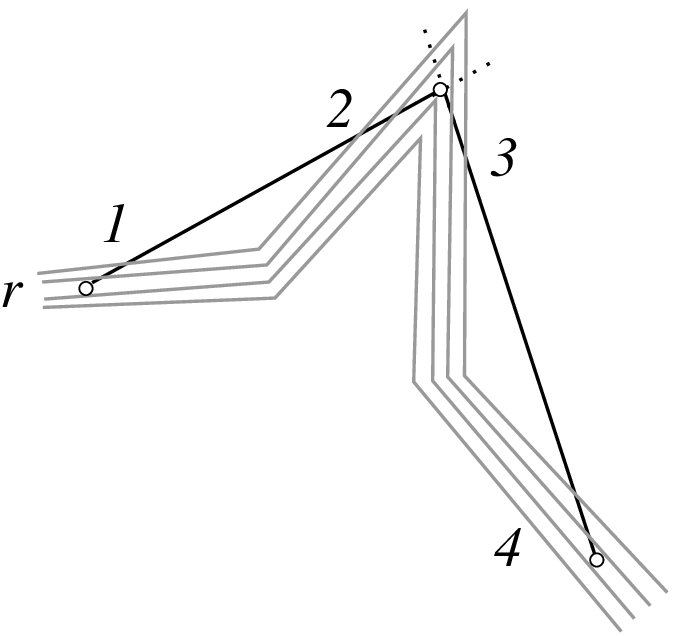}} \\
(a) & (b)\\
\end{tabular}
\caption{(a) Shape $I_{(1,3)}^h$ $I_{(4,\{1,2\})}^h$ has to connect at least one bend with double cuts. (b) Shape $I_{(1,3)}^h$ $I_{(4,2)}^l$ has to connect bend $b(2,3)$ with double cuts.}\label{fig:13high-41high}
\end{center}
\end{figure}

Suppose, for a contradiction, that there exists no double cut in $b(2,3)$ and in $b(3,4)$. Hence, any edge $e$ connecting to $b(2,3)$ (to $b(3,4)$) is such that $e$ and its elongation cut each channel once. Consider an edge connecting to $b(2,3)$ in a channel $c_i$. Such an edge creates a triangle together with channel segments $3$ and $4$ of channel $c_i$ which encloses the bending areas $b(3,4)$ of all the the channels $c_h$ with $h<i$ by cutting such channels twice. Hence, a connection to such a bending area in one of these channels has to be performed from outside the triangle. However, since in shape $I_{(1,3)}^h$ $I_{(4,2)}^h$ both the bending areas $b(2,3)$ and $b(3,4)$ are on the convex hull, this is only possible with a double cut, which contradicts the hypothesis.
\end{proof}

\rephrase{Lemma}{\ref{lem:ordered-set-of-double-cuts-exists}}{
Every sequence of extending formations in shape $I_{(1,3)}^h$ $I_{(4,2)}^{h,l}$ contains an ordered set $(1,2,\ldots ,5)^3$ of extremal double cuts with respect to bending area either $b(2,3)$ or $b(3,4)$.
}

\begin{proof}
Shape $I_{(1,3)}^h$ $I_{(4,2)}^{h}$ is similar to shape $I_{(1,3)}^h$ $I_{(4,1)}^{h}$, depicted in Fig.~\ref{fig:13high-41high}(a), with the only difference on the slope of channel segment $4$, which is such that its elongation crosses channel segment $2$ and not channel segment $1$. Shape $I_{(1,3)}^h$ $I_{(4,2)}^{l}$ is depicted in Fig.~\ref{fig:13high-41high}(b).

Assume, without loss of generality, that the first bendpoint of channel $c_1$ is enclosed by the first bendpoint of all the other channels. This implies that the second bendpoint of channel $c_1$ encloses the second bendpoint of all the other channels.

First observe that bending area $b(2,3)$ is on the convex hull, both in shape $I_{(1,3)}^h$ $I_{(4,2)}^h$ and in shape $I_{(1,3)}^h$ $I_{(4,2)}^l$.

Also, observe that all the extended formations have some vertices in $b(2,3)$ and in $b(3,4)$, and hence all the extended formations have to reach such vertices with path-edges.

In shape $I_{(1,3)}^h$ $I_{(4,2)}^h$, by Lemma~\ref{lem:double_cuts_13}, there exist double cuts either in $b(2,3)$ or in $b(3,4)$, while in shape $I_{(1,3)}^h$ $I_{(4,2)}^l$ there exist double cuts in $b(2,3)$, since the only possible connections to $b(2,3)$ are from channel segments $1$ and $4$, which are both creating double cuts (see Fig.~\ref{fig:13high-41high}(b)). Hence, we consider the extremal double cuts of each extended formation with respect to one of $b(2,3)$ or $b(3,4)$, say $b(2,3)$.

\begin{figure}[htb]
\begin{center}
\begin{tabular}{cc}
\mbox{\includegraphics[height=5.5cm]{./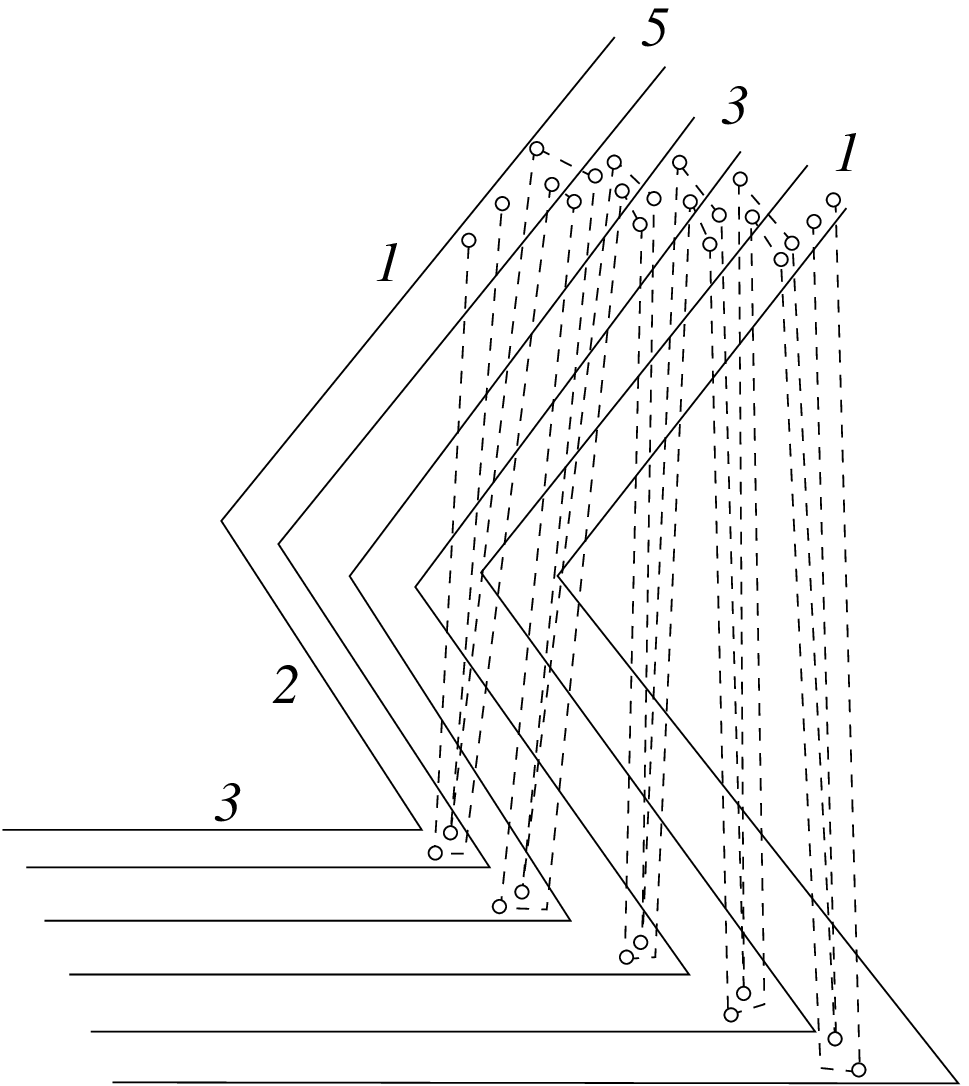}} \hspace{0.1cm} &
\mbox{\includegraphics[height=5.5cm]{./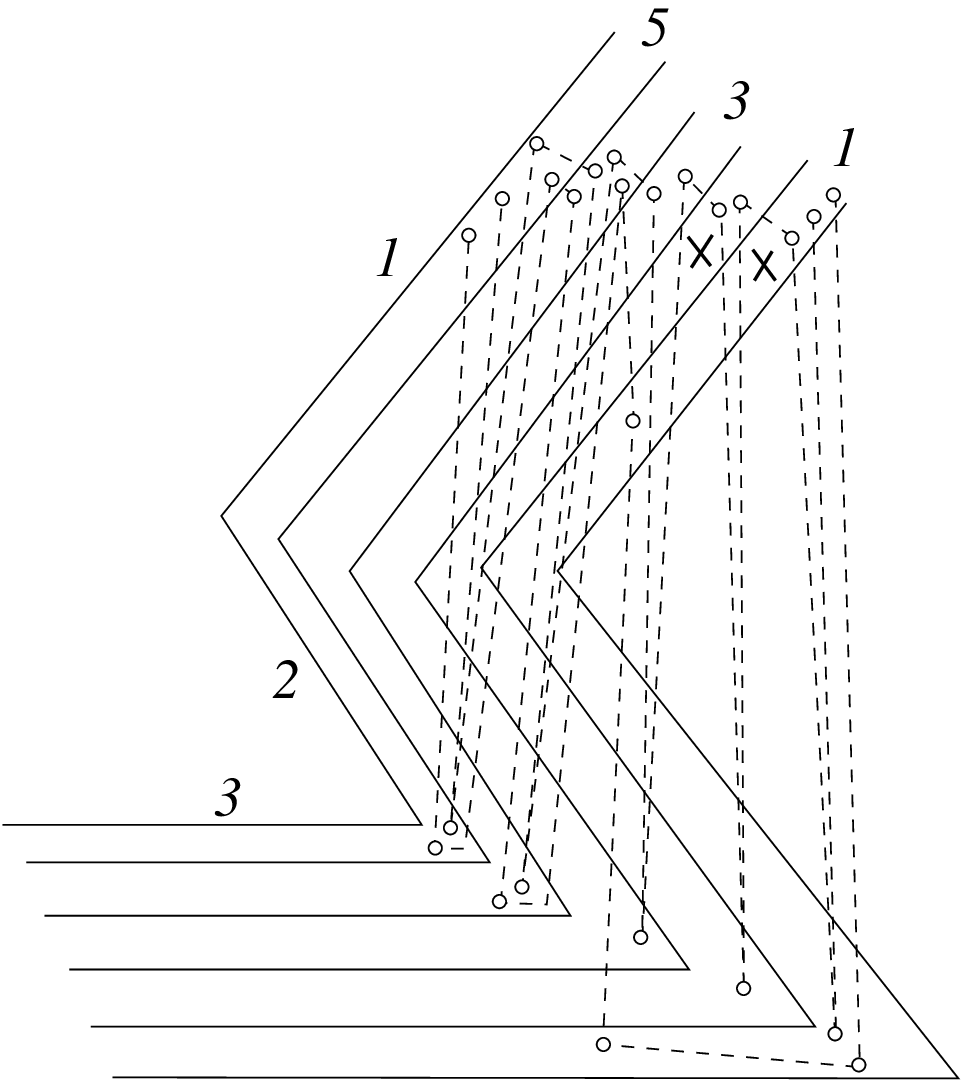}} \\
(a) & (b)\\
\mbox{\includegraphics[height=5.5cm]{./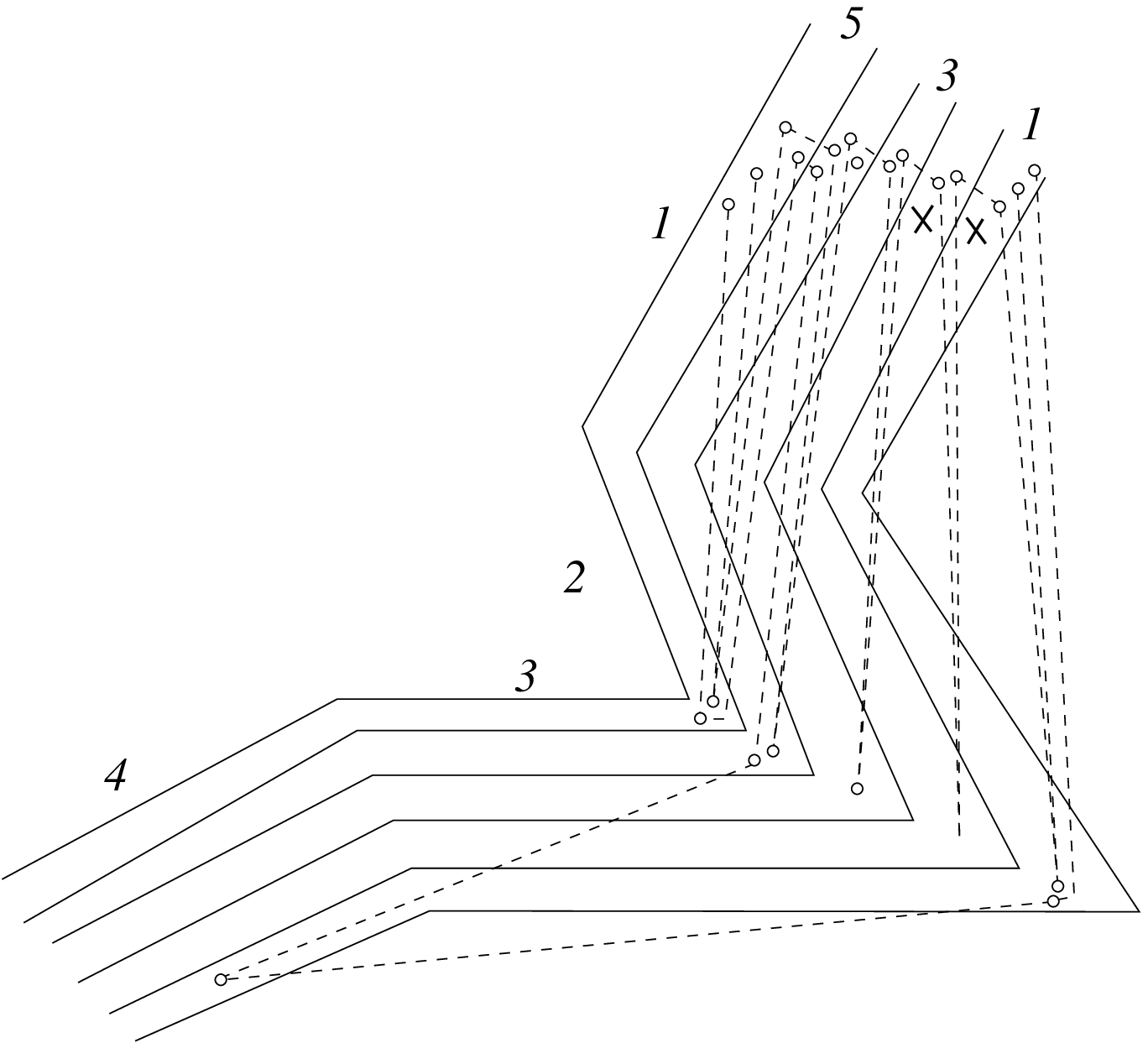}} \hspace{0.1cm} &
\mbox{\includegraphics[height=5.5cm]{./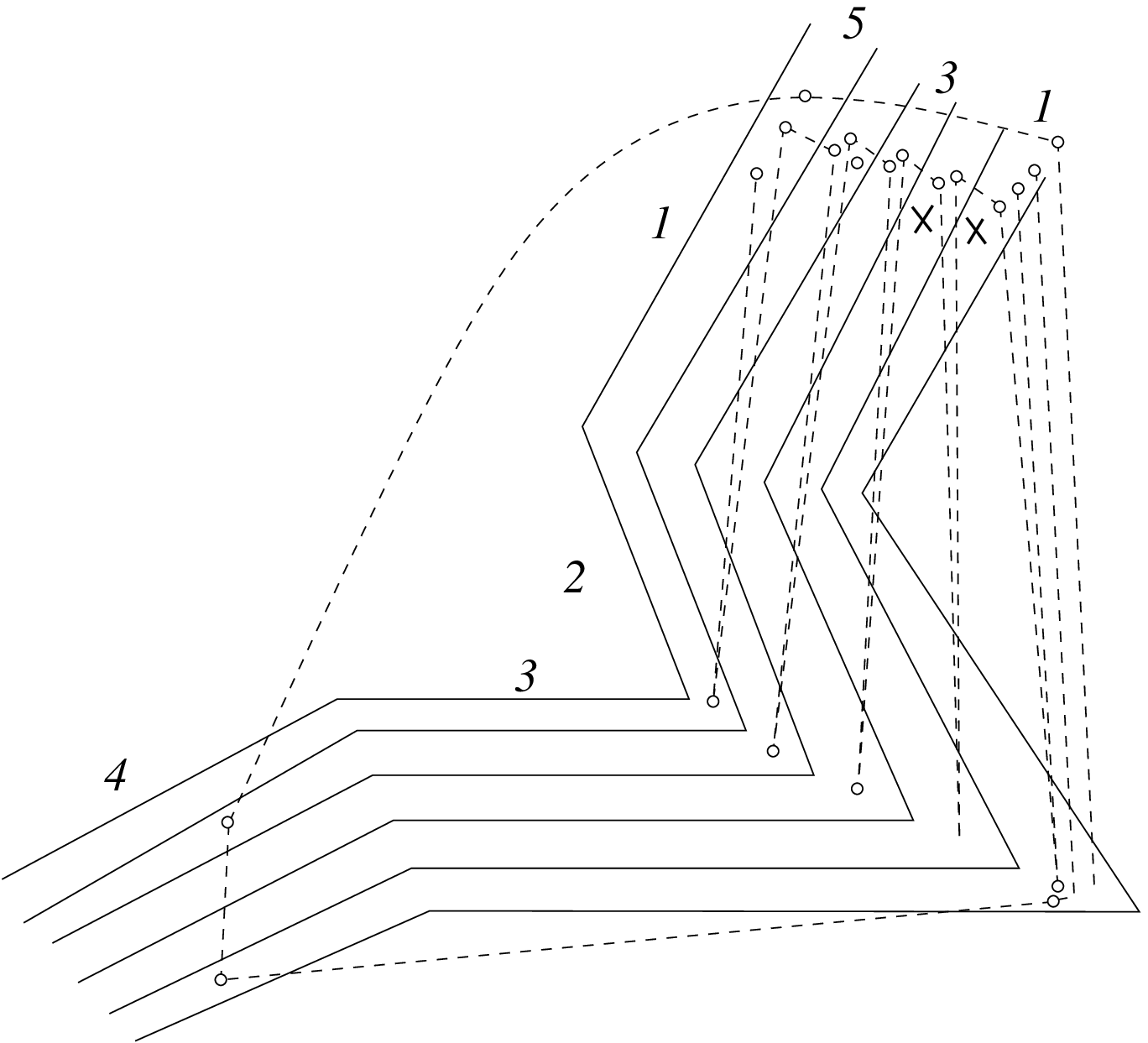}} \\
(c) & (d)\\
\end{tabular}
\caption{(a) The ordering of the extremal double cuts is $(1,1,2,2,\dots,5,5)$. (b) and (c) When a double defect is encountered, the connection between channels $i-1$ and $i+2$ cannot be performed in the same area as the connection between channels $i-1$ and $i$ and between channels $i$ and $i+1$ was performed in the previous repetition: (b) The connection is performed in the same area as the connection between channels $i+1$ and $i+2$ was performed. (c) The connection is performed in channel segment $4$. (d) If channel segment four is used to spiral, the considered double cut was not extremal.}\label{fig:bend23-no-ordered}
\end{center}
\end{figure}

\begin{figure}[htb]
\begin{center}
\begin{tabular}{cc}
\mbox{\includegraphics[height=5.5cm]{./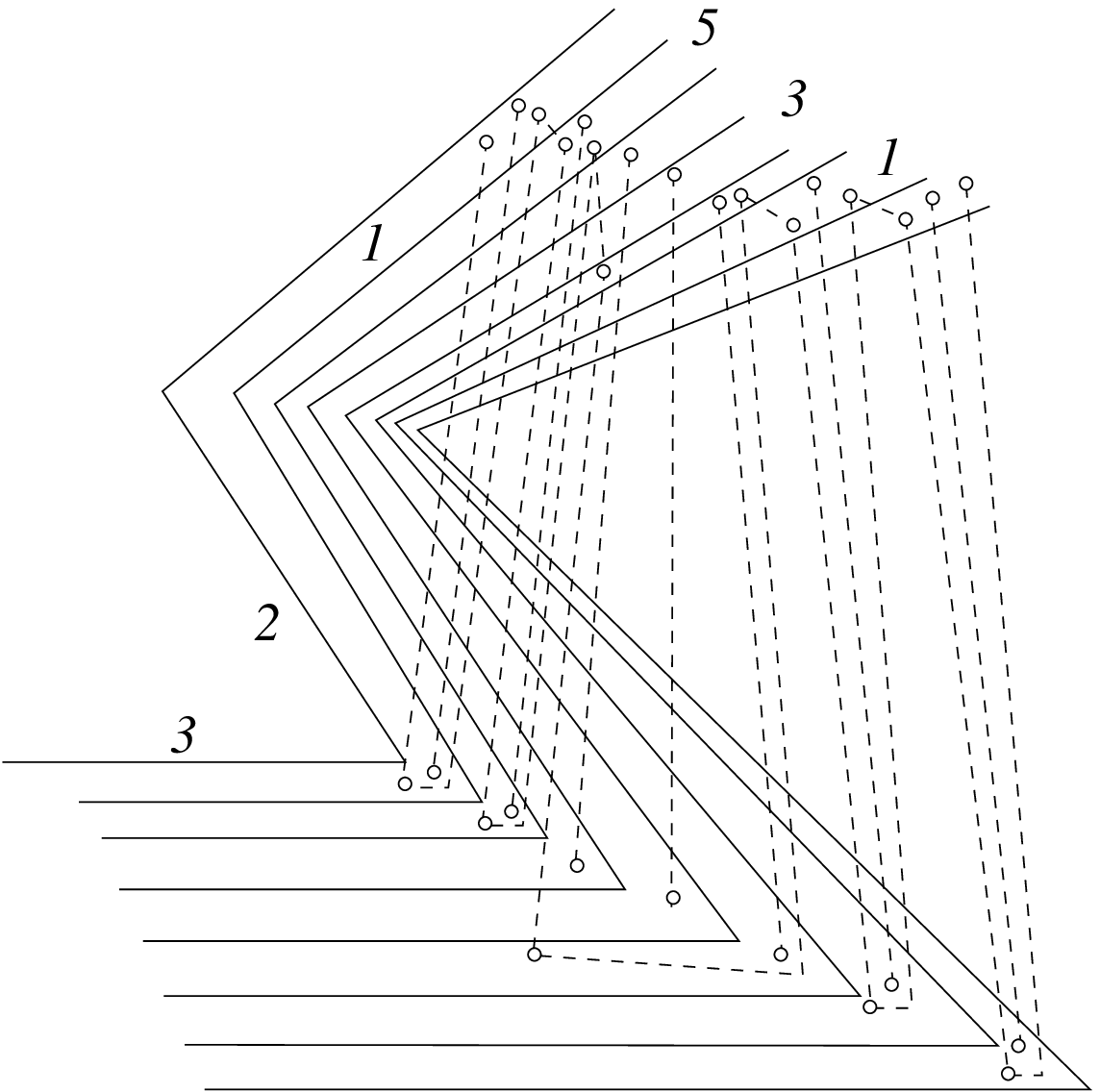}} \hspace{0.1cm} &
\mbox{\includegraphics[height=5.5cm]{./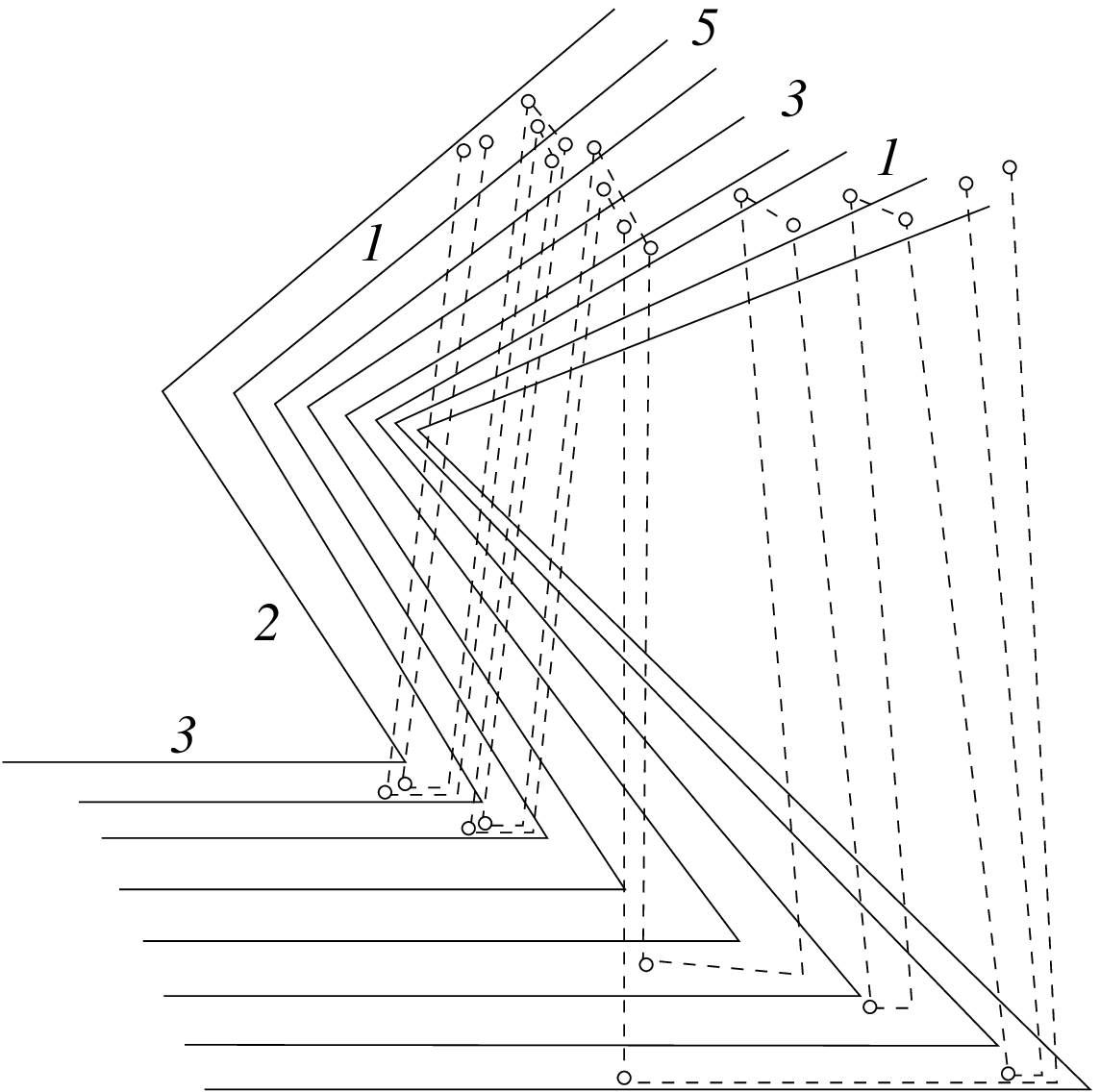}} \\
(a) & (b)\\
\mbox{\includegraphics[height=5.5cm]{./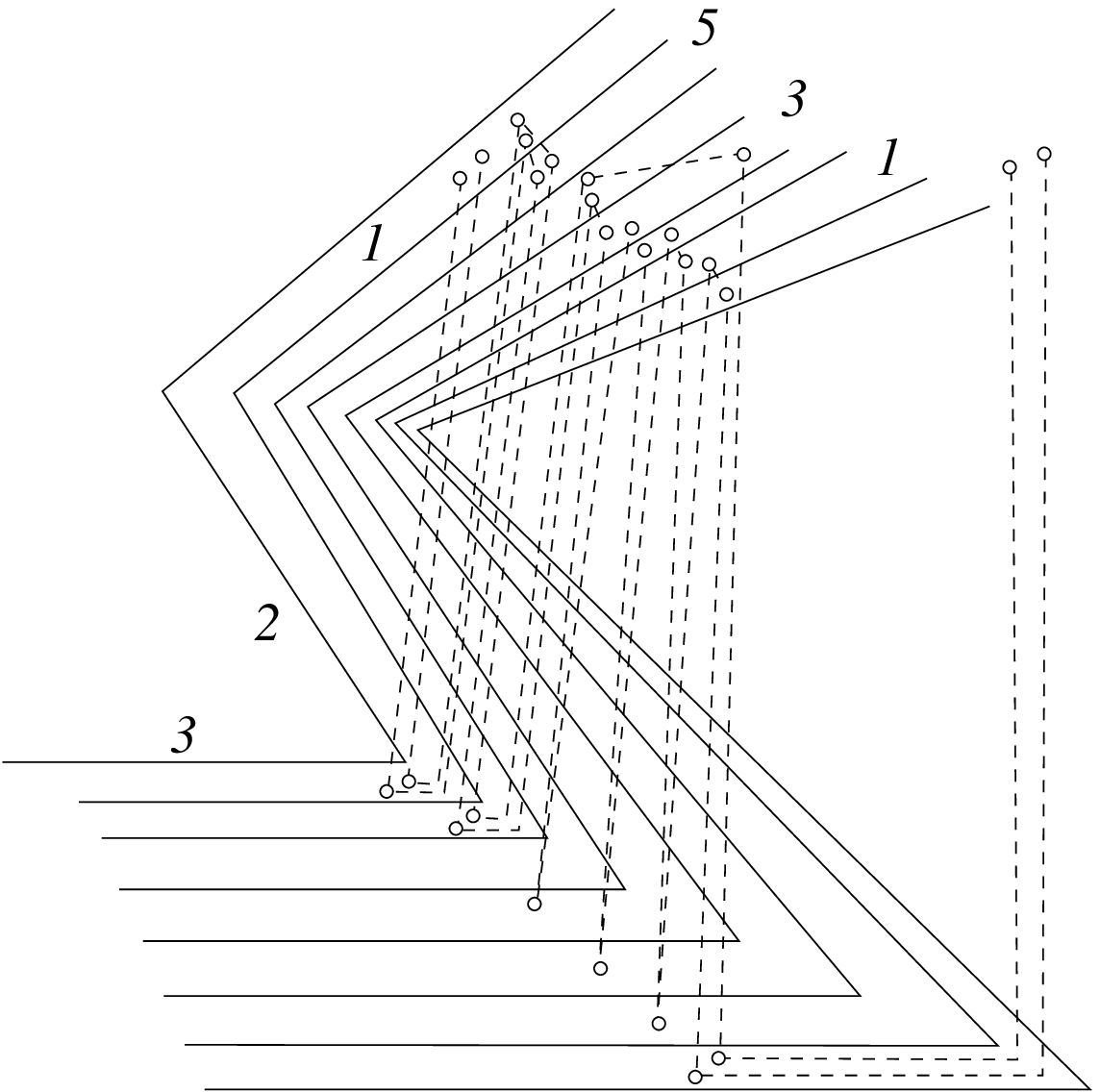}} \hspace{0.1cm} &
\mbox{\includegraphics[height=5.5cm]{./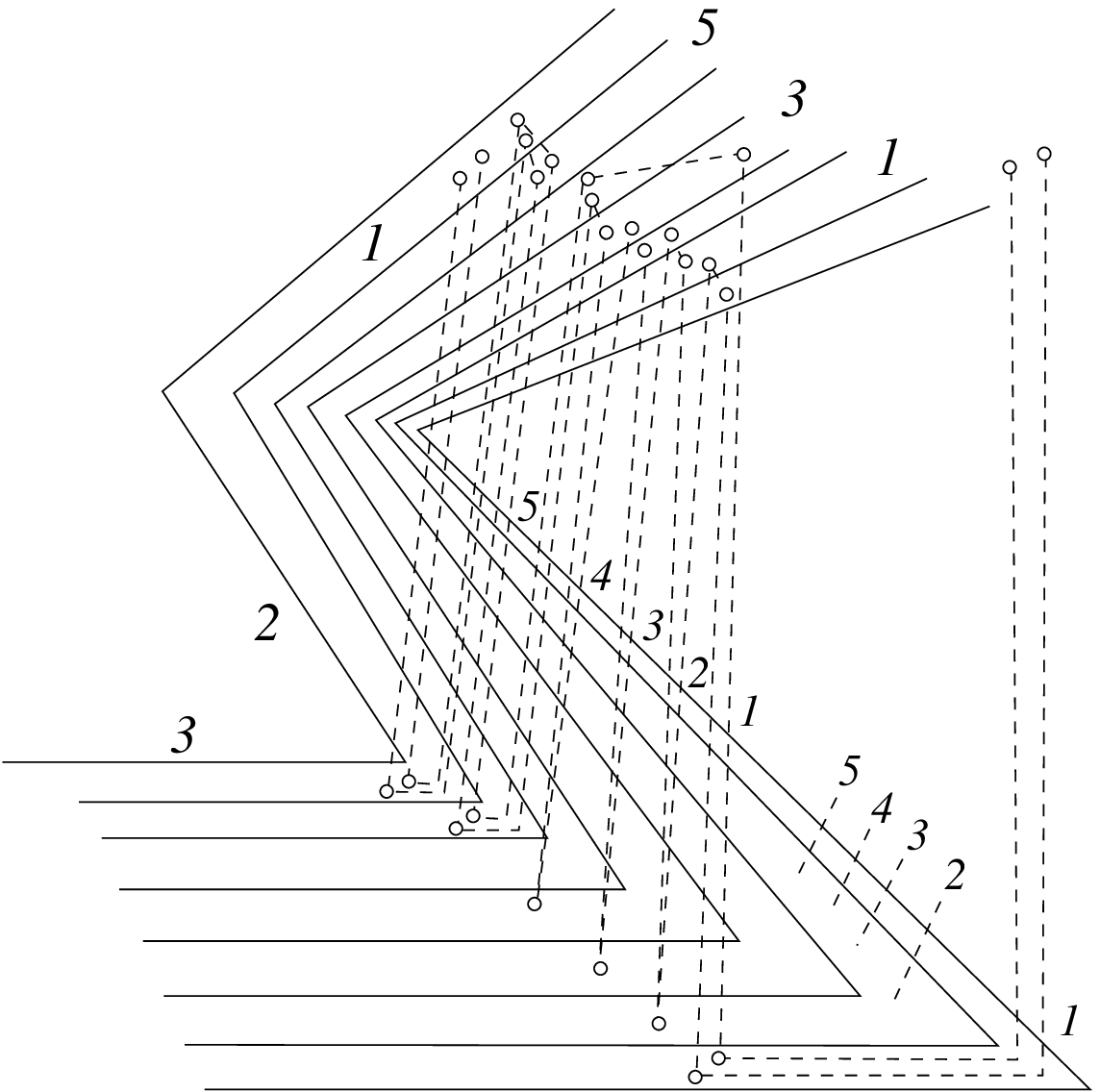}} \\
(c) & (d)\\
\end{tabular}
\caption{(a) A repetition with a double defect in channel $2$ is considered. (b) A repetition with a double defect in channel $0$ is considered. (c) A repetition without any double defect in channels $1,\dots,5$ is considered. (d) An ordered set $(1,\dots,5)$ is obtained.}\label{fig:1342-repetitions}
\end{center}
\end{figure}

Consider two sets of extended formations creating double cuts in $b(2,3)$ at channels $1, \dots, 5$, respectively. Observe that the extended formations in these two sets could be placed in such a way that the ordering of their extremal double cuts is $(1,1,2,2,\dots,5,5)$. The same holds for the following occurrences of extended formations creating double cuts in $b(2,3)$ at channels $1, \dots, 5$, respectively. Clearly, in this way an ordering $(1^n,2^n,\ldots ,5^n)$ could be achieved and hence an ordered set $(1,2,\ldots ,5)^3$ of double cuts would be never obtained (see Fig.~\ref{fig:bend23-no-ordered}(a)).

However, every repetition of extended formations inside a sequence of extended formations contains a double defect at some channel. We show, with an argument similar to the one used in Lemma~\ref{lemma:nest_independent}, that the presence of such double defects determines an ordering $(1,2,\ldots ,5)^3$ of extremal double cuts after a certain number of repetitions of extended formations inside a sequence of extended formations. Namely, consider a double defect at channel $i$ in a certain repetition. The connection between channels $i-1$ and $i+2$ cannot be performed in the same area as the connection between channels $i-1$ and $i$ and between channels $i$ and $i+1$ was performed in the previous repetition. Hence, such a connection has to be performed either in the same area as the connection between channels $i+1$ and $i+2$ was performed (see Fig.~\ref{fig:bend23-no-ordered}(b)), or in channel segment $4$ (this is only possible in shape $I_{(1,3)}^h$ $I_{(4,2)}^l$, see Fig.~\ref{fig:bend23-no-ordered}(c)).
Observe that, going to channel segment $4$ to make the connection, then to channel segment $1$, and finally back to $b(2,3)$, hence creating a spiral, implies that the considered double cut is not extremal (see Fig.~\ref{fig:bend23-no-ordered}(d)). Therefore, the only possibility to consider when channel segment $4$ is used is to make the connection between channels $i-1$ and $i+2$ there and then to come back to $b(2,3)$ with a double cut. Hence, independently on whether channel segment $4$ is used or not, the connection between channels $i-1$ and $i+2$ blocks visibility for the following repetitions to the areas where the connections between some channels were performed in the previous repetition. This implies that the ordering $(1^n,2^n,\ldots ,5^n)$ of extremal double cuts cannot be respected in the following repetitions. In fact, a partial order $(i,i+1,i+2)^2$ is obtained in a repetition of formations creating extremal double cuts at channels $1, \dots, 5$. Also, when two different double defects having a channel in common are considered, the effect of such defects is combined. Namely, consider a double defect at channel $3$ in a certain repetition. The connection between channels $2$ and $5$ blocks visibility to the areas where the connection between $2$ and $3$ and between $3$ and $4$ were performed at the previous repetitions (see Fig.~\ref{fig:1342-repetitions}(a)). Then, consider a double defect at channel $1$ in a following repetition. We have that the connection between channels $0$ and $3$ can not be performed where the connection between $2$ and $3$ was performed in the previous repetitions, since such an area is blocked by the presence of the connection between channels $2$ and $5$. Hence, a double cut at channel $3$ has to be placed after the double cut at channel $5$ created in the previous repetition (see Fig.~\ref{fig:1342-repetitions}(b)). Consider now a further repetition with a defect not involving any of channels $1,\dots,5$. We have that the area where the connection from $1$ to $2$ was performed in the previous repetitions is blocked by the connection between $0$ and $3$ and hence a double cut at channel $1$ has to be placed after the double cut at channel $3$ created in the previous repetition, which, in its turn, was created after the double cut at channel $5$ (see Fig.~\ref{fig:1342-repetitions}(c)).

\clearpage

Also, all the double cuts at channels $2,\dots,5$ have to be placed after the double cut at $1$, and hence a shift of the whole sequence $1,\dots,5$ after the double cut at $5$ is performed and an ordered set $(1,2,\ldots ,5)^2$ is obtained (see Fig.~\ref{fig:1342-repetitions}(d)). Observe that at most two sets of repetitions of extended formation inside a sequence of extended formations such that each set contains a double defect at each channel are needed to obtain such a shift. By repeating such an argument we obtain another shifting of the whole sequence $(1,\dots,5)$, which results in the desired ordered set $(1,2,\ldots ,5)^3$. We have that a set of repetitions of extended formation containing a double defect at each channel is needed to obtain the first sequence $(1,2,\ldots ,5)^2$, then two of such sets are needed to get to $(1,2,\ldots ,5)^2$, and two more are needed to get to $(1,2,\ldots ,5)^3$, which proves the statement.

Observe that, if it were possible to partition the defects into two sets such that there exists no pair of defects involving a common channel inside the same set, then such sets could be independently drawn inside two different areas and the effects of the defects could not be combined to obtain $(1,2,\ldots ,5)^3$. However, since each double defect involves two consecutive channels, at least three sets are needed to obtain a partition with such a property. In that case, however, an ordered set $(1,2,\ldots ,5)^3$ could be obtained by simply considering a repetition of $(1,2,\ldots ,5)$ in each of the sets.
\end{proof}

\rephrase{Lemma}{\ref{lem:cs_two_convex_hull}}{
If channel segment $cs_2$ is part of the convex hull, then \T and \P do not admit any geometric simultaneous embedding.
}

\begin{proof}
First observe that, with an argument analogous to the one used in Lemma~\ref{lem:intersects_one_three}, it is possible to show that there exists a nesting at intersection $I_(4,{1,2})$. Then, by Property~\ref{prop:CS_1_2}, every vertex that is placed in $cs_4$ is connected to two vertices that are placed either in $cs_1$ or in $cs_2$. Hence, the continuous path connecting to a vertex placed in $cs_4$ creates a triangle, having one corner in $cs_4$ and two corners either in $cs_1$ or in its elongation, which cuts $cs_4$ into two parts, the inner and the outer area.

By Lemma~\ref{lem:nesting-bending-area}, not all of these triangles can be placed in the bending area $b(3,4)$. Hence, every extended formation, starting from the second of the sequence, have to place their vertices in both the inner and the outer area of the triangle created by the first one.

Observe that, in order to connect the inner to the outer area, the extended formations can only use $1$-side connections. Namely, $cs_1$ creates a $1$-side connection. Channel segment $cs_2$ is on the convex hull. Since, by Property~\ref{prop:CS_1_2}, every vertex that is placed in $cs_3$ is connected to two vertices that are placed either in $cs_1$ or in $cs_2$, also $cs_3$ creates a $1$-side connection.

From this we conclude that in this configuration the preconditions of Proposition~\ref{prop:triangle} are satisfied, and hence the statement follows.
\end{proof}

\section{An Algorithm for the Geometric Simultaneous Embedding of a Tree of Depth $2$ and a Path}\label{se:depth2}

In this section we describe an algorithm for constructing a geometric simultaneous embedding of any tree \T of depth $2$ and any path \P. Refer to Fig.~\ref{fig:pseudo}.

Start by drawing the root $r$ of $\mathcal T$ on the origin in a coordinate system. Choose a ray $R_1$ emanating from the origin and entering the first quadrant, and a ray $R_2$ emanating from the origin and entering the fourth quadrant.
Consider the wedge $W$ delimited by $R_1$ and $R_2$ and containing the positive $x$-axis. Split $W$ into $t$ wedges $W_1, \ldots ,W_t$, in this clockwise order around the origin, where $t$ is the number of vertices adjacent to $r$ in $\mathcal T$, by emanating $t-2$ equispaced rays from the origin.

Then, consider the two subpaths $\mathcal{P}_1$ and $\mathcal{P}_2$ of $\mathcal{P}$ starting at $r$. Assign an orientation to $\mathcal{P}_1$ and $\mathcal{P}_2$ such that the two edges $(r,u) \in \mathcal{P}_1$ and $(r,v) \in \mathcal{P}_2$ incident to $r$ in $\mathcal{P}$ are exiting $r$.

Finally, consider the $t$ subtrees $\mathcal{T}_1, \ldots, \mathcal{T}_t$ of $\mathcal{T}$ rooted at a node adjacent to $r$, such that $u \in \mathcal{T}_1$ and $v \in \mathcal{T}_t$.

\begin{figure}[h]
\begin{center}
\includegraphics[height=7.5cm]{./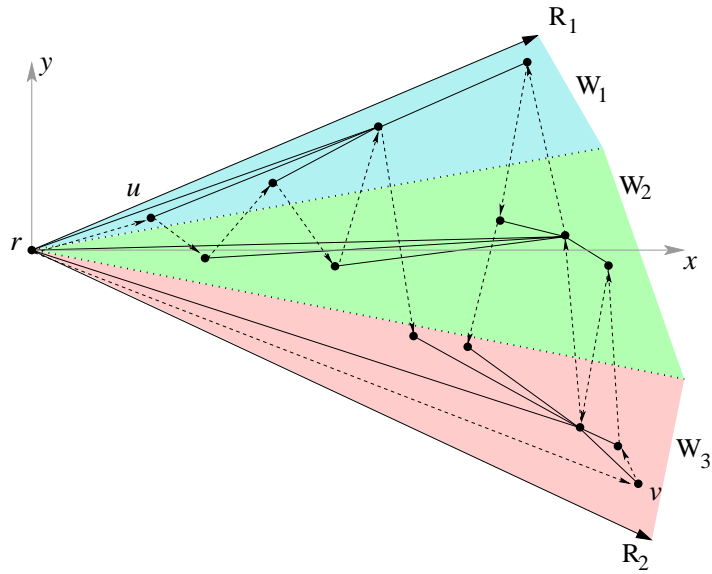}
\caption{A tree with depth two and a path always admit a geometric simultaneous embedding.}
\label{fig:pseudo}
\end{center}
\end{figure}

The vertices of each subtree $\mathcal{T}_i$ are drawn inside wedge $W_i$, in such a way that:
\begin{enumerate}
\item vertex $u$ is the vertex with the lowest $x$-coordinate in the drawing, except for $r$;
\item vertices belonging to $\mathcal{P}_1$ are placed in increasing order of $x$-coordinate according to the orientation of $p_1$;
\item vertex $v$ is the vertex with the highest $x$-coordinate in the drawing;
\item vertices belonging to $\mathcal{P}_2 \setminus r$ are placed in decreasing order of $x$-coordinate according to the orientation of $p_2$, in such a way that the leftmost vertex of $\mathcal{P}_2 \setminus r$ is to the right of the rightmost vertex of $\mathcal{P}_1$; and
\item no vertex is placed below segment $\overline{rv}$.
\end{enumerate}

Since $\mathcal{T}$ has depth $2$, each subtree $\mathcal{T}_i$, with $i=1\ldots,t$, is a star. Hence, it can be drawn inside its own wedge $W_i$ without creating any intersection among tree-edges. Observe that the same holds even for subtree $\mathcal{T}_t$, where the wedge to consider is the part of $W_t$ above segment $\overline{rv}$.

Since $\mathcal{P}_1$ and $\mathcal{P}_2 \setminus \{r\}$ are drawn in monotonic order of $x$-coordinate and are separated from each other, and edge $(r,v)$ connecting such two paths is on the convex hull of the point-set, no intersection among path-edges is created.

From the discussion above, we have the following theorem.

\begin{theorem}\label{th:depth2}
A tree of depth $2$ and a path always admit a geometric simultaneous embedding.
\end{theorem}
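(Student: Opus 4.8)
The plan is to give a direct constructive proof, exploiting the very restricted structure of a depth-$2$ tree. Since every vertex of $\mathcal{T}$ is at distance at most $2$ from its root $r$, deleting $r$ leaves a forest each of whose components is a star: a child $c_i$ of $r$ together with its leaf-children. Thus $\mathcal{T}$ decomposes into $t$ stars $\mathcal{T}_1,\dots,\mathcal{T}_t$ all sharing the single apex $r$. On the path side, $r$ has degree at most $2$ in $\mathcal{P}$, so removing $r$ splits $\mathcal{P}$ into at most two subpaths $\mathcal{P}_1,\mathcal{P}_2$, reconnected through $r$ by the two edges incident to it. The whole argument rests on drawing these two combinatorial decompositions on the same point set so that neither produces a crossing.

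First I would fix $r$ at the origin and reserve one angular wedge $W_i$ per star $\mathcal{T}_i$, the wedges being pairwise interior-disjoint and opening to the right. This immediately settles the tree: a star drawn with straight-line edges is planar on any point set, since all its edges emanate from the common apex and, for points in general position, pairwise meet only at that apex; and edges of different stars cannot cross because they live in disjoint wedges, with each spoke $(r,c_i)$ staying inside $W_i$. Hence tree-planarity reduces to the trivial facts that intra-wedge stars are planar and that the spokes never leave their wedges.

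The delicate part is to place the points so that $\mathcal{P}$ is simultaneously planar. Here I would orient $\mathcal{P}_1,\mathcal{P}_2$ away from $r$ and place their vertices $x$-monotonically: $\mathcal{P}_1$ in increasing order of $x$-coordinate starting from the vertex $u$ in the leftmost position (just to the right of $r$), and $\mathcal{P}_2$ in decreasing order from the vertex $v$ placed at the far right, arranging moreover that every vertex of $\mathcal{P}_2\setminus\{r\}$ lies to the right of every vertex of $\mathcal{P}_1$. Two $x$-monotone chains occupying disjoint $x$-ranges can cross neither each other nor themselves, so the only edges left to check are the two spokes $(r,u)$ and $(r,v)$ that rejoin the halves through the origin. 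Since $r$ is the leftmost point and $u$ the next leftmost, edge $(r,u)$ is harmless, and by forbidding any vertex below the segment $\overline{rv}$ we keep $(r,v)$ on the lower convex hull, so it crosses nothing.

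The main obstacle I expect is reconciling the two layouts: the global $x$-ordering forced by path-planarity must be realizable while each subtree still fits inside its assigned wedge. This is where the careful corner bookkeeping enters — designating which child-subtrees contain $u$ and $v$ (namely $\mathcal{T}_1$ and $\mathcal{T}_t$), drawing $\mathcal{T}_t$ only in the part of $W_t$ lying above $\overline{rv}$ so that the hull edge $(r,v)$ stays unobstructed, and verifying that the monotone placement of each $\mathcal{P}_j$ can be distributed across the wedges in left-to-right order. Once these compatibility conditions are checked, both drawings are planar on the common point set, which is precisely a geometric simultaneous embedding, and the theorem follows.
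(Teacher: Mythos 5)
Your proposal is correct and is essentially the paper's own proof: the same decomposition of the depth-$2$ tree into stars drawn in disjoint wedges around the root, the same $x$-monotone placement of the two subpaths $\mathcal{P}_1$ and $\mathcal{P}_2$ in disjoint $x$-ranges, and the same handling of the hull edge $(r,v)$ by keeping all vertices above $\overline{rv}$. The compatibility check you flag (realizing any prescribed $x$-ordering while keeping each star in its wedge) is implicit in the paper as well and holds because every wedge meets every vertical line $x=c$ with $c>0$.
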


\section{Conclusions}\label{se:conclusions}

In this paper we have shown that there exist a tree \T and a path \P on the same set of vertices that do not admit any geometric simultaneous embedding, which means that there exists no set of points in the plane allowing a planar embedding of both \T and \P. We obtained this result by extending the concept of level nonplanar trees~\cite{fk-mlnpt-07} to the one of region-level nonplanar trees. Namely, we showed that there exist trees that do not admit any planar embedding if the vertices are forced to lie inside particularly defined regions. Then, we constructed \T and \P so that the path creates these particular regions and at least one of the many region-level nonplanar trees composing $\T$ has its vertices forced to lie inside them in the desired order. Observe that our result also implies that there exist two edge-disjoint trees that do not admit any geometric simultaneous embedding, which answers an open question posed in~\cite{gkv-ttsids-09}, where the case of two non-edge-disjoint trees was solved.

It is important to note that, even if our counterexample consists of a huge number of vertices, it can also be considered as ``simple'', in the sense that the depth of the tree is just $4$. In this direction, we proved that, if the tree has depth $2$, then it admits a geometric simultaneous embedding with any path. This gives raise to an intriguing open question about whether a tree of depth $3$ and a path always admit a geometric simultaneous embedding or not.

\bibliography{SimTreePathArXiv}
\bibliographystyle{plain}

\end{document}